\documentclass{article}
\usepackage{graphicx} 

\usepackage[utf8]{inputenc} 
\usepackage[T1]{fontenc} 
\usepackage{url}            
\usepackage{booktabs}       
\usepackage{amsfonts}       
\usepackage{nicefrac}       
\usepackage{microtype}      
\usepackage[dvipsnames]{xcolor}         
\usepackage{geometry}
\usepackage{amsmath, amsthm, amstext, amssymb, mathtools}
\usepackage{hyperref}
\usepackage[nameinlink,capitalize,noabbrev]{cleveref}
\usepackage{dsfont}
\usepackage{subcaption}
\usepackage{graphicx}
\usepackage[shortlabels]{enumitem}
\usepackage{algorithm}
\usepackage{algpseudocode}
\usepackage[most]{tcolorbox}
\usepackage{stmaryrd}\SetSymbolFont{stmry}{bold}{U}{stmry}{m}{n}
\usepackage{tikz,rotating}
\usetikzlibrary{arrows.meta, positioning, calc, fit}
\usepackage{caption}
\usepackage{listings}
\usepackage{thmtools}
\usepackage{thm-restate}
\usepackage[normalem]{ulem}

\DeclareMathOperator*{\argmin}{arg\,min}

\newtheorem{theorem}{Theorem}[section]

\newtheorem{corollary}[theorem]{Corollary}

\newtheorem{lemma}[theorem]{Lemma}
\newtheorem{claim}[theorem]{Claim}
\newtheorem*{claim*}{Claim}

\newtheorem{observation}[theorem]{Observation}

\newtheorem{definition}{Definition}[section]

\allowdisplaybreaks

\crefname{claim}{claim}{claims}
\Crefname{claim}{Claim}{Claims}

\newcommand{\QBF}{\ensuremath{\mathsf{QBF}}}
\newcommand{\TQBF}{\ensuremath{\mathsf{TQBF}}}

\newcommand{\Vars}{[n]}
\newcommand{\Clauses}{[m]}

\usepackage{xspace}
\newcommand{\fsoc}{for the sake of contradiction\xspace}

\newcommand{\CE}{\mathsf{CE}}

\newcommand{\first}{\mathrm{FirstX}}
\newcommand{\firstU}{\mathrm{FirstUnsat}}

\newcommand{\winner}{\ensuremath{C}}      
\newcommand{\loser}{\ensuremath{D}}       
\newcommand{\clause}{\ensuremath{L}}      
\renewcommand{\t}{\ensuremath{T}}         
\newcommand{\f}{\ensuremath{F}}           
\newcommand{\guard}{\ensuremath{X}}       
\newcommand{\lit}{\ell}
\newcommand{\group}{\ensuremath{\mathsf g}}

\newcommand{\is}{ {i^*}}
\newcommand{\ks}{{k^*}}

\newcommand{\weight}{w}
\newcommand{\SV}{\ensuremath{\mathrm{SV}}}
\newcommand{\SSV}{\ensuremath{\mathrm{SSV}}}

\newlength{\casewdii}
\newlength{\casewdiii}
\newlength{\casewdiv}
\newlength{\casesw}
\newlength{\subcasesw}
\newcommand{\pspace}{\ensuremath{\mathsf{PSPACE}}}
\newcommand{\np}{\ensuremath{\mathsf{NP}}}

\newcommand{\tqbf}{\TQBF}

\newcommand{\true}{\textnormal{\textsc{true}}}
\newcommand{\false}{\textnormal{\textsc{false}}}

\lstdefinestyle{pythoncode}{
  language=Python,
  basicstyle=\ttfamily\footnotesize,
  columns=fullflexible,
  keepspaces=true,
  breaklines=true,
  breakatwhitespace=false,
  showstringspaces=false,
  frame=single,
  framerule=0.3pt,
  rulecolor=\color{black!35},
  xleftmargin=0.5em,
  xrightmargin=0.5em,
  aboveskip=0.8em,
  belowskip=0.8em
}

\newtcolorbox{breakablealgorithm}[2][]{%
  breakable,
  enhanced,
  colback=white,
  colframe=lightgray,
  boxrule=0.6pt,
  sharp corners,
  top=6pt,
  bottom=6pt,
  left=6pt,
  right=6pt,
  title={Algorithm~#2},
  fonttitle=\bfseries,
  #1
}


\newtcbtheorem[number within=section]{marginrules}{Construction}{
    enhanced,
    unbreakable,
    colback=white,
    colframe=lightgray,
    boxrule=0.5pt,
    sharp corners,
    fonttitle=\bfseries,
    top=3pt,
    bottom=3pt,
    left=6pt,
    right=6pt,
}{} 

\definecolor{prefixOrange}{HTML}{ffd89e} 
\definecolor{fullBlue}{HTML}{9ac8fc}     
\definecolor{assignGreen}{HTML}{ccffd7}  

\definecolor{borderOrange}{HTML}{FFB74D}
\definecolor{borderBlue}{HTML}{4FC3F7}
\definecolor{borderGreen}{HTML}{81C784}

\pgfdeclarelayer{background}
\pgfdeclarelayer{layer1}
\pgfdeclarelayer{layer2}
\pgfsetlayers{background,layer1,layer2,main}

\definecolor{winnerblue}{RGB}{0, 102, 204}
\definecolor{loserred}{RGB}{204, 0, 0}
\definecolor{literalgreen}{RGB}{0, 153, 51}
\definecolor{clauseorange}{RGB}{255, 140, 0}
\definecolor{guardpurple}{RGB}{153, 51, 204}
\definecolor{neutralgray}{RGB}{70, 70, 70}

\definecolor{groupteal}{RGB}{0, 170, 170}   
\definecolor{groupwine}{RGB}{170, 0, 110}   
\definecolor{groupgold}{RGB}{200, 160, 0}   

\colorlet{gOne}{winnerblue}          
\colorlet{gTwo}{literalgreen}        
\colorlet{gThree}{clauseorange}      
\colorlet{gFour}{clauseorange!70!black}
\colorlet{gFive}{guardpurple}        
\colorlet{gSix}{groupgold}           
\colorlet{gSeven}{loserred}          
\colorlet{gEight}{groupteal}         
\colorlet{gNine}{literalgreen!70!black}
\colorlet{gTen}{neutralgray}
\colorlet{gEleven}{neutralgray!75!black}
\colorlet{gTwelve}{groupwine}        

\newcommand{\groupone}{\textcolor{gOne}{\group1}}
\newcommand{\grouptwo}{\textcolor{gTwo}{\group2}}
\newcommand{\groupthree}{\textcolor{gThree}{\group3}}
\newcommand{\groupfour}{\textcolor{gFour}{\group4}}
\newcommand{\groupfive}{\textcolor{gFive}{\group5}}
\newcommand{\groupsix}{\textcolor{gSix}{\group6}}
\newcommand{\groupseven}{\textcolor{gSeven}{\group7}}
\newcommand{\groupeight}{\textcolor{gEight}{\group8}}
\newcommand{\groupnine}{\textcolor{gNine}{\group9}}
\newcommand{\groupten}{\textcolor{gTen}{\group10}}
\newcommand{\groupeleven}{\textcolor{gEleven}{\group11}}
\newcommand{\grouptwelve}{\textcolor{gTwelve}{\group12}}

\newcommand\redsout{\bgroup\markoverwith{\textcolor{Plum}{\rule[0.5ex]{2pt}{1pt}}}\ULon}

\title{Stable Voting is PSPACE-Complete}
\author{
Ethan Dickey\\
Purdue University\\
\texttt{dickeye@purdue.edu}
\and
Alexandros Psomas\\
Purdue University\\
\texttt{apsomas@cs.purdue.edu}
\and
Athina Terzoglou\\
Purdue University\\
\texttt{aterzogl@purdue.edu}
}

\date{}

\crefname{marginrules}{Construction}{Constructions}
\Crefname{marginrules}{Construction}{Constructions}
\crefname{tcb@cnt@marginrules}{Construction}{Constructions}
\Crefname{tcb@cnt@marginrules}{Construction}{Constructions}

\begin{document}

\maketitle
\begin{abstract}
Stable Voting and Simple Stable Voting, introduced by Holliday and Pacuit, are Condorcet-consistent voting rules defined recursively: a candidate wins if they would win after removing some opponent they beat, taking the pair with the largest margin first.
The computational complexity of winner determination under these rules has been an open question. We resolve this problem: winner determination is \pspace-complete under both Stable Voting and Simple Stable Voting.
\end{abstract}%

\section{Introduction}

Social choice studies the problem of aggregating individual preferences towards a collective decision. In the single-winner setting, a \emph{social choice function}, or \emph{voting rule}, maps a profile of voters' preferences over a set of alternatives to a single winning alternative. %
Social choice functions are typically evaluated by the axioms they satisfy \cite{brandt2016handbook}. A central axiom in the field is \emph{Condorcet consistency}, which states that whenever some candidate wins against every other in a head-to-head majority comparison, that candidate should be the unique winner. While Condorcet consistency is a compelling notion, majority comparisons can cycle: a majority may prefer $A$ to $B$, $B$ to $C$, and yet, $C$ to $A$. Hence, a  Condorcet winner may not always exist. %

Rules that always output a Condorcet winner when one exists are called \emph{Condorcet consistent}. \emph{Split Cycle}, introduced by Holliday and Pacuit \cite{holliday2023split}, and recently characterized axiomatically by Ding, Holliday, and Pacuit \cite{ding2025axiomatic}, is a Condorcet-consistent rule that resolves majority cycles using a simple principle: it considers every cycle of head-to-head majorities and deletes the edge with the smallest margin within that cycle; the winners are the candidates that remain undefeated according to the resulting relation (i.e., candidates with no incoming edges in the resulting defeat relation). Split Cycle satisfies a number of appealing axiomatic properties and can be computed in polynomial time, but it is not resolute: on some elections, it returns multiple winners.

\emph{Stable Voting} (\SV) and \emph{Simple Stable Voting} (\SSV), introduced by Holliday and Pacuit \cite{holliday2023stable}, refine Split Cycle into a resolute rule that selects a single winner. Both rules are defined recursively around the following principle: if a candidate $A$ would win once another candidate $B$ is removed, then it should win the whole election. Among such candidates, it selects the one with the largest $(A,B)$ margin. The two rules, \SV\ and \SSV, differ only in which candidates they consider. \SV\ requires that candidates are undefeated according to Split Cycle, while \SSV\ does not have such a restriction. 
As a result, \SV\ always selects a winner among the Split Cycle winners, but that is not necessary for \SSV (there exists a counterexample with seven alternatives \cite{holliday2026SV}). On any uniquely weighted tournament, both rules return a unique winner. \SV\ powers \url{stablevoting.org}, where it is used to run public polls on a daily basis.

The recursive definition of \SV\ and \SSV\ makes winner determination algorithmically challenging. To determine the winner of an $n$-candidate election, one must determine (at least naively) winners in many $(n-1)$-candidate elections. A straightforward dynamic program over the candidate subsets runs in exponential time. In practice, this limits computation to elections of around 25 to 30 candidates \cite{Hol25,stablevotingcode}. One might hope that this recursion can be bypassed using simple local features of the weighted tournament, such as the number of pairwise victories of each candidate. \Cref{fig:almost-condorcet} shows that this is not the case. In the tournament shown, candidate $c_1$ is nearly Condorcet: it beats ten of its fourteen opponents. Nevertheless, the winner is $c_5$, which beats only one opponent. This also illustrates why the recursive structure cannot be replaced by a simple score-like proxy: even candidates that are locally dominant can lose, and candidates that look locally weak can win.
In this paper, we resolve the computational complexity of \SV\ and \SSV, previously posed as an open question \cite{holliday2026SV,Hol25}: winner determination is PSPACE-complete under both rules.

\begin{figure}[ht]
  \centering

  \begin{subfigure}[b]{0.45\linewidth}
    \centering
    \resizebox{\linewidth}{!}{%
    \begin{tikzpicture}[
      >={Stealth[length=2.3mm]},
      vhi/.style={circle, draw=blue!55!black, very thick, fill=blue!15, minimum size=8mm, inner sep=0pt, font=\sffamily\bfseries},
      vdim/.style={circle, draw=black!45, thin, fill=white, minimum size=8mm, inner sep=0pt, font=\sffamily\bfseries, text=black!60},
      fedge/.style={->, blue!55!black, semithick},
      gedge/.style={->, black!40, thin},
    ]
      \node[vhi]  (c1) at (0,2.4){$c_{1}$};
      \node[vdim] (c2) at (2,2.4){$c_{2}$};
      \node[vdim] (c3) at (4,2.4){$c_{3}$};
      \node[vdim] (c4) at (6,2.4){$c_{4}$};
      \node[vdim] (c5) at (8,2.4){$c_{5}$};
      \node[vdim] (c6) at (0,-0.6){$c_{6}$};
      \node[vdim] (c7) at (2,-0.6){$c_{7}$};
      \node[vdim] (c8) at (4,-0.6){$c_{8}$};
      \node[vdim] (c9) at (6,-0.6){$c_{9}$};
      \node[vdim] (c10) at (-3,3.2){$c_{10}$};
      \node[vdim] (c11) at (-3,2.2){$c_{11}$};
      \node[vdim] (c12) at (-3,1.2){$c_{12}$};
      \node[vdim] (c13) at (-3,0.2){$c_{13}$};
      \node[vdim] (c14) at (-3,-0.8){$c_{14}$};
      \node[vdim] (c15) at (-3,-1.8){$c_{15}$};
      \draw[fedge] (c1) -- (c6);
      \draw[fedge] (c1) to[out=35,in=145] (c2);
      \draw[fedge] (c1) to[out=45,in=135] (c3);
      \draw[fedge] (c1) to[out=52,in=128] (c4);
      \draw[fedge] (c1) to[out=58,in=122] (c5);
      \draw[fedge] (c1) to[out=200,in=0] (c11.east);
      \draw[fedge] (c1) to[out=210,in=0] (c12.east);
      \draw[fedge] (c1) to[out=218,in=0] (c13.east);
      \draw[fedge] (c1) to[out=226,in=0] (c14.east);
      \draw[fedge] (c1) to[out=234,in=0] (c15.east);
      \draw[gedge] (c10.east) to[out=0,in=160] (c1.north west);
      \draw[gedge] (c7) to[out=140,in=295] (c1);
      \draw[gedge] (c8) to[out=140,in=310] (c1);
      \draw[gedge] (c9) to[out=140,in=325] (c1);
    \end{tikzpicture}}
    \caption{$c_{1}$ wins $10$ of the $14$ others, yet loses.}
    \label{fig:c1-loses}
  \end{subfigure}
  \hfill
  \begin{subfigure}[b]{0.45\linewidth}
    \centering
    \resizebox{\linewidth}{!}{%
    \begin{tikzpicture}[
      >={Stealth[length=2.3mm]},
      vhi/.style={circle, draw=green!45!black, very thick, fill=green!20, minimum size=8mm, inner sep=0pt, font=\sffamily\bfseries},
      vdim/.style={circle, draw=black!45, thin, fill=white, minimum size=8mm, inner sep=0pt, font=\sffamily\bfseries, text=black!60},
      tedge/.style={->, green!40!black, very thick},
      gedge/.style={->, black!40, thin},
    ]
      \node[vdim] (c1) at (0,2.4){$c_{1}$};
      \node[vdim] (c2) at (2,2.4){$c_{2}$};
      \node[vdim] (c3) at (4,2.4){$c_{3}$};
      \node[vdim] (c4) at (6,2.4){$c_{4}$};
      \node[vhi]  (c5) at (8,2.4){$c_{5}$};
      \node[vdim] (c6) at (0,-0.6){$c_{6}$};
      \node[vdim] (c7) at (2,-0.6){$c_{7}$};
      \node[vdim] (c8) at (4,-0.6){$c_{8}$};
      \node[vdim] (c9) at (6,-0.6){$c_{9}$};
      \node[vdim] (c10) at (11,3.2){$c_{10}$};
      \node[vdim] (c11) at (11,2.2){$c_{11}$};
      \node[vdim] (c12) at (11,1.2){$c_{12}$};
      \node[vdim] (c13) at (11,0.2){$c_{13}$};
      \node[vdim] (c14) at (11,-0.8){$c_{14}$};
      \node[vdim] (c15) at (11,-1.8){$c_{15}$};
      \draw[tedge] (c5) to[out=20,in=180] (c10.west);
      \draw[gedge] (c4) to[out=45,in=145] (c5);
      \draw[gedge] (c3) to[out=45,in=130] (c5);
      \draw[gedge] (c2) to[out=45,in=115] (c5);
      \draw[gedge] (c1) to[out=45,in=100] (c5);
      \draw[gedge] (c6) to[out=45,in=205] (c5);
      \draw[gedge] (c7) to[out=45,in=220] (c5);
      \draw[gedge] (c8) to[out=45,in=235] (c5);
      \draw[gedge] (c9) to[out=45,in=250] (c5);
      \draw[gedge] (c11.west) to[out=180,in=-10] (c5);
      \draw[gedge] (c12.west) to[out=180,in=-25] (c5);
      \draw[gedge] (c13.west) to[out=180,in=-40] (c5);
      \draw[gedge] (c14.west) to[out=180,in=-55] (c5);
      \draw[gedge] (c15.west) to[out=180,in=-70] (c5);
    \end{tikzpicture}}
    \caption{$c_{5}$ wins only $1$ of the $14$ others, yet wins.}
    \label{fig:c5-wins}
  \end{subfigure}
\caption{ In this tournament, the strong candidate $c_{1}$ (left), which beats ten of the fourteen others, loses, while the weak candidate $c_{5}$ (right), which beats only one, wins according to \SSV. In \Cref{apx:table}, \Cref{tab:ranks} shows the complete weighted tournament for completeness.}
\label{fig:almost-condorcet} 
\end{figure}

\subsection{Our Contribution}

We settle the computational complexity of winner determination for Stable Voting (\SV) and Simple Stable Voting (\SSV). We prove that, given a weighted tournament and a designated candidate, deciding whether that candidate is the winner is \pspace-complete for both rules. The hardness holds even for uniquely weighted tournaments, and even under the promise that the winner is one of two distinguished candidates, which we denote by $\winner$ and $\loser$.

The proof is by a reduction from \tqbf. At a high level, the construction makes the recursion of winner-determination for \SV\ and \SSV\ simulate the recursive evaluation of a quantified Boolean formula. For each variable $x_i$, the tournament contains two literal candidates, corresponding to setting $x_i$ to true or false. The recursive deletion of one of these candidates encodes the choice of a truth value. The ordering of the relevant margins then implements the quantifier prefix: existential variables allow the recursion to select a branch favorable to $\winner$, while universal variables allow $\loser$ to win unless both branches are favorable to $\winner$. Thus, the winner of the constructed tournament is $\winner$ exactly when the quantified formula is true, and is $\loser$ otherwise.

The main difficulty is that this simulation must be carried out inside a single complete weighted tournament. Since every pair of candidates has a specified comparison, the construction cannot isolate gadgets by omitting edges, as is often possible in graph-based reductions, nor can it rely on a simple global Condorcet-winner or Condorcet-loser structure. Moreover, the recursive rule examines many induced subtournaments, including subtournaments produced by deleting a candidate that the intended simulation would not delete. The auxiliary candidates therefore have to behave correctly not only on the intended \textit{prefix} and \textit{full-assignment} subtournaments, but also on the illegal subtournaments that arise from such deviations. We handle this using guard and clause candidates: guards prevent the recursion from leaving the space of legal partial assignments, while clause candidates check whether a completed assignment satisfies the formula.

We first prove \pspace-completeness for SSV, whose recursive definition is slightly simpler. We then show that the same construction also works for \SV. Although \SV\ imposes the additional requirement that certain candidates be undefeated, we prove that this restriction does not change the outcome on the subtournaments relevant to our construction. Hence, \SV\ and \SSV\ agree on the instances produced by the reduction.

Conceptually, the result differs from the usual sources of high complexity in computational social choice. Many standard tournament solutions are polynomial-time computable, while others have \np-hard or \np-complete winner-membership problems; the tournament equilibrium set is known to be \np-hard and in \pspace, but no matching \pspace-hardness result is known \cite{brandt2016handbookTournament,woeginger2003banks,brandt2010computational}. Nearby \pspace-hardness results typically obtain their power from succinct or combinatorial preference representations, online or candidate-sequential elections, or explicit strategic game structure \cite{lang2004logical,HEMASPAANDRA2014,hemaspaandra2017complexity,ClevelandComplexity}. Here, by contrast, the alternation of the \tqbf\ instance is generated internally by the recursive winner-determination procedure itself. To the best of our knowledge, this is the first \pspace-completeness result for winner determination by a natural voting rule on an explicitly represented complete weighted tournament.

\subsection{Related Work}\label{sec:related}
\paragraph{Pairwise-majority rules and tournament solutions.}
Voting rules based on pairwise comparisons differ in how much information from the weighted majority tournament they use \cite{fishburn1977condorcet,laslier1997tournament}. Tournament solutions such as Copeland, the Smith set, the uncovered set, the Banks set, and the tournament equilibrium set (TEQ) depend only on the orientation of the majority relation \cite{Copeland1951,smith73,miller1980new,banks1985sophisticated,schwartz1990tournament}. By contrast, Kemeny, Minimax, Ranked Pairs, and Beat Path additionally use the magnitudes of the pairwise margins \cite{kemeny,simpson1969,rankedPairs87tideman,schulze11}, as does Split Cycle, which deletes the victories that are weakest on a cycle and returns the candidates that remain undefeated \cite{holliday2023split,ding2025axiomatic}. Stable Voting and Simple Stable Voting are margin-based rules that recursively check claims of the form that $A$ wins after $B$ is removed. Like the rules above, however, they depend only on the relative ordering of the pairwise margins \cite{holliday2023stable,holliday2026SV}.

\paragraph{Computational complexity.}
The systematic study of computational complexity in voting was initiated by Bartholdi, Tovey, and Trick. They showed that winner determination under the rules of Dodgson and Kemeny is \np-hard \cite{bartholdi1989voting}. In follow-up work, they studied the computational hardness of strategic manipulation \cite{bartholdi1989computational} and subsequently introduced the complexity-theoretic study of electoral control \cite{bartholdi1992hard}. These initial hardness results were later sharpened to exact complexity characterizations: winner determination for Dodgson, Kemeny, and Young elections is complete for $\Theta_2^p=\mathsf{P}^{\np}_{\parallel}$, the class of problems solvable in polynomial time using parallel queries to an \np\ oracle \cite{hemaspaandra1997exact,hemaspaandra2005complexity, rothe2003exact}. Among tournament solutions, deciding whether a designated candidate is a Banks winner is \np-complete \cite{woeginger2003banks}, while deciding membership in the recursively defined tournament equilibrium (TEQ) set is \np-hard and belongs to \pspace\ \cite{brandt2010computational}.

Hardness also arises when a voting rule permits multiple outcomes through tie-breaking. Brill and Fischer showed that, for Tideman's original neutral formulation of Ranked Pairs, deciding whether a designated candidate is a Ranked Pairs winner is \np-complete, even though one Ranked Pairs winner can be found in polynomial time \cite{rankedPairs87tideman,rankedPairsComplexity}. Under parallel-universe tie-breaking, deciding whether a candidate can win is also \np-complete for Single Transferable Vote (STV) and Ranked Pairs; Wang et al.\ give practical search and integer-programming algorithms for computing these winner sets \cite{wang2019practical}. Boehmer, Bredereck, and Peters give a fine-grained and parameterized analysis of related winner- and position-determination problems for sequential scoring rules, including STV, Coombs, and Baldwin \cite{boehmer2026rank}.

The closest \pspace-completeness results in computational social choice concern problems with an explicit online, sequential, or strategic component. Hemaspaandra, Hemaspaandra, and Rothe show that online manipulation in sequential elections can be \pspace-complete, even for election systems with computationally simple winner problems \cite{HEMASPAANDRA2014}. The same authors subsequently show that online candidate control in candidate-sequential elections can also be \pspace-complete \cite{hemaspaandra2017complexity}. More recently, Cleveland, de Keijzer, and Polukarov show that deciding whether a subgame-perfect equilibrium exists in a sequential primary election is \pspace-complete \cite{ClevelandComplexity}.
In each of these settings, the complexity arises from an explicit sequence of strategic or irrevocable choices.

\section{Preliminaries}\label{sec:prelims}

We study the computational complexity of winner determination in Stable Voting (SV) and Simple Stable Voting (SSV).
Voting rules are typically defined on voting profiles (a set of candidates, a set of voters, and the voters' preferences over the candidates). For our purposes, however, it is more convenient to work directly with the pairwise majority margins induced by a voting profile, which contain all the information necessary to compute the winner in SV and SSV. We represent these margins by a skew-symmetric weighted tournament:

\begin{definition}[Skew-symmetric weighted tournament]\label{def:weighted-tournament} A \emph{skew-symmetric weighted tournament} is a pair $G=(V,w)$, where $V$ is a finite set of candidates and $w:V\times V\to\mathbb{Z}$ satisfies, for all distinct $A,B\in V$,
\[
w(A,B)=-w(B,A)
\qquad\text{and}\qquad
w(A,B)\neq 0.
\]
We refer to $w(A,B)$ as the weight of the arc $A\to B$.
\end{definition}

We say that $G$ is \emph{uniquely weighted} if all values $w(A,B)$, over ordered pairs $A\neq B$, are distinct.
Throughout the paper, we refer to skew-symmetric weighted tournaments simply as \emph{weighted tournaments}. In figures, we draw only the positive-weight edge between each pair of candidates; the reverse edge, with the opposite weight, is implicit.

For a candidate $A\in V$, let $G\setminus\{ A \}$ denote the weighted tournament obtained from $G$ by deleting $A$ and all of its incident edges. More generally, for $S\subseteq V$, let $G\setminus S$ denote the subtournament induced by $V\setminus S$. For a subset $S\subseteq V$, let $G_{S}$ denote the subtournament induced by $S$.

\paragraph{Stable Voting and Simple Stable Voting.} 

Stable Voting and Simple Stable Voting depend only on the relative order of the pairwise margins. Therefore, the winner of a voting profile can be computed from its induced weighted tournament, which can itself be computed from the profile in polynomial time. Conversely, any realizable weighted tournament can be converted in polynomial time into a voting profile inducing it (see \Cref{apx:vot-profile}).

It is therefore without loss of generality to state our hardness reduction directly in terms of weighted tournaments. In our construction, for simplicity, the positive weights are $(1,\ldots,\binom{|V|}{2})$, so the resulting tournament need not be realizable as written. This is also without loss. Multiplying all weights by the same positive constant preserves the relative order of the margins, and hence preserves the winner under Stable Voting and Simple Stable Voting. Thus, after multiplying all weights by $2$, the constructed tournament has the same winner and can be realized by a voting profile.

We now define Stable Voting and Simple Stable Voting recursively on weighted tournaments. %

\begin{definition}[Simple Stable Voting~\cite{holliday2023stable}]
Let $G=(V,\weight)$ be a weighted tournament. The winner $\SSV(G)\in V$ is defined recursively as follows.
\begin{enumerate}
     \item If $G$ has a unique candidate $A$, then $\SSV(G):=A$.
    \item Otherwise, consider all ordered pairs $(A,B)$ of distinct candidates in decreasing order of $\weight(A,B)$. Let $(A,B)$ be the first pair such that
    $\SSV(G\setminus\{B\})=A$. Then $\SSV(G):=A$.
\end{enumerate}
\end{definition}

To define SV, we first need to define the notion of a \emph{defeat}:
\begin{definition}[$B$ does not defeat $A$]\label{def:does-not-defeat}
Let $G=(V,\weight)$ be a weighted tournament and let $A, B \in V$ be two (distinct) candidates. We say that $B$ \textbf{does not defeat} $A$ if there exists a directed path $A = x_0, x_1, \dots, x_t=B$ in $G$ such that
\[
\weight(x_i,x_{i+1}) \ge \weight(B,A)
\qquad\text{for all } i\in\{0,\dots,t-1\}.
\]
If no such path exists, then $B$ \textbf{defeats} $A$. We say that $A$ is \textbf{undefeated} in $G$ if no candidate defeats $A$; equivalently, $B$ does not defeat $A$ for every $B \in V \setminus \{A\}$.
\end{definition}

Note that for two distinct candidates $A$ and $B$, it is possible that neither is defeated by the other; that is, $B$ does not defeat $A$ and $A$ does not defeat $B$ may hold at the same time. For instance, suppose $w(A,B)>0$. Then $B$ does not defeat $A$, since the edge $A\to B$ already satisfies the condition, and in fact this holds for any positive edge. If the remaining weights also give a directed path from $B$ back to $A$ in which every edge has weight at least $w(A,B)$, then $A$ does not defeat $B$ as well. However, it is not possible that $A$ defeats $B$ and $B$ defeats $A$.

\begin{definition}[Stable Voting \cite{holliday2023stable}]
Let $G=(V,\weight)$ be a weighted tournament. The winner $\SV(G)\in V$ is defined recursively as follows.
\begin{enumerate}
    \item If $G$ has a unique \textbf{undefeated} candidate $A$, then $\SV(G):=A$.
    \item Otherwise, consider all ordered pairs $(A,B)$ such that $A$ is \textbf{undefeated}, in decreasing order of $\weight(A,B)$. Let $(A,B)$ be the first pair such that $\SV(G\setminus\{B\})=A$. Then $\SV(G):=A$.
\end{enumerate}
\end{definition}

Both \SV\ and \SSV\ have a winner, and, in fact, a unique winner when the tournament is uniquely weighted.
Their distinction is based on whether the winner is undefeated, and it might result in a different winner between the two rules. By definition, the \SV\ winner is always among the undefeated candidates, whereas the \SSV\ winner need not be. \Cref{fig:sv-ssv-differ} shows the smallest such example~\cite{holliday2026SV}, with seven candidates, in which the \SSV\ winner $a$ is defeated, while the undefeated candidates are $b$, $c$, and $d$. Hence, \SV\ and \SSV\ select different winners here.

\begin{figure}[ht]
\centering
\begin{tikzpicture}[scale=0.9,  transform shape,
  >={Stealth[length=2.2mm]},
  e/.style={draw,->,line width=0.5pt},
  wlab/.style={fill=white,inner sep=1.2pt,font=\small},
  cand/.style={circle,draw,minimum size=8mm,inner sep=0pt,font=\small},
  win/.style={cand,fill=green!20}, sc/.style={cand,fill=blue!15}, oth/.style={cand,fill=white}]
  \node[sc]  (c) at (90:2.9)      {$c$};
  \node[sc]  (b) at (38.57:2.9)   {$b$};
  \node[win] (a) at (-12.86:2.9)  {$a$};
  \node[oth] (g) at (-64.29:2.9)  {$g$};
  \node[oth] (f) at (-115.71:2.9) {$f$};
  \node[oth] (e) at (-167.14:2.9) {$e$};
  \node[sc]  (d) at (141.43:2.9)  {$d$};
  \draw[e] (c)--(b); \draw[e] (b)--(a); \draw[e] (g)--(a); \draw[e] (f)--(g);
  \draw[e] (e)--(f); \draw[e] (d)--(e); \draw[e] (d)--(c);
  \draw[e] (a)--(c); \draw[e] (b)--(g); \draw[e] (f)--(a); \draw[e] (g)--(e);
  \draw[e] (f)--(d); \draw[e] (c)--(e); \draw[e] (b)--(d);
  \draw[e] (g)--(c); \draw[e] (f)--(b); \draw[e] (a)--(e); \draw[e] (d)--(g);
  \draw[e] (c)--(f); \draw[e] (e)--(b); \draw[e] (d)--(a);
  \node[wlab] at ($(c)!0.5!(b)$)  {14}; \node[wlab] at ($(b)!0.5!(a)$)  {12};
  \node[wlab] at ($(g)!0.5!(a)$)  {2};  \node[wlab] at ($(f)!0.5!(g)$)  {17};
  \node[wlab] at ($(e)!0.5!(f)$)  {19}; \node[wlab] at ($(d)!0.5!(e)$)  {3};
  \node[wlab] at ($(d)!0.5!(c)$)  {5};  \node[wlab] at ($(a)!0.5!(c)$)  {11};
  \node[wlab] at ($(b)!0.5!(g)$)  {15}; \node[wlab] at ($(f)!0.5!(a)$)  {7};
  \node[wlab] at ($(g)!0.5!(e)$)  {21}; \node[wlab] at ($(f)!0.5!(d)$)  {4};
  \node[wlab] at ($(c)!0.5!(e)$)  {6};  \node[wlab] at ($(b)!0.5!(d)$)  {8};
  \node[wlab] at ($(g)!0.28!(c)$) {16}; \node[wlab] at ($(f)!0.33!(b)$) {10};
  \node[wlab] at ($(a)!0.40!(e)$) {20}; \node[wlab] at ($(d)!0.33!(g)$) {1};
  \node[wlab] at ($(c)!0.66!(f)$) {18}; \node[wlab] at ($(e)!0.66!(b)$) {13};
  \node[wlab] at ($(d)!0.66!(a)$) {9};
\end{tikzpicture}
\caption{A uniquely weighted tournament on seven candidates where SV and SSV disagree, shown in \cite{holliday2026SV}. The SSV winner is $a$ (green); the undefeated candidates are $b$, $c$, $d$ (blue), so the SV winner differs from the SSV winner.}
\label{fig:sv-ssv-differ}
\end{figure}

\paragraph{Critical Edges.} The recursive definition of $\SSV$ scans candidate pairs in decreasing order of weight and selects the first pair $(A,B)$ such that $\SSV(G\setminus\{ B \})=A$. Therefore, the winner is ``determined'' by an edge of the form $A\to B$. We extend this idea to the subtournament induced after removing an arbitrary candidate $U\in V(G)$. For weighted tournaments where all weights are unique, $G\setminus\{U\}$ has a unique winner~\cite{holliday2023stable}, and thus the relevant edge is of the form $\SSV(G\setminus\{U\})\to U$. We call this the \emph{critical edge} of $U$. Formally:

\begin{definition}[Critical edge]\label{def:critical-edges}
Let $G=(V,\weight)$ be a weighted tournament. For each candidate $B\in V$, the
\textbf{critical edge} of $B$ is the directed edge $\SSV(G\setminus\{B\})\to B$ (with possibly negative weight). We denote the set of
all critical edges by
\[
\CE_{\SSV}(G):=\{\,(\SSV(G\setminus\{B\}),B): B\in V\,\}.
\]
\end{definition}
Analogously, for Stable Voting, the critical edges are defined

\[
  \CE_{\SV}(G):=\bigl\{\,(\SV(G\setminus\{B\}),B): B\in V \text{ and } \SV(G\setminus\{B\}) \text{ is undefeated in } G\,\bigr\}.
\]
\smallskip

This notion is crucial for our proofs, because it allows us to reason about a set of size at most $|V|$ (the set of critical edges; at most one for each candidate\footnote{$\SSV$ has exactly $|V|$ critical edges, but $\SV$ has at most $|V|$ since defeated candidates do not produce critical edges.}) instead of all $\Theta(|V|^2)$ edges of $G$. In particular, after identifying all critical edges, $\SSV(G)$ and $\SV(G)$ are determined by the maximum-weight critical edge.

\paragraph{Elimination Order.} The recursive definition of $\SV$ and $\SSV$ not only identifies the winner, but also determines a sequence of candidates that are removed one by one along the unique successful recursive branch. Starting from $G$, the maximum-weight critical edge selects the first candidate to remove; applying the same rule to the resulting subtournament selects the next candidate to remove; and so on, until only the winner remains. We refer to this sequence as the \emph{elimination order}. Formally,

\begin{definition}[Elimination order]\label{def:elim-order}
Let $G$ be a uniquely weighted tournament and let $\SSV(G)=A$. Define a sequence of subtournaments $H_0,H_1,\dots,H_{|V|-1}$
recursively by setting $H_0:=G$, and for each $i\in\{1,\dots,|V|-1\}$, letting $A\to B_i$ be the (possibly negatively weighted) maximum-weight critical edge of $H_{i-1}$ and setting $H_i := H_{i-1}\setminus\{ B_i\}.$ The resulting sequence 
\[B_1,B_2,\dots,B_{|V|-1}\]
is called the \textbf{elimination order of $A$ in $G$}.
\end{definition}

\subsection{Example: Simple Stable Voting Winner Determination}

To illustrate the recursive winner determination, consider the uniquely weighted tournament
$G$ on candidates $V=\{a,b,c,d\}$ shown in \Cref{fig:ssv_example}.

\begin{figure}[ht]
    \centering
    \begin{tikzpicture}[scale=1.3, v/.style={circle, draw, inner sep=1pt, font=\small, minimum size=1.8em}]
        \node[v] (a) at (90:1.5) {$a$};
        \node[v] (b) at (210:1.5) {$b$};
        \node[v] (c) at (330:1.5) {$c$};
        \node[v] (d) at (0,0) {$d$}; %

        \draw[->, thick, bend right=20] (a) to node[left] {6} (b);
        \draw[->, thick, bend right=20] (b) to node[below] {5} (c);
        \draw[->, thick, bend right=20] (c) to node[right] {4} (a);

        \draw[->] (a) to node[right, pos=0.7] {3} (d);
        \draw[->] (b) to node[above, pos=0.7] {2} (d);
        \draw[->] (d) to node[above, pos=0.7] {1} (c); %
    \end{tikzpicture}
    \caption{The tournament $G$ with 4 candidates.}
    \label{fig:ssv_example}
\end{figure}

To determine $\SSV(G)$, we compute the critical edge associated with each candidate $v\in V$. Namely, for each $v$, we first recursively determine the winner of the subtournament $G\setminus\{v\}$, and then form the critical edge $\SSV(G\setminus\{v\})\to v$. Since $G$ is uniquely weighted, $\SSV(G)$ is determined by the (unique) maximum-weight critical edge. Observe that the critical edge is always from the winner of the subtournament to the candidate removed from $G$, even if the original edge is in the other direction (in which case the critical edge has a negative weight).

Four 3-candidate subtournaments result from the removal of a candidate in $G$. For these subtournaments, the winner is easy to identify. If there is a Condorcet winner, they win. Otherwise, the three candidates form a directed cycle, and the winner is the candidate with the incoming edge that has the smallest weight. This follows directly from the recursive definition of \SSV. \Cref{fig:ssv-critical-edges} shows the subtournament winners and the corresponding critical edges.

\begin{figure}[ht]
    \centering

    \begin{subfigure}[t]{0.24\textwidth}        
    \centering
        \begin{tikzpicture}[scale=1.1, baseline=(current bounding box.center),
            v/.style={circle, draw, inner sep=1pt, font=\footnotesize, minimum size=1.5em}]
            \path[use as bounding box] (-1.8,-1.2) rectangle (1.8,1.9);

            \node[v, fill=green!20, thick] (a) at (90:1.3) {$a$};
            \node[v] (b) at (210:1.3) {$b$};
            \node[v] (c) at (330:1.3) {$c$};
            \node[v, gray, dashed] (d) at (0,0) {$d$};

            \draw[->, bend right=20] (a) to node[left] {\small6} (b);
            \draw[->, bend right=20] (b) to node[below] {\small5} (c);
            \draw[->, bend right=20] (c) to node[right] {\small4} (a);
            \draw[->, red, ultra thick] (a) -- node[right, black, font=\bfseries] {3} (d);
        \end{tikzpicture}

        \vspace{1mm}
        \caption{$G\setminus\{d\}: \SSV(G\setminus\{d\})=a$.\\ Critical edge weight: $3$}\label{fig:G-d}
    \end{subfigure}
    \hfill
    \begin{subfigure}[t]{0.24\textwidth}
        \centering
        \begin{tikzpicture}[scale=1.1, baseline=(current bounding box.center),
            v/.style={circle, draw, inner sep=1pt, font=\footnotesize, minimum size=1.5em}]
            \path[use as bounding box] (-1.8,-1.2) rectangle (1.8,1.9);

            \node[v, fill=blue!15, thick] (a) at (90:1.3) {$a$};
            \node[v] (b) at (210:1.3) {$b$};
            \node[v, gray, dashed] (c) at (330:1.3) {$c$};
            \node[v] (d) at (0,0) {$d$};

            \draw[->, bend right=15] (a) to node[left] {\small6} (b);
            \draw[->] (a) -- node[right, pos=0.6] {\small3} (d);
            \draw[->] (b) -- node[above, pos=0.6] {\small2} (d);
            \draw[->, red, ultra thick, bend left=30] (a) to node[right, black, font=\bfseries] {-4} (c);
        \end{tikzpicture}

        \vspace{1mm}
        \caption{$G\setminus\{c\}: \SSV(G\setminus\{c\})=a$.\\ Critical edge weight: $-4$}
    \end{subfigure}
    \hfill
    \begin{subfigure}[t]{0.24\textwidth}
        \centering
        \begin{tikzpicture}[scale=1.1, baseline=(current bounding box.center),
            v/.style={circle, draw, inner sep=1pt, font=\footnotesize, minimum size=1.5em}]
            \path[use as bounding box] (-1.8,-1.2) rectangle (1.8,1.9);

            \node[v] (a) at (90:1.3) {$a$};
            \node[v, gray, dashed] (b) at (210:1.3) {$b$};
            \node[v, fill=blue!15, thick] (c) at (330:1.3) {$c$};
            \node[v] (d) at (0,0) {$d$};

            \draw[->, bend right=15] (c) to node[right] {\small4} (a);
            \draw[->] (a) -- node[right, pos=0.6] {\small3} (d);
            \draw[->] (d) -- node[above, pos=0.6] {\small1} (c);
            \draw[->, red, ultra thick, bend left=30] (c) to node[below, black, font=\bfseries] {-5} (b);
        \end{tikzpicture}

        \vspace{1mm}
        \caption{$G\setminus\{b\}: \SSV(G\setminus\{b\})=c$.\\ Critical edge weight: $-5$}
    \end{subfigure}
    \hfill
    \begin{subfigure}[t]{0.24\textwidth}
        \centering
        \begin{tikzpicture}[scale=1.2, baseline=(current bounding box.center),
            v/.style={circle, draw, inner sep=1pt, font=\footnotesize, minimum size=1.5em}]
            \path[use as bounding box] (-1.8,-1.2) rectangle (1.8,1.9);

            \node[v, gray, dashed] (a) at (90:1.3) {$a$};
            \node[v, fill=blue!15, thick] (b) at (210:1.3) {$b$};
            \node[v] (c) at (330:1.3) {$c$};
            \node[v] (d) at (0,0) {$d$};

            \draw[->, bend right=15] (b) to node[below] {\small5} (c);
            \draw[->] (b) -- node[above, pos=0.6] {\small2} (d);
            \draw[->] (d) -- node[above, pos=0.6] {\small1} (c);
            \draw[->, red, ultra thick, bend left=20] (b) to node[left, black, font=\bfseries] {-6} (a);
        \end{tikzpicture}

        \vspace{1mm}
        \caption{$G\setminus\{a\}: \SSV(G\setminus\{a\})=b$.\\ Critical edge weight: $-6$}
    \end{subfigure}

    \caption{The four critical edges of $G$. The maximum critical edge is $a \to d$ with weight $3$.}
    \label{fig:ssv-critical-edges}
\end{figure}

Among the four critical edges shown in \Cref{fig:ssv-critical-edges}, the maximum-weight one is $a\to d$, with weight $3$. Therefore, the winner is $a.$  To determine the elimination order, we recurse on the subtournament induced by removing $d$ (\Cref{fig:G-d}). In the subtournament with candidates $\{a,b,c\}$, each critical edge is again obtained by removing a candidate and taking the winner of the resulting two-candidate subtournament (the candidate with the outgoing edge). Hence, the critical edges are $a\to c$ with weight $-4$ (which we get from removing $c$ from $\{a,b,c\}$), $c\to b$ with weight $-5$ (which we get from removing $b$ from $\{a,b,c\}$), and $b\to a$ with weight $-6$ (which we get from removing $a$ from $\{a,b,c\}$).  The maximum among those is $a\to c$ (observe that the maximum weight here is negative), so $c$ is removed next. Finally, on the remaining subtournament $\{a,b\}$, the winner is $a$, so $b$ is removed. Thus, $\SSV(G)=a$, with elimination order $[d,c,b]$.

\section{The Tournament Construction}\label{sec:construction}

We prove hardness for SV and SSV by reducing from \TQBF\ (true quantified Boolean formula), the canonical \pspace-complete problem of deciding the truth of a fully quantified Boolean formula. An instance of \TQBF\ is a \textit{quantified Boolean formula} expression of the form
\[ \Phi = Q_1x_1Q_2x_2\ldots Q_nx_n \;\varphi(x_1,x_2,\ldots,x_n)\]
where each quantifier $Q_i$ is either $\exists$ or $\forall$ and $\varphi(x_1,x_2,\ldots,x_n)$ is a Boolean formula without quantifiers. Since all variables have a quantifier (i.e., there are no free variables), $\Phi$ is either true or false. The problem \TQBF~asks whether a given \QBF~is true.  In this paper, we assume that $\varphi$ is in CNF form (\TQBF\ remains \pspace-complete under this assumption \cite[\S4.3]{arora2009computational}), namely, $\varphi = C_1 \wedge C_2 \wedge \cdots \wedge C_m$, where each clause $C_j$ is a disjunction of literals. We assume without loss of generality that no clause contains both
$x_i$ and $\neg x_i$ for any variable $x_i$, and that duplicate literals have been removed. Indeed, tautological clauses may simply be deleted from the CNF formula.

In~\Cref{sec:construction:weights}, we describe the weighted tournament $G(\Phi)$ constructed from a given \TQBF\ instance $\Phi$. In~\Cref{sec:construction:dfns}, we introduce several auxiliary definitions for $G(\Phi)$ that are used in the analysis.
In~\Cref{sec:hardness} we prove that \SSV\ is \pspace-hard; in ~\Cref{sec:stable_voting} we extend our proof to \SV. For simplicity, we focus our discussion on \SSV\ for the remainder of this section.

\subsection{Main Construction}\label{sec:construction:weights}

Fix a \TQBF\ instance $\Phi = Q_1 x_1 \cdots Q_n x_n\; \varphi(x_1,\ldots,x_n)$ with $n$ variables and $m$ clauses, where $\varphi$ is in CNF form.
Our reduction constructs a uniquely weighted tournament $G(\Phi)=(V(\Phi),\weight(\Phi))$. We have two designated candidates, $\winner$ (winner) and $\loser$ (loser), that encode whether the formula $\Phi$ is \true\ or \false. For each variable $x_i$ (of $\Phi$), we have two candidates $\t_i,\f_i$ for the two possible values of $x_i$ and a ``guard'' $\guard_i$. We refer to the candidates $\t_i$ and $\f_i$ as the literal candidates, or simply the literals, for variable $x_i$. Finally, for each clause $C_k$, we have a clause candidate $\clause_k$. Formally,
    \[ V(\Phi)\;=\;\{\winner,\loser\}\ \cup\ \{\clause_k:k\in[m]\}\ \cup\ \{\t_i,\f_i,\guard_i:i\in[n]\}. \]

\Cref{margin_groups} describes the rest of the construction, i.e., the orientation and weights of the edges of $G(\Phi)$. Within each group, edges indexed by variables (``for all $i \in [n]$'') are ordered by increasing variable index, and edges indexed by clauses (``for all $k\in[m]$'') are ordered by increasing clause index; earlier edges in this order receive larger weights. Note that this construction can be done in polynomial time ($|V| = 3n+m+2$ and $|E| \in O((n + m)^2)$, and each edge is oriented and weighted by one of the explicit rules below).

We prove that $\winner$ and $\loser$ are the only two candidates that can win in $G(\Phi)$. The remaining candidates have supporting roles that are easier to understand through the way \SSV\ computes the winner. Recall that \SSV\ is recursive, determining the winner by deleting one candidate at a time and recursing on the resulting subtournament. The truth value of a quantified formula is also evaluated recursively, fixing the variables one by one in the order of the quantifier prefix. Our construction makes these two recursions mirror each other. In the context of \SSV, a variable $x_i$ (of $\Phi$), corresponding to the candidate group $i$ in $G(\Phi)$, namely $\{\t_i,\f_i,\guard_i\}$, is ``assigned'' a value (true/false) by deleting one of its two literals and recursing on the resulting subtournament.  The \emph{remaining} literal corresponds to the value being assigned to $x_i$.  The guards $\guard_i$ ensure that this mirroring stays faithful, preventing the recursion from reaching ``illegal'' subtournaments in which both literals of a variable are deleted before the remaining variables are set. The \SSV\ recursion thus first sets all the variables, one by one, and then evaluates the resulting assignment. If the assignment satisfies the formula, then the winner is $\winner$; otherwise, it is $\loser.$ The formula is satisfied by an assignment (where exactly one literal $\lit_i$ from each variable $i$ is left) if each $\clause_k$ candidate has at least one edge $\lit_i\to\clause_k$ with positive weight. While the recursion is still setting variables, each subtournament it reaches encodes a partial assignment whose winner is $\winner$ or $\loser$ according to whether the formula is \true\ or \false\ given that assignment. At the top level, this gives $\SSV(G(\Phi)) = \winner$ when $\Phi$ is \true\ and $\SSV(G(\Phi)) = \loser$ when $\Phi$ is \false.
The orientation and the weights of the edges dictate the order in which \SSV\ examines candidates, and hence the winner it selects. The weights are chosen precisely so that this order achieves the behavior described above: fixing the variables along the quantifier prefix and yielding $\winner$ or $\loser$ according to the truth value at each step.

\begin{marginrules}[label=margin_groups]{Directed edges of $G(\Phi)$ in decreasing priority}{}%
    \begin{enumerate}[label=($\mathsf g$\arabic*),leftmargin=3em]
    
        \item \textbf{$\winner,\loser$ to literals:} for all $i\in[n]$, for each variable $x_i$, the quantifier $Q_i$ determines the relative order of the edges in this group:
        \begin{itemize}[label=--]
            \item if $Q_i=\exists$, then $\winner\rightarrow \f_i,\winner\rightarrow \t_i$, and $\loser\rightarrow \f_i,\loser\rightarrow \t_i$.
            \item if $Q_i=\forall$, then $\loser\rightarrow \f_i,\loser\rightarrow \t_i$, and $\winner\rightarrow \f_i,\winner\rightarrow \t_i$.
        \end{itemize}
    
        \item \textbf{Literals to their own guard:} for all $i\in[n]$, $\f_i\rightarrow \guard_i$ and $\t_i\rightarrow \guard_i$.
    
        \item \textbf{Literals to clauses that contain them:} for clause $k\in[m]$ and variable $i\in[n]$.
        \begin{itemize}[ label=--]
            \item If $x_i$ appears positively in $C_k$, then $\t_i\rightarrow \clause_k$.
            \item If $x_i$ appears negatively in $C_k$, then $\f_i\rightarrow \clause_k$.
        \end{itemize}
    
        \item \textbf{Clauses to literals not in them: }for clause $k\in[m]$ and variable $i\in[n]$.
        \begin{itemize}[label=--]
            \item If $x_i$ appears positively in $C_k$, then  $\clause_k\rightarrow \f_i$. \item If $x_i$ appears negatively in $C_k$, then $\clause_k\rightarrow \t_i$.
            \item If $x_i$ does not appear in $C_k$, then $\clause_k\rightarrow \f_i$ and $\clause_k\rightarrow \t_i$.
        \end{itemize}
    
        \item \textbf{Guards to other literals:} for all $i\neq j$, $\guard_i\rightarrow \f_j$ and $\guard_i\rightarrow \t_j$.
    
        \item \textbf{Guards to clauses:} for all $i\in[n],k\in[m]$, $\guard_i\rightarrow \clause_k$.
    
        \item \textbf{Clauses to $\winner$ then $\loser$ to clauses:} for all $k\in[m]$, $\clause_k\rightarrow \winner$ and then for all $k\in[m]$, $\loser\rightarrow \clause_k$.
    
        \item \textbf{Guards to $\winner$ then $\loser$:} for all $i\in[n]$, $\guard_i\rightarrow \winner$ and $\guard_i\rightarrow \loser$.
    
        \item \textbf{Total order on literals:\footnote{ \;The relative order within $\group9$ does not matter; \Cref{alg:build-tournament} specifies the order we use.}} for all $i\in[n]$, $\f_i\rightarrow \t_i$, and for all $i<j$,
            \[ \f_i\rightarrow \f_j,\quad \f_i\rightarrow \t_j,\quad \t_i\rightarrow \f_j,\quad \t_i\rightarrow \t_j. \]
        \item \textbf{Total order on guards:} for all $i<j$, $\guard_i\rightarrow \guard_j$.
    
        \item \textbf{Total order on clauses:} for all $k<k'$, $\clause_k\rightarrow \clause_{k'}$.
    
        \item \textbf{Designated candidates:} $\winner\rightarrow \loser$.
    \end{enumerate}

\end{marginrules}

\Cref{fig:gadget-overview} gives a high-level view of the construction, organizing the edges into groups by the types of candidates they connect.

\begin{figure}[ht]
\centering
\begin{tikzpicture}[scale=0.9, transform shape,
    fam/.style={draw, thick, rounded corners=2pt, font=\small\bfseries,
                fill opacity=0.12, text opacity=1},
    cd/.style={circle, draw, thick, minimum size=0.8cm, font=\small\bfseries,
               fill opacity=0.12, text opacity=1},
    g/.style={->, >={Latex[length=2mm,width=1.5mm]}, semithick, font=\footnotesize},
    glab/.style={fill=white, inner sep=1pt, font=\footnotesize}]

  \node[fam, draw=literalgreen!70!black, fill=literalgreen,
        minimum width=3.4cm, minimum height=1.5cm] (Lit) at (4.2, 2.9) {literals $\t_i,\f_i$};
  \node[fam, draw=guardpurple!75!black, fill=guardpurple,
        minimum width=3.4cm, minimum height=1.3cm] (Gu)  at (4.2, 0.3) {guards $\guard_i$};
  \node[fam, draw=clauseorange!80!black, fill=clauseorange,
        minimum width=2.4cm, minimum height=1.4cm] (Cl)  at (8.6, 1.6) {Clauses $\clause_k$};

  \node[cd, draw=winnerblue!70!black, fill=winnerblue] (C) at (0, 2.2) {$\winner$};
  \node[cd, draw=loserred!70!black,   fill=loserred]   (D) at (0, 1.0) {$\loser$};
  \node[draw, thick, rounded corners=8pt, fit=(C)(D), inner sep=6pt,
        fill opacity=0, draw=neutralgray] (CD) {};

  \draw[g, draw=gOne]   ([yshift=4mm]CD.east) -- node[glab, above, pos=0.5]{\groupone} (Lit.west);
  \draw[g, draw=gEight] (Gu.west) -- node[glab, above, pos=0.5]{\groupeight} ([yshift=-4mm]CD.east);
  \draw[g, draw=gTwelve] (C) -- node[glab, right=1pt]{\grouptwelve} (D);

  \draw[g, draw=gSeven, <->] (CD.north) to[out=60,in=120,looseness=1.25]
        node[glab, pos=0.5, above]{\groupseven} (Cl.north);

  \draw[g, draw=gTwo]  ([xshift=-6mm]Lit.south) -- node[glab, left=1pt]{\grouptwo}  ([xshift=-6mm]Gu.north);
  \draw[g, draw=gFive] ([xshift=8mm]Gu.north)    -- node[glab, right=1pt]{\groupfive} ([xshift=8mm]Lit.south);

  \draw[g, draw=gThree]  ([yshift=2mm]Lit.east)-- node[glab, above, pos=0.6]{\groupthree} ([yshift=4mm]Cl.west);
  \draw[g, draw=gFour]       ([yshift=-1mm]Cl.west) -- node[glab, below, pos=0.6]{\groupfour} ([yshift=-3mm]Lit.east);

  \draw[g, draw=gSix] ([yshift=2mm]Gu.east) -- node[glab, below, pos=0.55]{\groupsix} ([yshift=-4mm]Cl.west);

  \draw[g, draw=gNine] ($(Lit.north)+(-0.5,0)$) to[out=120,in=60,looseness=5] ($(Lit.north)+(0.5,0)$);
  \node[glab] at ($(Lit.north)+(0,0.62)$) {\groupnine};
  \draw[g, draw=gTen] ($(Gu.south)+(-0.5,0)$) to[out=-120,in=-60,looseness=5] ($(Gu.south)+(0.5,0)$);
  \node[glab] at ($(Gu.south)+(0,-0.62)$) {\groupten};
  \draw[g, draw=gEleven] ($(Cl.south)+(-0.4,0)$) to[out=-120,in=-60,looseness=5] ($(Cl.south)+(0.4,0)$);
  \node[glab] at ($(Cl.south)+(0,-0.58)$) {\groupeleven};
\end{tikzpicture}
\caption{Overview of the construction. The designated candidates $\winner,\loser$ are single nodes, boxed together on the left; literal, guard, and clause candidates are shown as families. Each labeled arrow is the edge group of the same index in \Cref{margin_groups}, and the self-loops ($\groupnine,\groupten,\groupeleven$) are the internal total orders within a family. Arrows incident to the $\winner,\loser$ box apply to both designated candidates; in particular $\groupseven$ bundles $\clause_k\to\winner$ and $\loser\to\clause_k$. Precise per-edge definitions and the conditional cases ($\groupone$ by quantifier, $\groupthree/\groupfour$ by clause membership) appear in \Cref{margin_groups}.}
\label{fig:gadget-overview}
\end{figure}

\subsection{Useful Definitions}\label{sec:construction:dfns}

Fix a formula $\Phi$ with $n$ variables and $m$ clauses, and let $G(\Phi)=(V(\Phi),\weight(\Phi))$ be the tournament produced by the construction in~\Cref{sec:construction:weights}.  Here, we introduce some notation specific to the construction that will be used in the proofs.

We begin with a property of an induced subtournament that captures whether some guard $\guard_i$ is still present while both of its corresponding literals, $\t_i$ and $\f_i$, are absent. If such a guard exists, the induced subtournament is ``illegal'', since both $\t_i,\f_i$ have been removed before its guard. The following definition returns the smallest such index, or $0$ if no such guard exists.

\begin{definition}[First uncovered guard] \label{def:firstx}
    Let $H$ be an induced subtournament of $G(\Phi)$. Define
        \[ \first(H) := \min\bigl\{ i\in\Vars : \guard_i\in V(H),\; \t_i, \f_i \notin V(H)\bigr\}, \]
    with $\first(H)=0$ if the set is empty.
\end{definition}

Recall that both \tqbf\ and \SSV\ process the quantified formula in order of the variables, setting one at a time. We define the \emph{prefix assignment} where the first $j-1$ variables are set, and the remaining variables are still unresolved. Formally,

\begin{definition}[Prefix assignments and chosen literals] \label{def:assignments}
    For $j\in\{1,\dots,n+1\}$, a \textbf{prefix assignment} of length $j-1$ is an assignment
        \[ \alpha_{< j}\in\{T,F\}^{j-1} \]
    to the first $j-1$ variables $x_1,\dots,x_{j-1}$ in the quantifier prefix. We denote the empty assignment by $\alpha_{<1}$. A prefix assignment of length $n$ is a full assignment.

    For a prefix assignment $\alpha_{< j}$ and each $i\in[j-1]$, define the chosen literal node as
    \[
        \lit_i(\alpha_{< j}) :=
        \begin{cases}
            \t_i, & \text{if } (\alpha_{< j})_i = T,\\[1mm]
            \f_i, & \text{if } (\alpha_{< j})_i = F.
        \end{cases}
    \]
    For a full assignment $\alpha=\alpha_{< n+1}$, we simply write $\lit_i(\alpha)$ instead of $\lit_i(\alpha_{<n+1})$.
\end{definition}

A prefix assignment naturally induces a subtournament of the construction: for each assigned variable, keep only the chosen literal (denoted by $\lit_i$), while for each unassigned variable, keep both literals. The subtournament additionally contains the designated candidates $\winner,\loser$, and all the guards ($\guard_i$'s) and clause ($\clause_k$'s) candidates.  

\begin{definition}[Prefix graph, Full-assignment graph] \label{def:prefix-subgraph}
    Let $\alpha_{< j}$ be a prefix assignment of length $j-1$. Define
    \[
        S_{\alpha_{< j}}
        := \{\winner,\loser\}
        \cup \{\clause_k : k\in\Clauses\}
        \cup \{\guard_i : i\in\Vars\}
        \cup \{\lit_i(\alpha_{< j}) : i\in[j-1]\}
        \cup \{\t_i,\f_i : i\in\Vars\setminus [j-1]\}.
    \]
    The induced subtournament
        \[ G_{\alpha_{< j}} := G(\Phi)_{S_{\alpha_{< j}}} \]
    is called the \textbf{prefix graph} associated with $\alpha_{< j}$. 
    For a full assignment $\alpha=\alpha_{< n+1}$, the prefix graph $G_{\alpha}$ is called the \textbf{full-assignment graph} associated with $\alpha$. 
\end{definition}

Next, we define a special type of induced subtournament, called an \emph{admissible graph}, used throughout the intermediate claims. These are the graphs the \SSV\ recursion reaches after a full-assignment graph (as it keeps deleting candidates). Intuitively, an admissible graph represents a partial assignment on a subset of variables, where each variable has \emph{at most} one of its two literals present. We further impose $\first(\cdot)=0$ to guarantee the assignment is legal, so that every present guard is paired with exactly one of its literals. The converse need not hold, as a present literal may have no corresponding guard.  Aside from literals and guards, it may also contain the designated candidates $\winner$ and $\loser$ along with an arbitrary subset of clauses.

\begin{definition}[Admissible graph] \label{def:admissible-graph}
    An induced subtournament $H$ of $G(\Phi)$ is an \textbf{admissible graph} if $\first(H)=0$ and, for every $i\in\Vars$, at most one of $\{\t_i,\f_i\}$ is in $V(H)$.
\end{definition}

For an admissible graph $H$, write $\lit_i(H)$ (or $\lit_i$ when the context makes it clear) for the unique literal in $V(H)\cap\{\t_i,\f_i\}$ when such a literal is present.

The above definitions capture different stages of the subtournaments during the \SSV\ recursion. First, while the variables are being set, we prove that it stays on prefix graphs until it reaches a full-assignment graph, where all variables are set. From there on, it continues on admissible graphs, which become smaller at every step. We note that the set of prefix graphs intersects the set of admissible graphs at the set of full-assignment graphs. \Cref{fig:assignment-graphs} shows examples of each type.

\begin{figure}[ht]
    \centering
    \begin{tikzpicture}[scale=1.1, transform shape,
        v/.style={circle, draw=black!80, thick, fill=white, minimum size=8mm, inner sep=0pt, font=\sffamily\bfseries}
    ]
    \node[v] (C)  at (0, 3.6) {$C$};
    \node[v] (D)  at (0, 1.8) {$D$};
    \node[v] (T1) at (2, 3.6) {$T_1$};
    \node[v] (F1) at (2, 1.8) {$F_1$};
    \node[v] (X1) at (2, 0.0) {$X_1$};
    \node[v] (T2) at (4, 3.6) {$T_2$};
    \node[v] (F2) at (4, 1.8) {$F_2$};
    \node[v] (X2) at (4, 0.0) {$X_2$};
    \node[v] (T3) at (6, 3.6) {$T_3$};
    \node[v] (F3) at (6, 1.8) {$F_3$};
    \node[v] (X3) at (6, 0.0) {$X_3$};
    \node[v] (L1) at (8, 1.8) {$L_1$};
    \node[v] (L2) at (8, 0.0) {$L_2$};
    \tikzset{
        area/.style={line width=0.6pt, rounded corners=15pt, fill opacity=0.18}
    }
    \begin{pgfonlayer}{background}
        \draw [area, fill=prefixOrange, draw=borderOrange]
            (-1.4, 5.0) -- (1.4, 5.0) -- (1.4, 3.1) -- (5.4, 3.1) -- 
            (5.4, 5.0) -- (7.3, 5.0) -- (7.3, 3.1) -- (9.4, 3.1) -- 
            (9.4, -1.4) -- (0.6, -1.4) -- 
            (0.6, 0.5) -- (-1.4, 0.5) -- cycle;
    \end{pgfonlayer}
    \begin{pgfonlayer}{layer1}
        \draw [area, fill=fullBlue, draw=borderBlue]
            (-1.1, 4.7) -- (1.1, 4.7) -- (1.1, 2.9) -- (9.1, 2.9) -- 
            (9.1, -1.1) -- (0.9, -1.1) -- 
            (0.9, 0.8) -- (-1.1, 0.8) -- cycle;
    \end{pgfonlayer}
    \begin{pgfonlayer}{layer2}
        \draw [area, fill=assignGreen, draw=borderGreen]
            (-0.8, 4.4) -- (0.8, 4.4) --
            (0.8, 2.8) -- (2.8, 2.8) --
            (2.8, -0.8) -- (1.2, -0.8) --
            (1.2, 2.6) -- (-0.8, 2.6) -- cycle;
        \draw [area, fill=assignGreen, draw=borderGreen]
            (3.2, 2.6) -- (8.8, 2.6) --
            (8.8, 1.0) -- (4.8, 1.0) --
            (4.8, -0.8) -- (3.2, -0.8) -- cycle;
    \end{pgfonlayer}
    \begin{scope}[shift={(8.0, 4.8)}]
        \draw[area, fill=prefixOrange, draw=borderOrange, rounded corners=2pt] (0,0) rectangle (0.5, 0.25);
        \node[anchor=west, font=\scriptsize\sffamily] at (0.6, 0.12) {Prefix Graph};
        \draw[area, fill=fullBlue, draw=borderBlue, rounded corners=2pt] (0,-0.4) rectangle (0.5, -0.15);
        \node[anchor=west, font=\scriptsize\sffamily] at (0.6, -0.27) {Full-Assignment Graph};
        \draw[area, fill=assignGreen, draw=borderGreen, rounded corners=2pt] (0,-0.8) rectangle (0.5, -0.55);
        \node[anchor=west, font=\scriptsize\sffamily] at (0.6, -0.67) {Admissible Graphs};
    \end{scope}
    \end{tikzpicture}
    \caption{\textbf{Prefix graph} (orange): includes nodes $\{\winner, \loser, \clause_1, \clause_2, \f_1, \f_2, \f_3, \t_3, \guard_1,\guard_2,\guard_3\}$, where $x_1=F$, $x_2=F$, and $x_3$ is not yet set.
    \textbf{Full-assignment graph} (blue): includes nodes $\{\winner, \loser, \clause_1, \clause_2, \f_1, \f_2, \f_3,\guard_1,\guard_2,\guard_3\}$, with exactly one literal per variable.
    \textbf{Admissible graphs} (green): two are shown that include $\{\winner, \f_1, \guard_1\}$ and $\{\f_2, \guard_2, \f_3, \clause_1\}$ respectively.}\label{fig:assignment-graphs}
\end{figure}

Finally, for an admissible graph $H$, we define a property that captures whether the partial assignment satisfies the clauses present in $H$. Recall a clause $\clause_k$ in $H$ is satisfied if there exists an $\lit_i$ such that $\lit_i\to\clause_k$ has a positive weight. If there exists an unsatisfied clause in $H$, we define $\firstU(H)$ as the clause with the smallest index. If all clauses are satisfied, $\firstU(H)=0.$ Formally,

\begin{definition}[First unsatisfied clause]
    \label{def:firstU}
    Let $H$ be an admissible graph. Define
    \[
    \firstU(H):=
    \min\Bigl\{\, k\in\Clauses :
    \clause_k\in V(H)
    \text{ and }
    \nexists i\in\Vars
    \text{ such that }
    \lit_i\text{ is defined and }\weight(\lit_i,\clause_k)>0
    \,\Bigr\},
    \]
    with $\firstU(H)=0$ if the set is empty.
\end{definition}
An admissible graph $H$ is \emph{SAT} if $\firstU(H)=0$, and \emph{UNSAT} otherwise.

\section{\pspace-Hardness of Simple Stable Voting}\label{sec:hardness}

In this section, we prove our main theorem, \pspace-completeness for winner determination in Simple Stable Voting. Here, by winner determination we mean the following decision problem: given a weighted tournament $G$ and a candidate $A$, decide whether $\SSV(G)=A$. We extend the proof to Stable Voting in~\Cref{sec:stable_voting}.

\begin{theorem}\label{thm:main_result}
    Winner determination for Simple Stable Voting is \pspace-complete.
\end{theorem}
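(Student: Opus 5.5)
Membership and hardness are proved separately.

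\textbf{Membership in \pspace.} I would use a depth-first evaluation of the recursive definition. To compute $\SSV(H)$ for an induced subtournament $H$, scan the ordered pairs $(A,B)$ in decreasing order of $\weight(A,B)$. For each pair, recursively compute $\SSV(H\setminus\{B\})$ and stop at the first pair that returns $A$. The recursion depth is at most $|V|$. Each frame stores only the current vertex set, as a bitmask, and the index of the pair being scanned, both of size polynomial in the input. Hence the total space is polynomial, although the running time is exponential.

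\textbf{Hardness.} I would reduce from \TQBF\ via $\Phi\mapsto(G(\Phi),\winner)$, which is computable in polynomial time. The core is a chain of lemmas about the recursion on $G(\Phi)$, proved bottom-up.

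First, I would handle \emph{admissible graphs}. By strong induction on $|V(H)|$, I would characterize $\SSV(H)$ for every admissible $H$: when $\winner$ and $\loser$ are both present, $\SSV(H)=\winner$ if $H$ is SAT and $\SSV(H)=\loser$ if $H$ is UNSAT. I would also determine the winner when one or both of them are absent, presumably a guard, literal or clause picked out by the total orders $\groupnine$–$\groupeleven$. The inductive step identifies every critical edge of $H$ from the hypothesis and shows that the maximum has the right tail.
\begin{itemize}
\item In the SAT case, I expect $\winner$ to win by deleting $\clause_{k}$ through a positive $\lit_i\to\clause_k$ path, or by deleting a guard.
\item In the UNSAT case, the unsatisfied clause $\clause_{\firstU(H)}$ beats $\winner$ via $\groupseven$, so $\loser$ should win through $\loser\to\clause_k$.
\end{itemize}
The high weight of $\groupone$ is harmless here because each present literal is protected by its guard, and deleting that literal would leave the guard uncovered.

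Second, I would handle \emph{illegal subtournaments}, meaning induced subtournaments with $\first(H)=i>0$. There the uncovered guard $\guard_i$ beats all other literals ($\groupfive$), all clauses ($\groupsix$), and both $\winner$ and $\loser$ ($\groupeight$). I would prove that the winner of such a graph is something other than the candidate that would make the corresponding deviation's critical edge exceed the intended one. The natural candidate is $\guard_i$ itself, or a winner reached through $\guard_i$. This keeps all deviations out of the top critical edges.

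Third, I would handle \emph{prefix graphs}. By backward induction on $j$ from $n+1$ down to $1$, I would show that $\SSV(G_{\alpha_{<j}})$ is $\winner$ exactly when $Q_jx_j\cdots Q_nx_n\,\varphi$ is true under $\alpha_{<j}$, and $\loser$ otherwise. The base case $j=n+1$ is the admissible-graph lemma. In the inductive step, the top critical edges of $G_{\alpha_{<j}}$ are the $\groupone$ edges into $\f_j,\t_j$, because every earlier variable's literal is guarded. Deleting $\f_j$ or $\t_j$ yields the prefix graphs for $\alpha_{<j}T$ and $\alpha_{<j}F$, whose winners are known by induction. Every other deletion gives either an illegal graph or a graph whose critical edge lies in a lower group.

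If $Q_j=\exists$, the $\winner$ edges to $\f_j,\t_j$ come first, so $\winner$ wins if either branch returns $\winner$. Otherwise the next edges are the $\loser$ edges to the same literals, and both branches return $\loser$. If $Q_j=\forall$, the $\loser$ edges come first, so $\loser$ wins if either branch returns $\loser$, and otherwise $\winner$ wins. Taking $j=1$ gives $\SSV(G(\Phi))=\winner$ if and only if $\Phi$ is true.

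The main obstacle is the exhaustive case analysis of critical edges in the admissible and illegal graphs. Every deletion must be checked, and sub-sub-tournaments with $\winner$ or $\loser$ removed must also be characterized. I would first pin down exactly what wins in admissible graphs missing $\winner$, $\loser$, or both, since these feed the critical edges of the guards and clauses. I would also strengthen the induction hypotheses as needed, for instance by stating the full elimination order, so that each step closes.
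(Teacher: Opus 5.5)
\textbf{Gap: the admissible-graph lemma at the base of your induction is false.} Your overall architecture matches the paper. Membership is by depth-first evaluation. Hardness is by backward induction over prefix graphs, with a guard lemma ruling out the $\group1$ edges into earlier literals and the four $\group1$ edges of $x_j$ implementing $Q_j$. The problem is your claim that every admissible $H$ with $\winner,\loser\in V(H)$ is won by $\winner$ whenever $H$ is SAT.

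Admissibility forbids a guard without its literal, but not a literal without its guard. So your justification that ``each present literal is protected by its guard'' fails exactly where it matters. A concrete counterexample:
\begin{enumerate}
\item Take one variable and the clause $C_1=(x_1)$, and let $H$ be induced on $\{\winner,\loser,\t_1,\clause_1\}$. It is admissible and SAT.
\item $H\setminus\{\t_1\}$ is the $3$-cycle $\winner\to\loser\to\clause_1\to\winner$, whose weakest edge is $\winner\to\loser$. So $\loser$ wins it, as your own UNSAT case predicts.
\item Hence $\loser\to\t_1$, a $\group1$ edge, is critical in $H$, and $\winner\to\t_1$ is not. The remaining critical edges ($\winner\to\loser$, $\loser\to\winner$, $\winner\to\clause_1$) are far lighter, so $\SSV(H)=\loser$.
\end{enumerate}
This graph is actually evaluated. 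It is $G_\alpha\setminus\{\guard_1\}$ for the satisfying full assignment, reached when the scan of $G_\alpha$ passes the $\group2$ pair $(\t_1,\guard_1)$. So your strong induction fails at this graph and cannot close as stated.

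\textbf{The missing idea is to separate guarded from unguarded literals.}
\begin{itemize}
\item With $\winner,\loser$ both present, the right criterion is whether every present clause is satisfied by a literal whose guard is present (the paired core).
\item Unguarded literals need a separate lemma: adding them to a paired graph won by $M\in\{\winner,\loser\}$ leaves $M$ the winner. Each $M\to\lit_i$ becomes a critical $\group1$ edge, while the heavier $\group1$ edges into guarded literals are non-critical by the guard lemma.
\item For the full-assignment base case you can avoid this entirely. It suffices that $\SSV(G_\alpha\setminus\{\guard_i\})\in\{\winner,\loser\}$, so that this critical edge lies in $-\group8$. That follows because whichever of $\winner,\loser$ wins the paired graph $G_\alpha\setminus\{\guard_i,\lit_i\}$ yields a critical $\group1$ edge into $\lit_i$, and every $\group1$ edge leaves $\winner$ or $\loser$.
\end{itemize}

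\textbf{Two further corrections.}
\begin{itemize}
\item In the SAT case, $\winner$ does not win by deleting a clause or a guard; those give critical edges in $-\group7$ and $-\group8$. It wins through the $\group12$ edge $\winner\to\loser$. That requires a separate fact: $\winner$ wins SAT admissible graphs without $\loser$. There plain SAT does suffice, because the dangerous $\loser\to\lit_i$ edges are absent.
\item Your treatment of illegal subtournaments should be the exact equivalence $\SSV(H)=\guard_i\iff\first(H)=i$ for every induced subtournament $H$. This must include non-admissible graphs in which later variables still have both literals, since the prefix step and every other lemma use it in that form.
\end{itemize}
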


\begin{proof}[Proof of~\Cref{thm:main_result}]
Membership in \pspace\ is clear from the recursive definition of \SSV. Given a weighted tournament $G$ on $N$ candidates, we can compute $\SSV(G)$ using a depth-first recursion: scan the ordered pairs $(A,B)$ in decreasing order of $\weight(A,B)$, recursively compute $\SSV(G\setminus\{B\})$, and return the first $A$ for which $\SSV(G\setminus\{B\})=A$. Each recursive call removes one candidate, so the recursion depth is at most $N$. At each level, it suffices to store the current subtournament, the current pair being scanned, and the returned candidate, all of which require only polynomially many bits. Thus, the recursion uses only polynomial space.

In the remainder of this proof, we show that, given a QBF formula $\Phi$, in the tournament $G(\Phi)$ produced by the construction in~\Cref{sec:construction:weights}, $\SSV(G(\Phi))=\winner$ if and only if $\Phi$ is true. Otherwise, $\SSV(G(\Phi))=\loser$.

The \SSV\ rule determines a winner by processing the edges in decreasing order of weight; hence, the first edges to be considered are from $\winner,\loser$ to the literals since they are in the largest weight group ($\group1$). It is thus sufficient to show that the winner of the subtournament without one of those literals is $\winner$ if the residual \QBF\ is true or $\loser$ otherwise, and that the residual tournament respects the \QBF\ encoding.

To encode the \QBF\ formula at every step of the \SSV\ recursion, we need to maintain a valid prefix-assignment such that the winner (of the induced subtournament) correctly encodes the truth of the residual \QBF.
As we establish in~\Cref{lem:guards}, for each variable $x_i$, the guard $\guard_i$ prevents the deletion of both $\t_i$ and $\f_i$ before the recursion reaches a full assignment; in particular, $\guard_i$ wins if both literals are removed, preventing $\winner$ or $\loser$ from winning. 

\begin{restatable}[Guard Lemma]{lemma}{XLemma} \label{lem:guards}
    Let $G$ be an induced subtournament of the constructed tournament $G(\Phi)$. Then, for $\is\in\Vars,$
    \[
    \SSV(G)=\guard_\is \iff \first(G)=\is.
\]
\end{restatable}

\Cref{lem:sat-C-wins,lem:unsat-D-wins} show that when the recursion reaches a full assignment, the winner is the same as the truth value of $\Phi$ (for that assignment). %

\begin{restatable}[If SAT, $\winner$ wins]{lemma}{SATLemma} \label{lem:sat-C-wins}
    Let $\alpha\in\{T,F\}^n$ be a full assignment to the variables of $\varphi$ and let $G_\alpha$ be the corresponding full-assignment graph  (\Cref{def:prefix-subgraph}). If $\alpha$ satisfies $\varphi$, then $\SSV(G_\alpha)=\winner$.
\end{restatable}

\begin{restatable}[If UNSAT, $\loser$ wins]{lemma}{UNSATLemma} \label{lem:unsat-D-wins}
    Let $G$ be an admissible subtournament of $G(\Phi)$ such that $\winner,\loser\in V(G)$. If at least one clause candidate present in $G$ is unsatisfied in $G$, equivalently, $\firstU(G)\neq 0$, then $\SSV(G) = D$. In particular, if $\alpha\in\{T,F\}^n$ is a full assignment that does not satisfy $\varphi$, then $\SSV(G_\alpha)=D$.
\end{restatable}

We prove \Cref{lem:guards,lem:sat-C-wins,lem:unsat-D-wins} in \Cref{subsec:gadget-proofs,subsec:winnerofafullassignmentCD}.

The proof proceeds by backward induction on prefix assignments, where the base case is a full assignment.
For a prefix assignment $\alpha_{< j}\in\{T,F\}^{j-1}$, let $\Phi_{\alpha_{<j}}$ denote the residual QBF obtained from $\Phi$ after fixing the first $j-1$ variables according to $\alpha_{<j}$. For $j\in\{1,\ldots,n+1\}$, consider the following statement:
    \[
        P(j): \text{ for every prefix assignment } \alpha_{< j} \text{ of length } j-1,\,
        \SSV(G_{\alpha_{< j}}) =
        \begin{cases}
            \winner & \text{ if $\Phi_{\alpha_{< j}}$ is true}\\
            \loser  & \text{ otherwise.}
        \end{cases}
    \]
    We prove $P(j)$ for all $j\in\{1,\ldots,n+1\}$ by backward induction.

    \paragraph{Base Case: $(j=n+1)$}
    When $j=n+1$, the prefix assignment is a full assignment. Let $\alpha=\alpha_{<n+1}$ and $G_\alpha$ be the full-assignment graph. If $\alpha$ satisfies $\varphi$, then by \Cref{lem:sat-C-wins}, $\SSV(G_\alpha)=\winner$. If $\alpha$ does not satisfy $\varphi$, then by \Cref{lem:unsat-D-wins}, $\SSV(G_\alpha)=\loser$. Thus $P(n+1)$ holds.

    \paragraph{Induction step.} Assume $P(j+1)$ holds for some $0< j\le n$; we show that $P(j)$ holds. Fix an arbitrary prefix assignment $\alpha:=\alpha_{< j}$, and let $G_\alpha$ be the corresponding prefix graph. The residual \QBF\ has variables $x_1\ldots x_{j-1}$ set according to $\alpha$ and $Q_jx_j$ is the first quantified variable. Hence, we need to show that the winner of $G_\alpha$ correctly encodes the truth of the residual formula $\Phi_\alpha$ based on $Q_j$. 

    To determine $\SSV(G_\alpha)$, we must identify the largest critical edge of $G_\alpha$.  Recall that the \SSV\ rule processes the edges in decreasing order of weight. We show that (i) the edges $\winner,\loser\to \t_i,\f_i$ are not critical edges for all $i<j$ (these are the heaviest edges in $G_\alpha$, since weight decreases as the index $i$ increases), and then (ii) the largest critical edge is one of $\winner\to\t_j$ and $\winner\to\f_j$ if $\Phi_\alpha$ is \true, and the largest critical edge is one of $\loser\to\t_j,\f_j$ if $\Phi_\alpha$ is \false, thus producing the correct winner in each case.

    \begin{enumerate}[(i)]
        \item For every $i<j$, exactly one of $\{\t_i,\f_i\}$ is present in $G_\alpha$; let that be $\lit_i$. Observe that none of the edges $\winner\to\lit_i$ and $\loser\to\lit_i$ are critical, since deleting $\lit_i$ makes $\first(G_\alpha\setminus\{\lit_i\})=i$, and thus, by \Cref{lem:guards}, $\SSV(G_\alpha\setminus\{\lit_i\})=\guard_i$ (that is, the critical edge produced by deleting $\lit_i$ is $\guard_i\to\lit_i$).

        \item The next edges in group $\group1$ to be considered are the four from $\winner,\loser$ to $\f_j,\t_j$, whose order depends on the quantifier $Q_j$. Let $\alpha^T$ and $\alpha^F$ be the two extensions of $\alpha$ setting $x_j=T$ and $x_j=F$, respectively. Then $G_\alpha\setminus\{\f_j\}$ and $G_\alpha\setminus\{\t_j\}$ are the prefix graphs corresponding to $\alpha^T$ and $\alpha^F$, respectively. Since these extensions have length $j$, the induction hypothesis applies to both graphs, so each is won by either $\winner$ or $\loser$ according to the truth value of the corresponding residual formula. Thus, among the four edges from $\winner$ and $\loser$ to $\f_j$ and $\t_j$, exactly two are critical, one for each deletion. By part (i), all earlier group $\group1$ edges have already been ruled out, so the largest critical edge of $G_\alpha$ is the first critical edge among these four.
        \begin{itemize}
            \item \textbf{(Case $Q_j=\exists$)} Recall that in group $\group1$, if the quantifier is $\exists$, then the order of the edges between $\winner,\loser$ and $\t_j,\f_j$ is 
            \[
                \winner\to\f_j\quad
                > \quad\winner\to\t_j\quad
                > \quad\loser\to\f_j\quad
                > \quad\loser\to\t_j
            \]

            Since $Q_j=\exists$, $\Phi_\alpha$ is true exactly when at least one of the two graphs above is won by $\winner$. In that case, the largest critical edge among the four is a $\winner$-edge. If both graphs are won by $\loser$, then neither $\winner$-edge is critical, so the largest critical edge is $\loser\to\f_j$. Thus $\SSV(G_\alpha)=\winner$ if and only if $\Phi_\alpha$ is true, and otherwise $\SSV(G_\alpha)=\loser$.

            \item \textbf{(Case $Q_j=\forall$)} Recall that in group $\group1$, if the quantifier is $\forall$, then the order of the edges between $\winner,\loser$ and $\t_j,\f_j$ is 
            \[
                \loser\to\f_j\quad
                > \quad\loser\to\t_j\quad
                > \quad\winner\to\f_j\quad
                > \quad\winner\to\t_j
            \]
            Since $Q_j=\forall$, $\Phi_\alpha$ is true exactly when both graphs above are won by $\winner$. If at least one of the two graphs is won by $\loser$, then the largest critical edge among the four is a $\loser$-edge. If both graphs are won by $\winner$, then neither $\loser$-edge is critical, so the largest critical edge is $\winner\to\f_j$. Thus $\SSV(G_\alpha)=\winner$ if and only if $\Phi_\alpha$ is true, and otherwise $\SSV(G_\alpha)=\loser$.
        \end{itemize}

    \end{enumerate}
    This proves $P(j)$ and completes the induction step.
    
    \bigskip
    
    Taking $j=1$, the prefix assignment is empty, so $G_{\alpha_{<1}}=G(\Phi)$. Therefore, by $P(1)$, $\SSV(G(\Phi))=\winner$ if and only if $\Phi$ is \true, and otherwise $\SSV(G(\Phi))=\loser$. This concludes the proof of~\Cref{thm:main_result}.
\end{proof}

\subsection{Correctness of Guard and Clause Gadgets}\label{subsec:gadget-proofs}

In this subsection, we prove the two local gadget lemmas underlying the construction. The first is the guard lemma, previously stated as one of the three core lemmas in the proof of \Cref{thm:main_result}. It shows that the guard nodes enforce a valid partial assignment throughout the recursion. The second is the clause lemma, which is used later in the proofs of \Cref{lem:sat-C-wins,lem:unsat-D-wins}. It characterizes what happens when an admissible graph is UNSAT but \loser\ is not present. In that case, the winner is the first unsatisfied clause.

These two lemmas play closely related roles. Each describes a local gadget that takes over precisely when the literals that would otherwise block it have been removed. For the guard gadget, $\guard_i$ wins when both literals $\t_i$ and $\f_i$ are absent while $\guard_i$ is still present. For the clause gadget, $\clause_k$ wins when no literal is still present in the graph that satisfies it. This parallel is also reflected in the edge structure of the gadget. The literals $\t_i,\f_i$ point to their own guard $\guard_i$, while $\guard_i$ points to all other literals; similarly, the literals that satisfy $\clause_k$ point to $\clause_k$, while $\clause_k$ points to the literals that do not satisfy it. Thus, the guard gadget enforces consistency of the represented assignment, while the clause gadget detects unsatisfiability.

At a technical level, the proofs from this point on follow a similar pattern. Since on uniquely weighted tournaments the \SSV\ winner is determined by the maximum critical edge, it suffices to analyze the critical edge produced by each possible deletion. Thus, rather than reasoning about all edges of the tournament at once, we consider each candidate that could be removed, determine the winner of the resulting subtournament, identify the corresponding critical edge, and then compare these edges to find the maximum one. The proof of \Cref{thm:main_result} avoided most of this because the relevant comparisons there occurred entirely among group $\group1$ edges. In contrast, the lemmas below rely heavily on this critical-edge analysis. \Cref{fig:gadget-edges} shows, for each candidate, all incident edges and the groups they belong to, a useful reference for the following proofs.

\begin{figure}[ht]
    \centering
    \tikzset{
         gnode/.style={circle, minimum size=3.4em, inner sep=1pt, font=\boldmath\small},
        winnernode/.style ={gnode, draw=winnerblue!70!black,   fill=winnerblue!10},
        losernode/.style   ={gnode, draw=loserred!70!black,    fill=loserred!10},
        literalnode/.style ={gnode, draw=literalgreen!70!black, fill=literalgreen!10},
        clausenode/.style  ={gnode, draw=clauseorange!80!black, fill=clauseorange!12},
        guardnode/.style   ={gnode, draw=guardpurple!75!black,  fill=guardpurple!10},
        edge/.style={->, >={{Latex[length=2mm, width=1.5mm]}}, thick},
        weight/.style={font=\small\itshape, fill=white, inner sep=1pt, rounded corners=1pt},
        edge1/.style={edge, draw=gOne},
        edge2/.style={edge, draw=gTwo},
        edge3/.style={edge, draw=gThree},
        edge4/.style={edge, draw=gFour},
        edge5/.style={edge, draw=gFive},
        edge6/.style={edge, draw=gSix},
        edge7/.style={edge, draw=gSeven},
        edge8/.style={edge, draw=gEight},
        edge9/.style={edge, draw=gNine},
        edge10/.style={edge, draw=gTen},
        edge11/.style={edge, draw=gEleven},
        edge12/.style={edge, draw=gTwelve},
    }

    \begin{minipage}{0.48\textwidth}
        \centering
        \begin{tikzpicture}[scale=0.72, transform shape]
            \node[winnernode] (V) at (0,0) {$\winner$};
            \draw[edge7] (180:2.2) -- (V) node[at start, left] {$\clause_k$} node[midway, weight] {\groupseven};
            \draw[edge8] (135:2.2) -- (V) node[at start, above left] {$\guard_i$} node[midway, weight] {\groupeight};
            \draw[edge1] (V) -- (45:2.2) node[at end, above right] {$\f_i, \t_i$} node[midway, weight] {\groupone};
            \draw[edge12] (V) -- (0:2.2) node[at end, right] {$\loser$} node[midway, weight] {\grouptwelve};
        \end{tikzpicture}
        \caption*{Winner Node ($\winner$)}
    \end{minipage}
    \hfill
    \begin{minipage}{0.48\textwidth}
        \centering
        \begin{tikzpicture}[scale=0.72, transform shape]
            \node[losernode] (V) at (0,0) {$\loser$};
            \draw[edge12] (180:2.2) -- (V) node[at start, left] {$\winner$} node[midway, weight] {\grouptwelve};
            \draw[edge8] (135:2.2) -- (V) node[at start, above left] {$\guard_i$} node[midway, weight] {\groupeight};
            \draw[edge7] (V) -- (45:2.2) node[at end, above right] {$\clause_k$} node[midway, weight] {\groupseven};
            \draw[edge1] (V) -- (0:2.2) node[at end, right] {$\f_i, \t_i$} node[midway, weight] {\groupone};
        \end{tikzpicture}
        \caption*{Loser Node ($\loser$)}
    \end{minipage}

    \vspace{1em}

    \begin{minipage}{0.32\textwidth}
        \centering
        \begin{tikzpicture}[scale=0.72, transform shape]
            \node[literalnode] (V) at (0,0) {$\t_i/\f_i$};
            \draw[edge1] (180:2.5) -- (V) node[at start, left] {$\winner, \loser$} node[midway, weight] {\groupone};
            \draw[edge9, <->]  (V) -- (144:2.5)  node[at end, above left] {$\{\f_{j \neq i}, \t_{j \neq i}\}$} node[midway, weight] {\groupnine};
            \draw[edge3]  (V) -- (108:2.5)  node[at end, above left] {$\clause_k : i \in C_k$} node[midway, weight] {\groupthree};
            \draw[edge4] (72:2.5) -- (V)   node[at start, above right] {$\clause_k : i \notin C_k$} node[midway, weight] {\groupfour};
            \draw[edge5] (36:2.5) -- (V) node[at start, above right] {$\guard_{j \neq i}$} node[midway, weight] {\groupfive};
            \draw[edge2] (V) -- (0:2.5) node[at end, right] {$\guard_{i}$} node[midway, weight] {\grouptwo};
        \end{tikzpicture}
        \caption*{Literal Node ($\t_i / \f_i$)}
    \end{minipage}
    \hfill
    \begin{minipage}{0.32\textwidth}
        \centering
        \begin{tikzpicture}[scale=0.72, transform shape]
            \node[clausenode] (V) at (0,0) {$\clause_k$};
            \draw[edge11] (180:2.5) -- (V) node[at start, left] {$\clause_{k' < k}$} node[midway, weight] {\groupeleven};
            \draw[edge7] (V) -- (150:2.5) node[at end, above left] {$\winner$} node[midway, weight] {\groupseven};
            \draw[edge7] (120:2.5) -- (V) node[at start, above left] {$\loser$} node[midway, weight] {\groupseven};
            \draw[edge6] (90:2.5) -- (V) node[at start, above] {$\guard_{i}$} node[midway, weight] {\groupsix};
            \draw[edge3] (60:2.5) -- (V) node[at start, above right] {$\{\t/\f\}_{i \in C_k}$} node[midway, weight] {\groupthree};
            \draw[edge4] (V) -- (30:2.5) node[at end, right] {$\{\t/\f\}_{i \notin C_k}$} node[midway, weight] {\groupfour};
            \draw[edge11] (V) -- (0:2.5) node[at end, right] {$\clause_{k' > k}$} node[midway, weight] {\groupeleven};
        \end{tikzpicture}
        \caption*{Clause Node ($\clause_k$)}
    \end{minipage}
    \hfill
    \begin{minipage}{0.32\textwidth}
        \centering
        \begin{tikzpicture}[scale=0.72, transform shape]
            \node[guardnode] (V) at (0,0) {$\guard_{i}$};
            \draw[edge10] (180:2.2) -- (V) node[at start, left] {$\guard_{j<i}$} node[midway, weight] {\groupten};
            \draw[edge8] (V) -- (144:2.2) node[at end, left] {$\{\winner, \loser\}$} node[midway, weight] {\groupeight};
            \draw[edge5] (V) -- (108:2.2) node[at end, above left] {$\{\f_{j \neq i}, \t_{j \neq i}\}$} node[midway, weight] {\groupfive};
            \draw[edge2] (72:2.2) -- (V) node[at start, above right] {$\t_i,\f_i$} node[midway, weight] {\grouptwo};
            \draw[edge6] (V) -- (36:2.2) node[at end, above right] {$\clause_k$} node[midway, weight] {\groupsix};
            \draw[edge10] (V) -- (0:2.2) node[at end, right] {$\guard_{j>i}$} node[midway, weight] {\groupten};
        \end{tikzpicture}
        \caption*{Guard Node ($\guard_i$)}
    \end{minipage}

    \caption{Edges per node type (refer to \Cref{margin_groups,fig:gadget-overview} for more details).}
    \label{fig:gadget-edges} 
\end{figure}

 \subsubsection{The Guard Lemma}

\XLemma*
    \begin{proof}
        We will prove the lemma by induction on $t \coloneqq |V(G)|.$

        \paragraph{Base case ($t=1$).}
        If $V(G)=\{\guard_\is\}$, for some $\is$, then $\first(G)=\is$ and $\SSV(G)=\guard_\is$. If $V(G)=\{u\},$ where $u\neq \guard_\is\;\forall \is$, $\SSV(G)=u$ with $\first(G)=0$. Thus, the claim holds.

        \paragraph{Induction hypothesis.}
        Fix $t\ge 2$ and assume that for every induced subtournament $H$ of $G(\Phi)$ with $|V(H)|< t$, $\forall \guard_\is\in V(H)$ with $\is>0,$

            \[ \SSV(H)=\guard_\is \iff \first(H)=\is. \]

        \paragraph{Induction step.} Let $G$ be an induced subtournament with $|V(G)|=t$. We prove both directions.

        \paragraph{($\Leftarrow$) $\first(G)=\is>0 \Rightarrow \SSV(G)=\guard_\is$.} Since $\first(G)=\is$ we have that $\guard_\is\in V(G)$ and $\t_\is,\f_\is\notin V(G)$. We show that the largest critical edge of $G$ is from $ \guard_\is$, and thus $\SSV(G)=\guard_\is$. We break the proof into two steps.  First, we show that among the critical edges produced by deleting candidates $u\neq \guard_\is$, the largest one has the form $\guard_\is\to u$. Then, we show that the critical edge produced by deleting $\guard_\is$, namely the edge $\SSV(G\setminus\{\guard_\is\})\rightarrow\guard_\is$, is smaller than the edge $\guard_\is\rightarrow u$. %
        
        Toward our first claim, we focus on the critical edges produced by nodes in the set $u\in V(G\setminus\{\guard_\is\})$. First, observe that the residual graph $G\setminus\{u\}$ still has $0<\first(G\setminus\{u\})\le\is,$ since $\guard_\is\in V(G\setminus\{u\})$ and still $\t_\is,\f_\is\notin V(G\setminus\{u\}).$ However, a deletion of a node may cause a guard $\guard_j$ with a smaller index to be without both of its literals. We consider the following two cases:

        \begin{enumerate}[
                align=left,
                itemindent=0pt,
                labelindent=0pt,
                labelwidth=\casesw,
                labelsep=0.4em,
                leftmargin=\dimexpr\casesw+0.5em\relax
                ]
            \item[\textbf{Case 1:}]$\first(G\setminus\{u\})=\is$. Hence, by the induction hypothesis applied to $G\setminus\{u\}$ we get $\SSV(G\setminus\{u\})=\guard_\is,$  and thus $\guard_\is \rightarrow u$ is a critical edge of $G$. The weight of that critical edge in $G$ is bounded below by $\min_u\{\weight(\guard_\is,u)\}\ge-\group10$,  given the edges adjacent to $\guard_\is$ (\Cref{fig:gadget-edges}). (The only incoming edges to $\guard_\is$ are from $\guard_j$ with $j<\is$, as $\t_\is,\f_\is\notin G$ is given by the assumption that $\first(G)=\is$.)
            \item[\textbf{Case 2:}]
            $0<\first(G\setminus\{u\})=j<\is$. This can only happen if there exists a $j$ such that exactly one of $\f_j$ or $\t_j$ is in $V(G)$ and $\guard_j\in V(G)$. Deleting the remaining literal introduces a $\first$ with a smaller index. By the induction hypothesis applied to $G\setminus\{u\}$, we get $\SSV(G\setminus\{u\})=\guard_j.$ Then, the critical edge in $G$ is $\guard_j \rightarrow \f_j/\t_j$ that has weight in group $-\group_2.$
        \end{enumerate}

        Therefore, all critical edges arising from \textbf{Case 1} have strictly larger weight than every critical edge arising from \textbf{Case 2}. Moreover, if \textbf{Case 2} occurs for some $j<\is$ with critical edge $\guard_j\to \lit_j$ ($-\group2$), then deleting $\guard_j$ instead falls under \textbf{Case 1}, the edge $\guard_\is\to \guard_j$($-\group10>-\group2$) is also critical. Hence, the largest critical edge in $G$ produced by removing any node $u\neq \guard_\is$ is produced in \textbf{Case 1}, and has the form $\guard_\is\rightarrow u$.
        
        \medskip
        It remains to compare the largest critical edge that we just showed with the critical edge associated with deleting $\guard_\is$, namely $w\to \guard_\is$ where $w=\SSV(G\setminus\{\guard_\is\})$. Since $\first(G)=\is$, every present guard $\guard_k$ with $k<\is$ has at least one of its literals in $G$. Deleting $\guard_\is$ does not change this for any $k<\is$. Therefore, $\first(G\setminus\{\guard_\is\}) \in \{0\}\cup\{j:j>\is\}.$ By the induction hypothesis, $w\neq \guard_k$ for every $0<k<\is$. We analyze the two possible cases for $\first(G\setminus\{\guard_\is\})$.

        \begin{enumerate}[
                align=left,
                itemindent=0pt,
                labelindent=0pt,
                labelwidth=\casesw,
                labelsep=0.4em,
                leftmargin=\dimexpr\casesw+0.5em\relax
                ]
            \item[\textbf{Case A:}]  $\first(G\setminus\{\guard_\is\})=0$. Then, by the induction hypothesis, $w$ is not a guard node%
            . Since $\t_{i^\ast},\f_{i^\ast}\notin V(G)$, the edge $w\to \guard_{i^\ast}$ lies in one of the groups $-\group5$, $-\group6$, or $-\group8$. Hence its weight is at most $-\group8$. Since we already identified a critical edge of weight at least $-\group10$ and $-\group10>-\group8$, the edge $w\to \guard_{i^\ast}$ cannot be the maximum critical edge.
            
            \item[\textbf{Case B:}] $\first(G\setminus\{\guard_\is\})=k>\is$. Then, by the induction hypothesis, $w=\guard_k$, so the critical edge associated with deleting $\guard_\is$ is $\guard_k\to \guard_\is$, which lies in group $-\group10$.

            Now consider deleting $\guard_k$ instead. Since $k>\is$, the guard $\guard_\is$ remains the first unguarded guard in $G\setminus\{\guard_k\}$. Hence, by \textbf{Case 1} above, the edge $\guard_\is\to \guard_k$ is also critical, and this edge lies in group $\group10$. Therefore, the critical edge produced by deleting $\guard_\is$ cannot be the largest critical edge of $G$.
        \end{enumerate}

        Thus, in all cases, the largest critical edge of $G$ is produced by deleting some vertex $u\neq \guard_\is$ and has the form $\guard_\is\to u$. Therefore $\SSV(G)=\guard_\is$.

        \paragraph{($\Rightarrow$) $\SSV(G)=\guard_\is \Rightarrow \first(G)=\is$.} Suppose for the sake of contradiction that $\first(G)=j\neq \is.$

        \begin{enumerate}[
                align=left,
                itemindent=0pt,
                labelindent=0pt,
                labelwidth=\casesw,
                labelsep=0.4em,
                leftmargin=\dimexpr\casesw+0.5em\relax
                ]
            \item[\textbf{Case 1:}]  $\first(G)=j>0$ with $j\neq \is$. By the direction just proved $(\Leftarrow)$, applied with index $j$, we get $\SSV(G)=\guard_j$, contradicting $\SSV(G)=\guard_\is$. Hence, this case is impossible.
            \item[\textbf{Case 2:}] $\first(G)=0$. Since $\SSV(G)=\guard_\is$, the largest critical edge in $G$ must be of the form $\guard_\is\rightarrow u$ for some $u\neq \guard_\is$. Applying the induction hypothesis to $G\setminus\{u\}$, we get $\first(G\setminus\{u\})=\is$. Since $\first(G)=0$ but $\first(G\setminus\{u\})=\is$, deleting $u$ must have removed the last present literal for $\guard_\is$. Hence $u\in\{\t_\is,\f_\is\}$, and the critical edge $\guard_\is\rightarrow u$ has weight in group $-\group2$.
    \end{enumerate}

        Now consider the critical edge associated with deleting $\guard_\is$. Since $\first(G)=0$, every guard present in $G$ has at least one of its literals present. Removing $\guard_\is$ preserves this property for all remaining guards, and therefore $\first(G\setminus\{\guard_\is\})=0$. By the induction hypothesis, $\SSV(G\setminus\{\guard_\is\})$ is not a guard node. Hence, by the edges adjacent to $\guard_\is$, $\SSV(G\setminus\{\guard_\is\})\rightarrow\guard_\is$ has weight at least $-\group5$, and $-\group5>-\group2$. Therefore, the critical edge that removes $\guard_\is$ is larger than $\guard_\is\rightarrow u$, contradicting the assumption that $\SSV(G)=\guard_\is$.
        
        \medskip
        This completes the proof of $\SSV(G)=\guard_\is \iff \first(G)=\is$ for $\is > 0.$
    \end{proof}

\subsubsection{The Clause Lemma}

Analogously to the guard nodes, a clause $\clause_k$ wins if no present literal satisfies it. The following lemma shows the winner is the \emph{first} unsatisfied clause, and if every present clause is satisfied ($\firstU(\cdot)=0$), no clause candidate wins.

\begin{lemma}[Clause Lemma]
\label{lem:clauses}
Let $G$ be an admissible graph as in \Cref{def:admissible-graph}, with $\loser\notin V(G)$, and let $\firstU$ be as in \Cref{def:firstU}. Then, for every clause candidate $\clause_\ks\in V(G)$,
\[
    \SSV(G)=\clause_\ks \iff \firstU(G)=\ks.
\]

    \begin{proof}
    We prove the claim by induction on $t\coloneqq |V(G)|$.
    
    Recall that an admissible graph may contain $\winner$, literals, guards, and clauses, contains at most one literal from each pair $\{\t_i,\f_i\}$, and satisfies $\first(G)=0$. In this lemma we also assume that $\loser\notin V(G)$.
    
    \paragraph{Base case ($t=1$).}
    If $G=\{\clause_\ks\}$ for some $\ks\in[m]$, then $\SSV(G)=\clause_\ks$. Since no literal is present, $\clause_\ks$ is unsatisfied, and hence $\firstU(G)=\ks$. If $G=\{u\}$ for some $u\neq \clause_k$ for all $k\in[m]$, then $\firstU(G)=0$ and $\SSV(G)=u\neq \clause_k$ for every $k\in[m]$. Hence, the statement holds.
    
    \paragraph{Induction hypothesis.}
    Fix $t\ge 2$ and assume that the lemma holds for every admissible graph $G'$ with $\loser\notin V(G')$ and $|V(G')|<t$.
    
    \paragraph{Induction step.}
    Let $G$ be an admissible graph with $\loser\notin V(G)$ and $|V(G)|=t$. We show that, for every $\ks\in[m]$ such that $\clause_\ks\in V(G)$,
    \[
        \SSV(G)=\clause_\ks \iff \firstU(G)=\ks.
    \]

    \paragraph{($\Rightarrow$) If $\SSV(G)=\clause_\ks$, then
    $\firstU(G)=\ks$.} Assume $\SSV(G)=\clause_\ks$. Suppose, \fsoc, that $\firstU(G)\neq \ks$. Since $\SSV(G)=\clause_\ks$, the largest critical edge of $G$ has the form $\clause_\ks\to u$ for some $u\neq \clause_\ks$ such that $\SSV(G\setminus\{u\})=\clause_\ks.$ Since the winner of $G\setminus\{u\}$ is a clause candidate, \Cref{lem:guards} implies that $\first(G\setminus\{u\})=0$. Hence $G\setminus\{u\}$ is an admissible graph, and by the induction hypothesis, $\firstU(G\setminus\{u\})=\ks.$
    
    We consider all possible cases for $\firstU(G)$: either $\firstU(G)=0$, $\firstU(G)=k>\ks$, or $\firstU(G)=k<\ks$. In each case, we show a critical edge $a\to b$, with $a\neq\clause_\ks$, whose weight is larger than that of $\clause_\ks\to u$, contradicting the assumption that $\SSV(G)=\clause_\ks$.
    
    \begin{enumerate}[
            align=left,
            itemindent=0pt,
            labelindent=0pt,
            labelwidth=\casesw,
            labelsep=0.4em,
            leftmargin=\dimexpr\casesw+0.5em\relax
            ]
        \item[\textbf{Case 1:}] $\firstU(G)=0$. Since $\firstU(G\setminus\{u\})=\ks$ and $\firstU(G)=0$, the clause $\clause_\ks$ is satisfied in $G$, but becomes the first unsatisfied clause in $G\setminus\{u\}$. Therefore, $u$ must be the unique present literal satisfying $\clause_\ks$. Hence, the only critical edge out of $\clause_\ks$ is $\clause_\ks\to u$ with weight in group $-\group3$ (any other possible $u$ would not produce a critical edge as, by the inductive hypothesis, $\clause_\ks$ only wins in the subtournament if and only if it is $\firstU$).
    
        Now consider $G\setminus\{\clause_\ks\}$. Deleting a clause does not change the set of literals, hence it does not change which remaining clauses are satisfied or unsatisfied. Thus, $\firstU(G\setminus\{\clause_\ks\})=0$ and $\first(G\setminus\{\clause_\ks\})=0$. Let $w=\SSV(G\setminus\{\clause_\ks\})$ be the winner of the residual graph. By the induction hypothesis, $w$ is not a clause candidate, and  by \Cref{lem:guards}, it is not a guard candidate. Hence, $w$ is either $\winner$, if $\winner\in V(G)$, or a literal.
        
        If $w=\winner$, the critical edge produced in $G$, $\winner\to\clause_\ks$, is in group $-\group7.$ If $w=\lit_i\in\{\t_i,\f_i\}$, then the critical edge produced in $G$, $\lit_i\to\clause_\ks$, is in $\group3$ if $\lit_i$ satisfies $\clause_\ks$, and in group $-\group4$ otherwise. In either case, $w\to\clause_\ks$ has weight at least $-\group4$. Since $(\group3>-\group7>)-\group4>-\group3$, the critical edge $w\to\clause_\ks$ has a larger weight than the only critical edge out of $\clause_\ks$, $\clause_\ks\to u$, which is in $-\group3$. This contradicts $\SSV(G)=\clause_\ks$.

        \item[\textbf{Case 2:}] $\firstU(G)=k>\ks$, so $\clause_\ks$ is satisfied in $G$. As in Case~1, since $\firstU(G\setminus\{u\})=\ks$ and $\clause_\ks$ is satisfied in $G$, the only way deleting $u$ makes $\clause_\ks$ the first unsatisfied clause is for $u$ to be the unique literal satisfying $\clause_\ks$. Hence the critical edge $\clause_\ks\to u$ lies in group $-\group3$.
        
        Now consider $G\setminus\{\clause_\ks\}$. Deleting a clause does not change the set of literals, so it does not change which clauses are satisfied; since $\clause_k$ remains present and is still the first unsatisfied clause, we have $\firstU(G\setminus\{\clause_\ks\})=k$ and $\first(G\setminus\{\clause_\ks\})=0$. By the induction hypothesis, $\SSV(G\setminus\{\clause_\ks\})=\clause_k$, so $\clause_k\to\clause_\ks$ is a critical edge of $G$. Since $k>\ks$, this edge lies in group $-\group11$, and $-\group11>-\group3$. Thus $\clause_k\to\clause_\ks$ is larger than $\clause_\ks\to u$, contradicting the choice of $\clause_\ks\to u$ as the largest
        critical edge of $G$.

        \item[\textbf{Case 3:}] $\firstU(G)=k<\ks$. As in the previous case, consider $G\setminus\{\clause_\ks\}$. Deleting a clause does not change the set of literals, so it does not change which clauses are satisfied; since $\clause_k$ remains present and is still the first unsatisfied clause, we have $\firstU(G\setminus\{\clause_\ks\})=k$ and $\first(G\setminus\{\clause_\ks\})=0$. By the induction hypothesis, $\SSV(G\setminus\{\clause_\ks\})=\clause_k$. Thus $\clause_k\to\clause_\ks$ is a critical edge of $G$. Since $k<\ks$, this edge lies in group $\group11$. We consider two subcases according to whether $\clause_\ks$ is satisfied in $G$.
        
        \noindent\textbf{Subcase 3a:} $\clause_\ks$ is unsatisfied in $G$. Since $\firstU(G\setminus\{u\})=\ks$ by the inductive hypothesis, the first unsatisfied clause increases from $k$ in $G$ to $\ks$ in $G\setminus\{u\}$. Thus $u$ must be $\clause_k$; otherwise $\clause_k$ remains present and unsatisfied, so the first unsatisfied clause cannot become $\ks$. Hence the critical edge $\clause_\ks\to u$ is $\clause_\ks\to\clause_k$, which has weight in group $-\group11$.  However, we have already shown that $\clause_k\to\clause_\ks$ is a critical edge of $G$, and this edge lies in group $\group11$. Since $\group11>-\group11$, this gives a critical edge of larger weight than $\clause_\ks\to u$, a contradiction.
        
        \noindent\textbf{Subcase 3b:} $\clause_\ks$ is satisfied in $G$. Then there is no $u$ such that $\firstU(G\setminus\{u\})=\ks$. Notice that a literal $\lit_i$ must be removed to make $\clause_\ks$ unsatisfied. However, even if such a literal is removed, the clause $\clause_k$ remains present and unsatisfied, so the first unsatisfied clause cannot be $\clause_\ks$. Removing any other candidate does not make $\clause_\ks$ unsatisfied. This contradicts the fact that $\firstU(G\setminus\{u\})=\ks$.
    \end{enumerate}
    
    All cases lead to contradictions. Therefore, $\firstU(G)=\ks$.
    
    \paragraph{($\Leftarrow$) If $\firstU(G)=\ks$ then $\SSV(G)=\clause_\ks$.}
    Assume $\firstU(G)=\ks$. We show that the largest critical edge of $G$ has the form $\clause_\ks\to u$ for some $u\neq \clause_\ks$. We consider all critical edges produced by deleting candidates in $V(G)\setminus \{\clause_\ks\}$, and then compare them with the critical edge produced by deleting $\clause_\ks$. 
    
    \begin{enumerate}
        \item \textbf{(Removing $\clause_k$ with $k\neq \ks$.)} Deleting $\clause_k$ does not change the set of literals, hence it does not change which remaining clauses are satisfied or unsatisfied. Therefore, $\firstU(G\setminus\{\clause_k\})=\ks$ and $\first(G\setminus\{\clause_k\})=0$. By the induction hypothesis, $\SSV(G\setminus\{\clause_k\})=\clause_\ks$. Hence, $\clause_\ks\to\clause_k$ is a critical edge of $G$, with weight in group $\pm\group11$, depending on the index of $\clause_k.$
    
        \item \textbf{(Removing $\guard_i$.)} Since $G$ is an admissible graph, deleting $\guard_i$ preserves $\first=0$. Moreover, deleting a guard does not change which clauses are satisfied, so $\firstU(G\setminus\{\guard_i\})=\ks$. By the induction hypothesis, $\SSV(G\setminus\{\guard_i\})=\clause_\ks$. Hence, $\clause_\ks\to\guard_i$ is a critical edge of $G$, with weight in group $-\group6$.
    
        \item \textbf{(Removing $\winner$.)} If $\winner\in V(G)$, then deleting $\winner$ preserves both $\first=0$ and $\firstU=\ks$. By the induction hypothesis, $\SSV(G\setminus\{\winner\})=\clause_\ks$. Hence, $\clause_\ks\to\winner$ is a critical edge of $G$, with weight in group $\group7$.
    
        \item \textbf{(Removing $\lit_i$.)} We consider two cases, depending on the existence of $\guard_i.$
    
        \textbf{Case A:} If $\guard_i\in V(G)$ then $\first(G\setminus\{\lit_i\})=i$. By \Cref{lem:guards}, $\SSV(G\setminus\{\lit_i\})=\guard_i$. Hence, the critical edge produced by deleting $\lit_i$ is $\guard_i\to\lit_i$, which has weight in group $-\group2$. On the other hand, deleting $\guard_i$ instead falls under the previous case, so $\clause_\ks\to\guard_i$ is also a critical edge. This edge has weight in group $-\group6$, and $-\group6>-\group2$. Therefore, the edge $\guard_i\to\lit_i$ cannot be the largest critical edge of $G$.

        \textbf{Case B:} If $\guard_i\notin V(G)$ then $\first(G\setminus\{\lit_i\})=0$. Then $G\setminus\{\lit_i\}$ is still an admissible graph. Since $\clause_\ks$ remains present and unsatisfied, we have $0<\firstU(G\setminus\{\lit_i\})\le \ks$, since deleting a literal may cause a clause with a smaller index to be unsatisfied.

        \textbf{Subcase B.1:} If $\firstU(G\setminus\{\lit_i\})=\ks$, then by the induction hypothesis, $\SSV(G\setminus\{\lit_i\})=\clause_\ks$, and hence $\clause_\ks\to\lit_i$ is a critical edge of $G$.

        \textbf{Subcase B.2:} If instead $\firstU(G\setminus\{\lit_i\})=r<\ks$, then by the induction hypothesis, $\SSV(G\setminus\{\lit_i\})=\clause_r$, so the critical edge produced by deleting $\lit_i$ is $\clause_r\to\lit_i$. Since $\firstU(G)=\ks$, the clause $\clause_r$ was satisfied in $G$, and deleting $\lit_i$ made it unsatisfied. Thus, $\lit_i$ was the only satisfying literal for $\clause_r$, and $\clause_r\to\lit_i$ has weight in group $-\group3$. But by the earlier case of removing a clause $\clause_k$ with $k\neq \ks$, deleting $\clause_r$ instead produces the critical edge $\clause_\ks\to\clause_r$. Since $r<\ks$, this edge lies in group $-\group11$, and $-\group11>-\group3$. Therefore, the edge $\clause_r\to\lit_i$ cannot be the largest critical edge of $G$.
    
        \item \textbf{(Removing $\clause_\ks$.)} First observe that deleting $\clause_\ks$ either removes the only unsatisfied clause or leaves a later unsatisfied clause as the first one. Thus, $\firstU(G\setminus\{\clause_\ks\})\in \{0\}\cup\{k:k>\ks\}$. Let $w=\SSV(G\setminus\{\clause_\ks\})$. We consider both cases.
    
        \textbf{Case A:} $\firstU(G\setminus\{\clause_\ks\})=k>\ks$. By the induction hypothesis, $w=\clause_k$. Hence, the critical edge produced by deleting $\clause_\ks$ is $\clause_k\to\clause_\ks$, which lies in group $-\group11$. However, we showed that removing $\clause_k$ produces the critical edge $\clause_\ks\to\clause_k$ in group $\group11.$ Hence, $\clause_k\to\clause_\ks$ is not the largest critical edge.

        \textbf{Case B:} $\firstU(G\setminus\{\clause_\ks\})=0$. By the induction hypothesis $w$ is not a clause candidate, and since $\first(G\setminus\{\clause_\ks\})=0$, by \Cref{lem:guards} it is not a guard candidate. Hence $w$ is $\winner$ (if $\winner\in V(G)$) or a literal $\lit_i$. In each case we show a critical edge coming out of $\clause_\ks$ that is larger than $w\to\clause_\ks$, so the latter is not the largest critical edge of $G$.

        \textbf{Subcase B.1:} If $w=\winner$, then $w\to\clause_\ks$ has negative weight, whereas $\clause_\ks\to\winner$ (from removing $\winner$ above) is also a critical edge with a positive weight.
        
        \textbf{Subcase B.2:} If $w=\lit_i$, then since $\clause_\ks$ is unsatisfied in $G$, the critical edge $\lit_i\to\clause_\ks$ is in group $-\group4$ (no positive edges point from $\lit_i \to \clause_\ks$). We show there is always a larger critical edge out of $\clause_\ks$.

        \begin{itemize}
            \item[-]If $\guard_i\in V(G)$, then $\clause_\ks\to\guard_i$ (from removing $\guard_i$ above) lies in group $-\group6>-\group4$. 
            \item[-] If $\guard_i\notin V(G)$, by the case of removing $\lit_i$ (Case $4.B$ above) there is always a larger critical edge out of $\clause_\ks$. (Either $\clause_\ks\to\lit_i$ in $\group_4>-\group4$ or $\clause_\ks\to\clause_r$ in $-\group11>-\group4.$)
        \end{itemize}
    \end{enumerate}
    
    Therefore, every critical edge that is not of the form $\clause_\ks\to u$ is dominated by a critical edge of the form $\clause_\ks\to v$ (for some $v$). Hence, the largest critical edge of $G$ has the form $\clause_\ks\to u$, and so $\SSV(G)=\clause_\ks$.
    \end{proof}

\end{lemma}

\subsection{Winner Determination for Full-Assignment  Graphs}\label{subsec:winnerofafullassignmentCD}
In this subsection, we prove the two remaining core lemmas, which determine the \SSV\ winner of full-assignment graphs and serve as the base cases in the proof of \Cref{thm:main_result}.

The proofs of the two core lemmas rely on a few auxiliary claims characterizing the winners of the intermediate admissible graphs that arise along the way. We defer the proofs of these claims to \Cref{apx:missing-proofs}. Here we state two of these claims used in both proofs. %
The first claim shows that \winner\ and \loser\ are ``robust'' under the addition of further single literals. Once either is already the \SSV\ winner, adding single literals that are not yet present does not change that outcome.

\begin{restatable}[Adding single literals preserves a  $\winner$/$\loser$ winner]{claim}{addingLits}\label{lem:adding_lits_C_D_wins}
    Let $G$ be an admissible graph (\Cref{def:admissible-graph}) with $\SSV(G) \in \{\winner, \loser\}$ such that for each $\lit_i \in G$, $\guard_i \in G$.
    Let $B$ be a set of literals containing at most one of $\{\t_j, \f_j\}$ for each variable $j \in \Vars$, with $\{\t_j, \f_j\} \cap V(G) = \emptyset$ for every $\lit_j \in B$ (i.e., $B$ adds only literals that are missing from $G$). 
    Then $G \cup B$ is admissible and
        \[ \SSV(G \cup B) = \SSV(G) \,. \]

\end{restatable}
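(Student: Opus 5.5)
I would prove the claim by induction on $|B|$, so the core task is the case $B=\{\lit_j\}$: a single literal whose variable has no literal in $G$. Set $G'=G\cup\{\lit_j\}$. Admissibility is immediate. Adding a literal cannot create an uncovered guard, so $\first(G')=0$, and $G'$ still has at most one literal per variable. Guards may be absent for added literals, so the hypothesis ``$\lit_i\in G\Rightarrow \guard_i\in G$'' is not preserved. For this reason I would phrase the induction so that $G$ always denotes the original graph and $B$ is added literal by literal. Concretely, I would prove the slightly stronger statement that the hypothesis is only needed for the literals of the base graph $G$, and then induct on $|B|$ together with $|V(G)|$.

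Let $A=\SSV(G)\in\{\winner,\loser\}$. I would show that the maximum critical edge of $G'$ points out of $A$. I would split the deletions of $G'$ into three kinds.

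\textbf{Deleting $\lit_j$.} This gives back $G$, so the edge $A\to\lit_j$ is critical. It lies in group $\group1$, which is the heaviest group.

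\textbf{Deleting some $u\neq\lit_j$ from $G$.} Here $G'\setminus\{u\}=(G\setminus\{u\})\cup\{\lit_j\}$.
\begin{itemize}
\item If $u$ is a literal $\lit_i$ of $G$, its guard $\guard_i$ is present. So $\first=i$, and by \Cref{lem:guards} the critical edge is $\guard_i\to\lit_i$, which lies in $-\group2$.
\item Otherwise, by induction on size, the winner of $G'\setminus\{u\}$ equals the winner of $G\setminus\{u\}$ whenever the latter is $\winner$ or $\loser$. This case needs the hypothesis of the claim for $G\setminus\{u\}$, which still holds because deleting a non-literal keeps the literal--guard pairing.
\end{itemize}

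The main obstacle is the remaining case: $\SSV(G\setminus\{u\})$ is a clause (when $\loser\notin G$, by \Cref{lem:clauses}) or some other candidate. Then adding $\lit_j$ may change that winner. For example, $\lit_j$ could satisfy the first unsatisfied clause, making a literal, a later clause, or $\winner$ the winner. For these deletions I would not try to track the new winner exactly. Instead I would bound its critical edge. The key observation is that in $G$ the edge $A\to\lit_j$ for the \emph{new} variable $j$ dominates everything else. Since $\group1$ edges are heavier than all other groups, it suffices to show two things.
\begin{itemize}
\item No critical edge of $G'$ is a $\group1$ edge out of the other designated candidate $A'\in\{\winner,\loser\}\setminus\{A\}$ ranked above $A\to\lit_j$.
\item No $\group1$ edge out of $A$ toward some $\lit_i$ of $G$ is needed. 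Those edges are never critical anyway, because deleting such a $\lit_i$ gives $\guard_i$ the win.
\end{itemize}
So the only competing $\group1$ edge is $A'\to\lit_j$. It is not critical, because $\SSV(G'\setminus\{\lit_j\})=\SSV(G)=A$. Every critical edge outside $\group1$ is lighter than $A\to\lit_j$. Hence $A\to\lit_j$ is the maximum critical edge, and $\SSV(G')=A$.

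With this argument, the delicate middle case actually disappears: the lone critical edge in $\group1$ beats all others regardless of what the other subgraph winners are. So the proof reduces to the verification that $A\to\lit_j$ is critical and that all other critical edges are non-$\group1$ or non-critical. This means no induction on size is needed at all, and the guard hypothesis is used only to rule out $\group1$ edges into existing literals. Iterating over the elements of $B$ one at a time finishes the proof. Each added literal's own $\group1$ edge from $A$ is critical. Earlier-added literals lacking guards could in principle create competing critical $\group1$ edges $A'\to\lit_i$, so I would order the additions and choose, at each step, the maximum-weight critical $\group1$ edge. The main thing left to check carefully is that deleting an earlier-added, unguarded literal $\lit_i$ yields winner $A$ by the inductive statement applied to $G\cup(B\setminus\{\lit_i\})$.
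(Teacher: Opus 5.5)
Your final argument is correct and is essentially the paper's proof. You induct on $|B|$ with the hypothesis applied to every $G\cup(B\setminus\{\lit\})$, so each $A\to\lit$ with $\lit\in B$ is a critical $\group1$ edge; deleting a guarded literal of $G$ hands the win to its guard (\Cref{lem:guards}); and deleting a non-literal cannot produce a positive $\group1$ edge. Your detour through deletions of non-literals and an induction on $|V(G)|$ is unnecessary, though your explicit check that $A'\to\lit$ is not critical fills in a detail the paper's write-up skips.
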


\smallskip
The second claim states that $\loser$ wins every admissible graph (SAT or UNSAT), whenever $\winner$ is not present.

\begin{restatable}[$\loser$ wins in admissible graphs without $\winner$]{claim}{DwithoutC}
\label{lem:D-wins-with-clauses-no-C-general}
    Let $G$ be an admissible graph as in \Cref{def:admissible-graph} such that $\loser\in V(G)$ and $\winner\notin V(G)$. Then  $\SSV(G)=\loser$.
\end{restatable}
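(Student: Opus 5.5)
We prove the claim by induction on $t:=|V(G)|$ and identify the maximum critical edge, as in the proofs of \Cref{lem:guards} and \Cref{lem:clauses}. The goal is to show that this edge always leaves $\loser$. The base case $t=1$ is $V(G)=\{\loser\}$, which is trivial. For $t\ge 2$, recall the neighbourhood of $\loser$ from \Cref{fig:gadget-edges}. $\loser$ beats every literal (group $\group1$) and every clause (group $\group7$). It loses to every guard (group $\group8$). Its edge with $\winner$ is irrelevant because $\winner\notin V(G)$. Every other vertex of $G$ is therefore a literal, a guard, or a clause.

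\emph{Step 1: critical edges from deleting $u\neq\loser$.}
\begin{itemize}
\item Suppose $u$ is a clause, a guard, or a literal $\lit_i$ with $\guard_i\notin V(G)$. Then $G\setminus\{u\}$ still has at most one literal per variable and $\first(G\setminus\{u\})=0$. Deleting a guard cannot uncover anything, and deleting an unguarded literal leaves no guard without its literal. So $G\setminus\{u\}$ is admissible, contains $\loser$, and omits $\winner$. By the induction hypothesis its winner is $\loser$, and the critical edge is $\loser\to u$.
\item Suppose $u=\lit_i$ with $\guard_i\in V(G)$. Then $\first(G\setminus\{u\})=i$, and \Cref{lem:guards} gives winner $\guard_i$. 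The critical edge is $\guard_i\to\lit_i$, which lies in $-\group2$.
\end{itemize}

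\emph{Step 2: the critical edge from deleting $\loser$.} Let $w=\SSV(G\setminus\{\loser\})$. Since $\first(G\setminus\{\loser\})=0$, \Cref{lem:guards} says $w$ is not a guard. Hence $w$ is a literal or a clause, and the critical edge $w\to\loser$ lies in $-\group1$ or $-\group7$.

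\emph{Step 3: comparison.} We exhibit a critical edge out of $\loser$ that beats every edge from Steps 1 and 2. The cases depend on which vertices other than $\loser$ are present.
\begin{itemize}
\item If some clause $\clause_k$ is present, then $\loser\to\clause_k$ is critical. Its weight is positive (group $\group7$), so it beats every negative edge: all edges $w\to\loser$ and all edges $\guard_i\to\lit_i$.
\item If some literal $\lit_i$ without its guard is present, then $\loser\to\lit_i$ is critical in group $\group1$. This dominates everything.
\item Otherwise $V(G)\setminus\{\loser\}$ consists only of guard–literal pairs, and is nonempty since $t\ge 2$. Deleting a guard $\guard_i$ gives the critical edge $\loser\to\guard_i$ in $-\group8$. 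Because $\group8$ comes after $\group7$, $\group2$ and $\group1$ in the priority order, $-\group8$ is larger than $-\group7$, $-\group2$ and $-\group1$. So this edge beats both the guard edges $\guard_j\to\lit_j$ and the edge $w\to\loser$. In this configuration $w$ must be a literal, so $w\to\loser$ is in $-\group1$.
\end{itemize}

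In every case the maximum critical edge has tail $\loser$, so $\SSV(G)=\loser$.

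The main point needing care is Step 1. Deleting a literal whose guard is present leaves the class of admissible graphs, so the induction hypothesis does not apply there. We handle this case with \Cref{lem:guards} instead, and then check group by group that each resulting guard edge, and the edge $w\to\loser$, is dominated by some critical edge out of $\loser$.
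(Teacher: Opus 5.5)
Your proof is correct, and it takes a genuinely different and shorter route than the paper's.

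\textbf{The paper's route.} It first passes to the paired core via \Cref{obs:decomposition} and \Cref{lem:adding_lits_C_D_wins}. It then inducts on the number of clauses, with base case \Cref{lem:d_wins_without_C_or_L}, which is itself a separate induction on guard--literal pairs. Because its inductive family contains only paired graphs, deleting a guard $\guard_i$ leaves that family. So the paper has to show separately that $\loser\to\clause_k$ is critical in $H\setminus\{\guard_i\}$, using the hypothesis on $H_{J\setminus\{i\},K'\setminus\{k\}}$ together with \Cref{lem:adding_lits_C_D_wins}. It then rules out $\lit_i$ as the winner of $H\setminus\{\guard_i\}$. It also invokes \Cref{lem:clauses} to identify the winner after deleting $\loser$.

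\textbf{Your route.} You induct on $|V(G)|$ over the whole class of admissible graphs that contain $\loser$ but not $\winner$, unpaired literals included. This class is closed under every deletion except $\loser$ itself and guarded literals. Guarded literals are handled by \Cref{lem:guards}. For the deletion of $\loser$ you only need that $\loser$'s in-neighbours in $G$ are guards, and no guard can win because $\first(G\setminus\{\loser\})=0$. So any resulting edge into $\loser$ is negative, and you never need to know whether the winner is a clause or a literal.

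\textbf{What each buys.} In your version, guard deletions directly yield the critical edge $\loser\to\guard_i$, and the dependence on \Cref{lem:adding_lits_C_D_wins} and \Cref{lem:clauses} disappears. Your argument also subsumes \Cref{lem:d_wins_without_C_or_L} as its clause-free special case; that case is exactly your third comparison case, where $\loser\to\guard_i$ in $-\group8$ beats every edge in $-\group1$ and $-\group2$. The paper's organization mirrors the $G(I,K)$ template it reuses in \Cref{lem:sat-C-wins,lem:unsat-D-wins}. For this particular claim, however, your argument is strictly more economical.
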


Notice that the converse is not true. In admissible graphs where $\winner$ is present but $\loser$ is not, $\winner$ is not always the winner; it depends on whether the graph is SAT (\Cref{lem:C-wins-SAT-no-D} below) or UNSAT (\Cref{lem:clauses}, as UNSAT implies the presence of clauses).

\subsubsection{D Wins UNSAT Full-Assignment Graphs}\label{subsec:D-UNSAT}

Next we prove \Cref{lem:unsat-D-wins} by induction on the number of clauses. The proof of \Cref{lem:unsat-D-wins} is organized around the elimination order (\Cref{def:elim-order}) of an UNSAT full-assignment graph, which has the following form:
\[
\underbrace{
  \clause_1, \ldots, \clause_{r-1}, \clause_{r+1}, \ldots, \clause_m,
  \underbrace{ \winner, 
    \clause_r, 
    \underbrace{
      \guard_n, \lit_n, \ldots, \guard_1, \lit_1
    }_{\text{\Cref{lem:D-wins-with-clauses-no-C-general}}}
  }_{\text{\Cref{lem:D-wins-with-C-and-one-unsat-clause}}}
}_{\text{\Cref{lem:unsat-D-wins}}}
\]
where $r\in\Clauses$ denotes the largest index of an unsatisfied clause in the admissible graph. The following claim captures the base case of \Cref{lem:unsat-D-wins}, where there is a single unsatisfied clause with both designated candidates $\winner$ and $\loser$.

\begin{restatable}[$\loser$ wins with $\winner$ and a single unsatisfied clause]{claim}{DBaseCase}\label{lem:D-wins-with-C-and-one-unsat-clause}
    Let $G$ be an admissible graph as in \Cref{def:admissible-graph} such that $\winner,\loser\in V(G)$. Suppose that the only clause candidate in $G$ is $\clause_k$, and that $\clause_k$ is unsatisfied in $G$. Then $ \SSV(G)=\loser.$
\end{restatable}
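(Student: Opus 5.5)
The plan is to prove the claim by induction on $|V(G)|$. We range over admissible graphs $G$ that contain $\winner$ and $\loser$ and have $\clause_k$ as their only clause candidate, with $\clause_k$ unsatisfied. This family is closed under deleting a literal or a guard, provided the result is still admissible. We then identify the critical edge produced by each possible deletion (\Cref{def:critical-edges}). The target is to show that every positive-weight critical edge leaves $\loser$, and that the critical edge from deleting $\winner$ is $\loser\to\winner$. That edge has weight $-\group12$, the largest possible negative weight, because $\winner\to\loser$ is the lightest positive edge. Together these make the maximum critical edge one out of $\loser$, so $\SSV(G)=\loser$. The base case is $V(G)=\{\winner,\loser,\clause_k\}$ (plus possibly literals or guards, handled by the same case analysis). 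For the minimal graph it can be checked directly: it is the cycle $\winner\to\loser\to\clause_k\to\winner$ in which $\winner\to\loser$ is the lightest edge, so $\loser$ wins.

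The cases for deleting $u$ are as follows.
\begin{enumerate}
\item $u=\winner$. $G\setminus\{\winner\}$ is admissible, contains $\loser$ and lacks $\winner$, so \Cref{lem:D-wins-with-clauses-no-C-general} gives winner $\loser$. The critical edge is $\loser\to\winner$, of weight $-\group12$.
\item $u=\loser$. $G\setminus\{\loser\}$ is admissible, UNSAT and $\loser$-free, so \Cref{lem:clauses} gives winner $\clause_k$. The critical edge $\clause_k\to\loser$ lies in $-\group7<-\group12$.
\item $u=\guard_i$. The graph stays admissible, still contains $\winner$ and $\loser$, and has the same unsatisfied single clause. The induction hypothesis gives winner $\loser$, so the critical edge leaves $\loser$.
\item $u=\lit_i$ with $\guard_i\in V(G)$. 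Then $\first=i$, and \Cref{lem:guards} gives the critical edge $\guard_i\to\lit_i$ in $-\group2$, which is negative.
\item $u=\lit_i$ with $\guard_i\notin V(G)$. The graph stays in the family, so induction gives an edge out of $\loser$.
\end{enumerate}

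The step I expect to be the main obstacle is deleting $u=\clause_k$. Here $G\setminus\{\clause_k\}$ contains both $\winner$ and $\loser$, has no clauses, and is SAT, and we have no lemma yet naming its winner. I would avoid determining the winner and instead bound the edge $w\to\clause_k$ for every possible winner $w$. No clause is present, so $w$ is not a clause. Since $\first(G\setminus\{\clause_k\})=\first(G)=0$, \Cref{lem:guards} rules out a guard; this matters because guards would give a positive $\group6$ edge into $\clause_k$. The remaining options are:
\begin{enumerate}
\item $w=\winner$, giving $\winner\to\clause_k$ in $-\group7$;
\item $w=\lit_i$, giving an edge in $-\group4$, since $\clause_k$ is unsatisfied and so no literal of $G$ points into it;
\item $w=\loser$, giving $\loser\to\clause_k$ in $\group7$, which is an edge out of $\loser$ anyway.
\end{enumerate}
In every case the edge is negative (below $-\group12$) or leaves $\loser$.

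Combining the cases, the maximum critical edge is either a positive edge out of $\loser$ or, if none exists, $\loser\to\winner$. In both cases the winner is $\loser$, which completes the induction.
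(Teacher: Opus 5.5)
Your proof is correct, but it takes a different route from the paper's. The paper does not induct inside this claim. It first strips the unpaired literals, restoring them at the end with \Cref{lem:adding_lits_C_D_wins}. It then shows directly that $\loser\to\winner$ is the maximum critical edge, i.e.\ that every other critical edge lies strictly below $-\group12$. For deleting $\clause_k$ it invokes \Cref{lem:c-wins-without-clauses} to name the winner $\winner$ exactly. For deleting $\guard_i$ it needs a nested case analysis of the critical edges of $G\setminus\{\guard_i\}$, ruling out that the now-unpaired $\lit_i$ or an earlier guard wins there.

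You instead induct on $|V(G)|$ over the family of admissible graphs containing $\winner,\loser$ and a single unsatisfied clause, and you observe that this family is closed under deleting a guard or an unpaired literal. The induction hypothesis then settles both the $\guard_i$ case and the unpaired-literal case in one line each. Your bound over all possible winners $w$ of $G\setminus\{\clause_k\}$ replaces the appeal to \Cref{lem:c-wins-without-clauses}. Your weaker target suffices: every positive critical edge leaves $\loser$, and $\loser\to\winner$ is critical. This works because $\winner\to\loser$ is the unique lightest positive edge, so $-\group12$ is the largest negative weight.

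Your route is more economical. It needs only \Cref{lem:guards}, \Cref{lem:clauses} and \Cref{lem:D-wins-with-clauses-no-C-general}, and it avoids the most delicate part of the paper's argument. The paper's route gives finer information: on the paired core it identifies $\loser\to\winner$ as the maximum critical edge, which matches the elimination order displayed just before \Cref{lem:D-wins-with-C-and-one-unsat-clause}. Yours shows only that the maximum critical edge leaves $\loser$.

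Two minor points. A result naming the winner of $G\setminus\{\clause_k\}$ does exist, namely \Cref{lem:c-wins-without-clauses}, so your remark that none is available is inaccurate, though your workaround is valid. And since your case analysis is uniform under strong induction, no separate base case is needed.
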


With these auxiliary claims in place, we now prove \Cref{lem:unsat-D-wins}. The proof starts from the base case and then builds up by induction on the number of clauses.

\UNSATLemma*
    
\begin{proof}
    By \Cref{lem:adding_lits_C_D_wins}, it suffices to prove the lemma under the additional assumption that every present literal is accompanied by its guard. We prove this restricted statement by induction on the number of clauses.
    
    \medskip
    \noindent
    \textbf{Claim.}
    Let $I\subseteq[n]$, let $K\subseteq[m],\, |K| > 0$, and, for every $i\in I$, fix exactly one literal $\lit_i\in\{\t_i,\f_i\}$. Let $G(I,K)$ be the induced subtournament on
        \[ V(G(I,K))=\{\winner,\loser\}\cup\{\guard_i,\lit_i:i\in I\}\cup\{\clause_k:k\in K\}. \]
    If at least one clause $\clause_k$, $k\in K$ is unsatisfied by the present literals, then $\SSV(G(I,K))=\loser$.

    \paragraph{Base case.}
    For any set $I'\subseteq I$. If  $|K|=1$, then the unique clause in $K$ is unsatisfied. Hence, by \Cref{lem:D-wins-with-C-and-one-unsat-clause}, we get $\SSV(G(I,K))=\loser$.
    
    \paragraph{Induction hypothesis.}
    Fix $t\ge 2$ and assume the statement holds for every graph $G(I',K')$ of the above form with $I'\subseteq I$ and $|K'|<t$.
    
    \paragraph{Inductive step.}
    Let $G:=G(I,K)$ be a graph of the above form with $|K|=t$, and fix an unsatisfied clause $\clause_{r}\in K$. We show that the largest critical edge of $G$ is of the form $\loser\to\clause_k$ (in group $\group7$) for some $\clause_k\in K$. We examine the critical edge produced by removing each possible type of node.
    
    \begin{enumerate}
    
        \item \textbf{(Removing a clause $\clause_k$.)} If $k\neq r$, then $G\setminus\{\clause_k\}$ still contains the unsatisfied clause $\clause_{r}$, so by the induction hypothesis $\SSV(G\setminus\{\clause_k\})=\loser$. Therefore, the critical edge produced by removing $\clause_k$ is $\loser\to\clause_k$, in group $\group7$. Such an edge always exists since the number of clauses $t\ge 2.$
    
        It remains to consider removing $\clause_{r}$. If there is another unsatisfied clause in $G$, then the same induction argument applies and the critical edge produced by removing $\clause_{r}$ is again an edge out of $\loser$ in group $\group7$.

        So suppose that $\clause_{r}$ is the unique unsatisfied clause. Then $G\setminus\{\clause_{r}\}$ is SAT. We do not need to determine its winner exactly; we need to show that the critical edge produced has weight smaller than group $\group7$. Now, the only edges into $\clause_{r}$ with weight larger than group $\group7$ are edges in groups $\group3$ ($\lit_i\to\clause_r$) and 
        $\group6$ ($\guard_i\to\clause_r$), see \Cref{fig:gadget-edges}. First, observe that removing a clause does not affect guards or literals; we still have $\first(G\setminus\{\clause_{r}\})=0$. By \Cref{lem:guards} no guard can win $G\setminus\{\clause_{r}\}$, so no edge $\guard_i\to\clause_r$ is critical. Additionally, we know that $\clause_{r}$ is unsatisfied, so no present literal satisfies $\clause_{r}$, and therefore no edge $\lit_i\to\clause_{r}$ in group $\group3$ is present. Therefore, $ w(\SSV(G\setminus\{\clause_r\}),\clause_r) \leq \group7$. Moreover, equality is possible only when $\SSV(G\setminus\{L_r\})=D$, in which case the critical edge is $D\to L_r$; otherwise the critical edge has weight strictly below group $g_7$.
        Thus removing $\clause_{r}$ produces either another group $\group7$ edge out of $\loser$, or a strictly smaller critical edge.

        \item \textbf{(Removing $\winner$.)}
        The graph $G\setminus\{\winner\}$ is exactly of the form covered by \Cref{lem:D-wins-with-clauses-no-C-general}, and it is still UNSAT. Hence $\SSV(G\setminus\{\winner\})=\loser$. Therefore the critical edge produced by removing $\winner$ is $\loser\to\winner$, which lies in group $-\group12<\group7$.
    
        \item \textbf{(Removing $\loser$.)}
        Since removing $\loser$ does not affect literals or guards, we still have $\first(G\setminus\{\loser\})=0$. Moreover, $G\setminus\{\loser\}$ is still UNSAT. Hence, by \Cref{lem:clauses}, the winner of $G\setminus\{\loser\}$ is the first unsatisfied clause, say $\firstU(G\setminus\{\loser\})=\ks$. Therefore the critical edge produced by removing $\loser$ is $\clause_{\ks}\to\loser$, which lies in group $-\group7<\group7$.
    
        \item \textbf{(Removing a literal $\lit_i$.)}
        Since $\guard_i$ is still present and $\lit_i$ was the unique present literal of variable $i$, we have $\first(G\setminus\{\lit_i\})=i$. Hence, by \Cref{lem:guards}, $\SSV(G\setminus\{\lit_i\})=\guard_i$. Therefore the critical edge produced by removing $\lit_i$ is $\guard_i\to\lit_i$, which lies in group $-\group2<\group7$.
    
        \item \textbf{(Removing a guard $\guard_i$.)} For ease of notation, let $H:=G\setminus\{\guard_i\}$. We show that the critical edge $\SSV(H)\to\guard_i$ has weight $<\group7.$ By \Cref{fig:gadget-edges}, the only incoming edge to $\guard_i$ with larger weight is $\lit_i\to\guard_i$ in group $\group2.$ Hence it suffices to rule out $\SSV(H)=\lit_i$.
        
        We are going to break the proof into two steps; first we show that $H$ has a critical edge with weight in $\group7.$ Then we contradict that $\SSV(H)=\lit_i.$

        \textbf{Step 1:} We consider the graph $H'= H\setminus\{\lit_i,\clause_k\}$ (for an arbitrary fixed $k\in K\setminus\{r\}$, such a clause always exists since $|K|\ge2$). This graph still contains the unsatisfied clause $\clause_{r}$, and it has $t-1$ clauses. Therefore, by the induction hypothesis, $\SSV(H')=\loser$. By \Cref{lem:adding_lits_C_D_wins}, adding any number of single non-present literals does not change a winner in $\{\winner,\loser\}$, we get $\SSV(H'\cup\{\lit_i\}) =\SSV(H\setminus\{\clause_k\})=\loser$. Hence $\loser\to\clause_k$ is a critical edge of $H$, in group $\group7$.
        
        \textbf{Step 2:} Next we show that $\SSV(H)\neq\lit_i.$ Assume for the sake of contradiction that $\SSV(H)=\lit_i.$ Since we have established a critical edge of $H$ in $\group7,$ to get $\SSV(H)=\lit_i$ the maximum critical edge of $H$ must be of the form $\lit_i\to u$ with weight of at least $\group7.$ The only possibility is $\lit_i\to\clause_j$ for some clause satisfied by $\lit_i$ which is in group $\group3,$ since $\guard_i\notin V(H).$ However, we have shown in Step 1 that in $H$, removing any clause $\clause_j\neq\clause_r$ makes the winner $\loser$, which contradicts the assumption $\SSV(H)=\lit_i$.

        \smallskip

        Thus, since $\SSV(H) \neq \lit_i$, the critical edge produced by removing $\guard_i$, $\SSV(H)\to\guard_i$, has weight at most $\group10 < \group7$, by \Cref{fig:gadget-edges}.
    \end{enumerate}

    We have now considered every possible removal. Removing any clause $\clause_k\neq\clause_{r}$ produces a critical edge $\loser\to\clause_k$ in group $\group7$. Every critical edge produced by removing $\winner$, $\loser$, a literal, or a guard has strictly smaller weight, and removing $\clause_{r}$ produces at most another group $\group7$ edge out of $\loser$. Therefore the maximum critical edge of $G$ is of the form $\loser\to\clause_k$ for some $\clause_k\in K$, and hence $\SSV(G)=\loser$.

    This proves the restricted statement. We now deduce the lemma for an arbitrary admissible subtournament.

    Let $G$ be an arbitrary admissible subtournament satisfying the hypotheses of the lemma. Let
        \[ B:=\{\lit_i\in V(G): \guard_i\notin V(G)\} \]
    be the set of present single literals whose guards are absent, and let
        \[ G^\circ:=G\setminus B. \]
    Because $G$ is admissible, $G^\circ$ has the form $G(I,K)$ for some $I\subseteq[n]$ and $K\subseteq[m]$. Moreover, since $\firstU(G)\neq 0$, at least one clause candidate present in $G$ is unsatisfied in $G$. Removing literals cannot make an unsatisfied clause satisfied, so at least one clause candidate present in $G^\circ$ is unsatisfied in $G^\circ$. Hence the inductive claim gives
        \[ \SSV(G^\circ)=D. \]
    By \Cref{lem:adding_lits_C_D_wins}, applied to $G^\circ$ and the set $B$, we obtain
        \[ \SSV(G)=D. \]
    This proves the main statement of the lemma.
    
    The full-assignment statement follows by applying the main statement to $G_\alpha$.
\end{proof}

\subsubsection{C wins SAT Full-Assignment Graphs}\label{subsec:C-SAT}

Now we prove \Cref{lem:sat-C-wins}. First, we state two auxiliary claims that we use in the proofs. \Cref{lem:c-wins-without-clauses} serves as the base case of the main lemma, and the second (\Cref{lem:C-wins-SAT-no-D}) shows that $\winner$ wins in admissible graphs in which \loser~has already been removed. That implies that $\winner\to\loser$ is critical. After establishing this claim, the proof of \Cref{lem:sat-C-wins} follows by comparing the critical edges produced by deleting each possible type of candidate.

\begin{restatable}[$\winner$ wins in admissible graphs without clauses]{claim}{CwinsNoClause}\label{lem:c-wins-without-clauses}
    Let $G$ be an admissible graph as in \Cref{def:admissible-graph} such that $\winner,\loser\in V(G)$ and $\clause_k\notin V(G)$, for all $k\in[m]$. Then $\SSV(G)=\winner.$
\end{restatable}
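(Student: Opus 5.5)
We argue by induction on $t:=|V(G)|$, over admissible graphs containing both $\winner$ and $\loser$ and no clause candidates. The base case $t=2$ is $V(G)=\{\winner,\loser\}$: deleting $\loser$ leaves $\winner$ and gives the critical edge $\winner\to\loser$ ($\group12$), while deleting $\winner$ gives $\loser\to\winner$ ($-\group12$). Hence $\SSV(G)=\winner$. For the inductive step, we show that the maximum critical edge of $G$ leaves $\winner$. To do so, we compute the critical edge produced by deleting each type of candidate.

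\textbf{Deleting $\loser$.} The graph $G\setminus\{\loser\}$ is admissible, has no clauses, and does not contain $\loser$. We need $\SSV(G\setminus\{\loser\})=\winner$. The forthcoming claim (\Cref{lem:C-wins-SAT-no-D}) gives exactly this, since a graph with no clauses is SAT. If we want to avoid a forward reference, we instead prove this sub-statement by its own induction, inside the same argument. In a clause-free admissible graph without $\loser$, consider each deletion. Deleting a guard $\guard_i$ yields $\winner$, so $\winner\to\guard_i$ is critical with weight in $-\group8$. Deleting a literal whose guard is present yields, by \Cref{lem:guards}, the critical edge $\guard_i\to\lit_i$ in $-\group2$. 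Deleting a literal whose guard is absent yields $\winner\to\lit_i$ in $\group1$. Deleting $\winner$ yields some critical edge $w\to\winner$, and we must bound it. The winner $w$ is not a guard, by \Cref{lem:guards}. The edges into $\winner$ come only from clauses ($\group7$) and guards ($\group8$), so $w\to\winner$ is negative unless $w$ is a guard, and a guard is excluded. The remaining work is to compare $-\group8$ against the negative $\group1$-type edges $w\to\winner$ with $w$ a literal. These have weight $-\group1$, which is smaller than $-\group8$, so $\winner$ wins whenever a guard is present. When no guard is present, every literal deletion gives a positive $\group1$ edge out of $\winner$. When $V=\{\winner\}$ alone, the claim is trivial.

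\textbf{Deleting the other candidates of $G$.} Deleting a guard $\guard_i$ leaves a clause-free admissible graph containing $\winner$ and $\loser$, so by induction it produces the critical edge $\winner\to\guard_i$ ($-\group8$). Deleting a literal $\lit_i$ whose guard is present gives $\guard_i\to\lit_i$ in $-\group2$, which is smaller. Deleting a literal whose guard is absent gives $\winner\to\lit_i$ in $\group1$, by induction. Deleting $\winner$ leaves an admissible graph without $\winner$, so by \Cref{lem:D-wins-with-clauses-no-C-general} its winner is $\loser$, and the critical edge is $\loser\to\winner$ in $-\group12$. Deleting $\loser$ gives $\winner\to\loser$ in $\group12$ by the previous paragraph.

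Every critical edge that does not leave $\winner$ is therefore one of two kinds. The first is $\guard_i\to\lit_i$ in $-\group2$, which is dominated by $\winner\to\loser$ in $\group12$. The second is $\loser\to\winner$ in $-\group12$, which is again dominated by $\winner\to\loser$. Since $\winner\to\loser$ is always critical, the maximum critical edge leaves $\winner$, and so $\SSV(G)=\winner$.

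The main obstacle is the sub-step showing that $\winner$ wins $G\setminus\{\loser\}$, specifically bounding the critical edge obtained by deleting $\winner$ there. This requires knowing that no guard wins, which follows from \Cref{lem:guards}, and that no literal-to-$\winner$ edge is positive.
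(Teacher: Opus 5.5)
Your proof is correct, but it is organized differently from the paper's.

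The paper does not induct on $|V(G)|$ for this claim. It first strips the unpaired literals (\Cref{obs:decomposition}), argues on the paired core $G_I$, and restores them at the end with \Cref{lem:adding_lits_C_D_wins}. On $G_I$ it handles the deletions as follows:
\begin{itemize}
\item Deleting $\loser$ uses \Cref{lem:c_wins_without_D_or_L}. The paper derives this from the $\winner$-free \Cref{lem:d_wins_without_C_or_L} by an order-isomorphism swapping $\winner$ and $\loser$, whose edges to literals and guards lie in $\group1$ and $\group8$ in the same relative order.
\item Deleting a guard $\guard_i$ needs a separate case analysis. It rules out that $G_I\setminus\{\guard_i\}$ is won by an earlier guard or by the now-unpaired literal $\lit_i$.
\end{itemize}

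Your induction runs over the whole class, unpaired literals included. This makes both the guard deletion and the unpaired-literal deletion immediate, since each stays inside the class and yields $\winner\to\guard_i$ or $\winner\to\lit_i$ by the inductive hypothesis. So you need neither that case analysis nor \Cref{lem:adding_lits_C_D_wins}. For deleting $\loser$ you either cite \Cref{lem:C-wins-SAT-no-D}, or reprove the $\loser$-free statement by a parallel induction instead of the isomorphism trick. Citing \Cref{lem:C-wins-SAT-no-D} is not circular: its appendix proof relies only on its own induction and the Guard and Clause Lemmas.

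The paper's route keeps the same template as its other appendix claims (paired core, then add literals back) and reuses the $\loser$-side results. Yours is shorter and more self-contained.

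One small simplification: when deleting $\winner$ you only need \Cref{lem:guards}, not \Cref{lem:D-wins-with-clauses-no-C-general}. With no clauses present, the only positive edges into $\winner$ come from guards, and no guard can win $G\setminus\{\winner\}$. So the resulting critical edge is negative and is dominated by $\winner\to\loser$, whatever the winner of $G\setminus\{\winner\}$ is.
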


\begin{restatable}[$\winner$ wins in SAT admissible graphs without $\loser$]{claim}{CwinsNoD}\label{lem:C-wins-SAT-no-D}
    Let $G$ be an admissible graph (\Cref{def:admissible-graph}) with $\winner \in V(G)$, $\loser \notin V(G)$, and $\firstU(G) = 0$.
    Then $\SSV(G) = \winner$.
\end{restatable}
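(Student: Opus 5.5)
Let $G$ be admissible with $\winner\in V(G)$, $\loser\notin V(G)$ and $\firstU(G)=0$. I would prove $\SSV(G)=\winner$ by strong induction on $|V(G)|$, using critical edges. If $V(G)=\{\winner\}$ the claim is immediate, so assume $G$ has other vertices. The plan is to show that a critical edge out of $\winner$ exists and that every critical edge not leaving $\winner$ is lighter than some critical edge out of $\winner$.

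\textbf{Step 1: the winner of each deletion.} Fix $u\neq\winner$. If $u$ is a clause, then $G\setminus\{u\}$ is still admissible and SAT, so the induction hypothesis gives $\SSV(G\setminus\{u\})=\winner$. Hence each present clause yields the critical edge $\winner\to\clause_k$, of weight $-\group7$. If $u$ is a guard, the same argument gives the critical edge $\winner\to\guard_i$, of weight $-\group8$. If $u=\lit_i$ and $\guard_i\in V(G)$, then \Cref{lem:guards} gives the critical edge $\guard_i\to\lit_i$, of weight $-\group2$; this is dominated, because deleting $\guard_i$ instead yields $\winner\to\guard_i$ of weight $-\group8>-\group2$. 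If $u=\lit_i$ and $\guard_i\notin V(G)$, then $G\setminus\{\lit_i\}$ is admissible. Either it is still SAT, in which case induction gives $\winner\to\lit_i$, a positive $\group1$ edge; or some clause $\clause_r$ becomes the first unsatisfied one, in which case \Cref{lem:clauses} gives $\clause_r\to\lit_i$, of weight $-\group3$, since $\lit_i$ was $\clause_r$'s unique satisfier. In the latter case deleting $\clause_r$ instead gives $\winner\to\clause_r$ of weight $-\group7>-\group3$, so this edge is dominated too. Thus every deletion other than $\winner$ produces an edge out of $\winner$ or an edge dominated by one.

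\textbf{Step 2: deleting $\winner$.} Let $w=\SSV(G\setminus\{\winner\})$; I must show $w\to\winner$ is below some critical edge out of $\winner$. Since $G\setminus\{\winner\}$ is admissible and SAT, \Cref{lem:guards} and \Cref{lem:clauses} rule out $w$ being a guard or a clause, so $w$ is a literal $\lit_i$. The edge $\lit_i\to\winner$ has weight $-\group1$, the lowest band, so it loses to any $-\group2$ through $-\group8$ edge out of $\winner$.

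\textbf{Step 3 (main obstacle): a heavy enough edge out of $\winner$ always exists.} If $G$ contains any guard or clause, Step 1 already supplies such an edge. Otherwise $V(G)$ consists of $\winner$ and literals only, and the candidate edges are the heaviest possible. The deletions of unguarded literals give positive $\group1$ edges $\winner\to\lit_i$, and removing $\winner$ gives $\lit_i\to\winner$ of weight $-\group1$. Every positive edge beats a negative one, so $\winner$ wins. I would still check the ordering inside $\group1$ when several positive edges compete, since $\lit_j\to\winner$ is never positive. The real subtlety is the Step 1 sub-case where $G\setminus\{\lit_i\}$ becomes UNSAT. There \Cref{lem:clauses} applies only because $\loser\notin V(G)$, and I must make sure that clause candidates exist whenever this happens, so that the dominating edge $\winner\to\clause_r$ is available. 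This holds because the unsatisfied clause $\clause_r$ is itself present in $G$.
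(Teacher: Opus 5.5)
Your proposal is correct and follows the paper's proof essentially step for step. Both use strong induction on $|V(G)|$ with the same case analysis:
\begin{itemize}
\item deleting a guard gives $\winner\to\guard_i$ in $-\group8$;
\item deleting a clause gives $\winner\to\clause_k$ in $-\group7$;
\item deleting a guarded literal gives a $-\group2$ edge, dominated by $\winner\to\guard_i$;
\item deleting an unguarded literal gives either $\winner\to\lit_i$, or a $-\group3$ edge dominated by $\winner\to\clause_r$;
\item deleting $\winner$ leaves a literal as winner, so the resulting edge is in $-\group1$.
\end{itemize}
Your Step 3 spells out what the paper only asserts briefly: a critical edge out of $\winner$ heavier than $-\group1$ always exists, including when only $\winner$ and unguarded literals remain. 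The ordering inside $\group1$ that you flag is irrelevant, since any such edge already dominates.
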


\smallskip
\SATLemma*
\begin{proof}
    We prove the following stronger statement by strong induction on $t:=|I|+|K|$.
    
    \medskip
    \noindent
    \textbf{Induction statement.}
    Let $I\subseteq[n]$, let $K\subseteq[m]$, and for every $i\in I$ fix exactly one literal $\lit_i\in\{\t_i,\f_i\}$. Let $G(I,K)$ be the induced subtournament on
    \[
    V(G(I,K))=\{\winner,\loser\}\cup\{\guard_i,\lit_i:i\in I\}\cup\{\clause_k:k\in K\}.
    \]
    Assume that every clause $\clause_k\in K$ is satisfied by at least one present literal $\lit_i$. Then $\SSV(G(I,K))=\winner$. The lemma follows by taking $I=[n]$, $K=[m]$, and choosing the literals according to $\alpha$.
    
    \paragraph{Base case.}
    If $K=\emptyset$, then $G(I,K)$ has no clause nodes for every $I\subseteq[n]$. Hence, by \Cref{lem:c-wins-without-clauses}, we get $\SSV(G(I,\emptyset))=\winner$. 
    
    \paragraph{Induction hypothesis.}
    Fix $t\ge 1$ and assume the statement holds for every satisfiable graph $G(I',K')$ with $|I'|+|K'|<t$.
    
    \paragraph{Induction step.}
    Let $G:=G(I,K)$ be satisfiable with $|I|+|K|=t$. We show that the largest critical edge is $\winner\to\loser$ with weight in $\group12>0$. We examine the critical edge produced by removing each possible type of node and show that they are either from $\winner$ or negative. 
    
    \begin{enumerate}
        \item \textbf{(Removing $\loser$.)}
        The graph $G\setminus\{\loser\}$ is a SAT admissible graph with $\winner\in V(G)$, $\loser\notin V(G)$. By  \Cref{lem:C-wins-SAT-no-D}, the winner is $\winner.$ Hence the edge $\winner\to\loser$ is a critical edge with weight in group $\group12$.

        \item \textbf{(Removing a clause $\clause_k$.)}
        The graph $G\setminus\{\clause_k\}=G(I,K\setminus\{k\})$ is still satisfied, and $|I|+|K\setminus\{k\}|=t-1$. Hence, by the induction hypothesis, $\SSV(G\setminus\{\clause_k\})=\winner$. Therefore, the critical edge produced by removing $\clause_k$ is $\winner\to\clause_k$, which lies in group $-\group7$.
        
        \item \textbf{(Removing $\winner$.)}
        The graph $G\setminus\{\winner\}$ is an admissible graph without $\winner$, with exactly one literal for every present variable and with $\first(G\setminus\{\winner\})=0$. Hence, by \Cref{lem:D-wins-with-clauses-no-C-general}, we get $\SSV(G\setminus\{\winner\})=\loser$. Therefore, the critical edge produced by removing $\winner$ is $\loser\to\winner$, which lies in group $-\group12$. ($\winner\to\loser$ is also a critical edge with positive weight.)
        
        \item \textbf{(Removing a literal $\lit_i$.)}
        Since $\guard_i$ is still present and $\lit_i$ was the unique present literal of variable $i$, we have $\first(G\setminus\{\lit_i\})=i$. Hence, by \Cref{lem:guards}, $\SSV(G\setminus\{\lit_i\})=\guard_i$. Therefore, the critical edge produced by removing $\lit_i$ is $\guard_i\to\lit_i$, which lies in group $-\group2$.

        \item \textbf{(Removing a guard $\guard_i$.)}
        We show that $\SSV(G\setminus\{\guard_i\})\in\{\winner,\loser\}.$ Consider $G\setminus\{\guard_i,\lit_i\}.$ This is a smaller graph of the same paired form, with variable set $I\setminus\{i\}$ and the same clause set $K$. If every clause in $K$ remains satisfied after deleting $\lit_i$, then by the induction hypothesis, $\SSV(G\setminus\{\guard_i,\lit_i\})=\winner.$ Hence $\winner\to\lit_i$ is a critical edge of $G\setminus\{\guard_i\}$, in group $\group1$. If instead some clause becomes unsatisfied after deleting $\lit_i$, then by \Cref{lem:unsat-D-wins}, $\SSV(G\setminus\{\guard_i,\lit_i\})=\loser.$ Hence $\loser\to\lit_i$ is a critical edge of $G\setminus\{\guard_i\}$, also in group $\group1$. Thus $G\setminus\{\guard_i\}$ has a group $\group1$ critical edge from $\winner$ or $\loser$ into $\lit_i$. Since all group $\group1$ edges originate from $\winner$ or $\loser$, it follows that
        \[ \SSV(G\setminus\{\guard_i\})\in\{\winner,\loser\}. \]
        Therefore the critical edge produced by removing $\guard_i$ is a negative edge $\{\winner, \loser\} \to \guard_i$ in group $-\group8$.

    \end{enumerate}
    In all cases, every critical edge is either an edge out of $\winner$ or has negative weight. Moreover, removing $\loser$ produces the positive critical edge $\winner\to\loser$. Hence $\SSV(G)=\winner$. Applying the statement to the full-assignment graph $G_\alpha$ completes the proof.
\end{proof}

\section{Extension to Stable Voting}\label{sec:stable_voting}

In this section, we show that the same construction (\Cref{margin_groups}) gives \pspace-completeness for Stable Voting (\SV) as well. The main difference between the two rules is the ``does not defeat'' requirement (\Cref{def:does-not-defeat}). Specifically, \SV\ restricts the set of critical edges to $A\to B$ such that $A$ is undefeated, whereas \SSV\ does not have such a restriction. We show that throughout the recursion on $G(\Phi)$, every \SSV\ winner that can affect the recursion is also undefeated in that subtournament. Hence, \SV\ never discards the largest critical edge that \SSV\ would use, and so \SV\ and \SSV\ make the same decision at every step. We proceed in three steps: we first characterize the types of subtournaments that the recursion can actually reach (\Cref{claim:reachability}); we then show that on those subtournaments the \SSV\ winner is undefeated whenever it is not a literal candidate (\Cref{lem:sv-undefeated}); and finally we show that subtournaments won by literal candidates never change the recursion (\Cref{lem:literals-immaterial}). Hence, \SV\ and \SSV\ agree wherever it matters (\Cref{lem:sv-agrees-ssv}).

Recall (\Cref{def:does-not-defeat}) that $B$ does not defeat $A$ in $G$ if there is a (directed) path $A=x_0,x_1,\ldots,x_t=B$ such that $w(x_i,x_{i+1})\ge w(B,A)$ for all $i$, and that $A$ is undefeated in $G$ if no candidate defeats it. We use the fact that, if $w(A,B)>0$, then $B$ trivially does not defeat $A$ (the edge $A\to B$ is itself such a path). Hence, to show that $A$ is undefeated, it suffices to check every candidate $B$ with $w(A,B)<0$. Since the weight groups satisfy $\group1 > \group2 > \cdots > \group12$, a path certifies that $B$ does not defeat $A$ as long as every edge it contains lies in a group with index strictly smaller than that of the edge $B \to A$. All the paths exhibited in this section are of this form, so no comparisons within a group are ever needed. 
Conversely, since a certifying path must leave $A$ along an edge of weight at least $w(B,A)$, and must enter $B$ along such an edge, we get a simple sufficient condition for a defeat. All omitted proofs can be found in \Cref{apx:sv-extension}.

\begin{observation}[Blocking]\label{obs:blocking}
Let $G$ be a weighted tournament and let $A,B\in V(G)$ with $w(B,A)>0$. If every edge of $G$ leaving $A$, or every edge of $G$ entering $B$, has weight smaller than $w(B,A)$, then $B$ defeats $A$.
\end{observation}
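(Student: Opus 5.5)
The plan is to prove the contrapositive directly from \Cref{def:does-not-defeat}. Suppose, for contradiction, that $B$ does not defeat $A$. By definition there is then a directed path $A=x_0,x_1,\dots,x_t=B$ in $G$ with $w(x_i,x_{i+1})\ge w(B,A)$ for every $i\in\{0,\dots,t-1\}$. Since $w(B,A)>0$, we have $w(A,B)<0$, so $A\neq B$ and the path has $t\ge 1$. In particular, it has a well-defined first edge and a well-defined last edge.

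Next, look at the two ends of this path. The first edge $x_0\to x_1$ leaves $A$ and has weight at least $w(B,A)$. The last edge $x_{t-1}\to x_t$ enters $B$ and also has weight at least $w(B,A)$. (When $t=1$ these are the same edge $A\to B$. Its weight is $w(A,B)<0<w(B,A)$, so the path condition already fails.)

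Now apply the hypothesis. If every edge leaving $A$ has weight smaller than $w(B,A)$, the first edge contradicts the path condition. If every edge entering $B$ has weight smaller than $w(B,A)$, the last edge does. Either way no certifying path exists, so $B$ defeats $A$.

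There is no real obstacle here. The only point needing care is the convention that ``edges of $G$'' means the positive-weight orientation drawn between each pair. Under the skew-symmetric convention, an ``edge leaving $A$'' could also be read as any pair $(A,y)$, including those with negative $w(A,y)$. The argument is insensitive to this choice, since the path steps $x_i\to x_{i+1}$ are exactly such ordered pairs, and the hypothesis bounds all of them at the relevant endpoint. I would state this explicitly so that the observation can later be applied by simply listing the outgoing edges of $A$, or the incoming edges of $B$, from \Cref{fig:gadget-edges} and comparing their groups against the group of $B\to A$.
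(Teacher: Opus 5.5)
Your argument is correct and is exactly the paper's reasoning: any certifying path from $A$ to $B$ must leave $A$ along an edge of weight at least $w(B,A)$ and enter $B$ along such an edge, so if all edges at either endpoint are lighter than $w(B,A)$, no certifying path exists. Your side remarks are harmless clarifications that the paper leaves implicit: that $t\ge 1$, and that the positive-edge versus ordered-pair reading of ``edge'' does not matter because every step of a certifying path has weight at least $w(B,A)>0$.
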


\paragraph{Reachable subtournaments.}\label{subsec:reachability}
During the evaluation of $G(\Phi)$ according to \SSV\ (or \SV), only a small subset of subtournaments is reached: those evaluated before the scan finds its first critical pair. Formally, the recursion on $G$ evaluates $\SSV(G\setminus\{B\})$ exactly for the pairs $(A,B)$ of weight at least that of the maximum critical edge of $G$. Call these subtournaments, together with $G(\Phi)$ itself, the ones \emph{arising in the recursive evaluation} of $\SSV(G(\Phi))$. The next claim shows that only three types of subtournaments ever arise: prefix graphs, graphs whose winner is forced to be a guard, and admissible graphs. In particular, the recursion never reaches a subtournament that has $\first(\cdot)=0$ while containing both literals of some variable with a smaller index (i.e., the induced subtournament on $\{\t_1,\f_1,\t_2,\guard_2,\clause_1\}$ which is neither a prefix graph nor admissible).

\begin{restatable}[Reachability]{claim}{Reachability}\label{claim:reachability}
    Every induced subtournament that arises in the recursive evaluation of $\SSV(G(\Phi))$ belongs to at least one
    of the following families:
    \begin{itemize}
        \item[($\mathcal{P}$)] the prefix graphs (\Cref{def:prefix-subgraph});
        \item[($\mathcal{A}$)] the admissible graphs (\Cref{def:admissible-graph}).
        \item[($\mathcal{F}$)] the graphs $H$ with $\first(H) \neq 0$;
    \end{itemize}
\end{restatable}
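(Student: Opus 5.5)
The plan is to prove the claim by induction along the recursion tree. Say that a subtournament $H'$ is a \emph{scanned child} of $H$ if $H'=H\setminus\{B\}$ for some pair $(A,B)$ whose weight $\weight(A,B)$ is at least that of the maximum critical edge of $H$. These are exactly the graphs the recursion evaluates when computing $\SSV(H)$. Since $G(\Phi)=G_{\alpha_{<1}}\in\mathcal P$, it suffices to show that every scanned child of a graph in $\mathcal P\cup\mathcal A\cup\mathcal F$ again lies in $\mathcal P\cup\mathcal A\cup\mathcal F$. I would handle the three families separately, using the descriptions of the maximum critical edges obtained in the proofs of \Cref{thm:main_result} and \Cref{lem:guards}.

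\textbf{Family $\mathcal A$.} This case needs no information about weights. Deleting any candidate from an admissible graph keeps at most one literal per variable. Moreover, $\first$ can only change from $0$ to a nonzero value, and only when the last literal of a present guard is removed. So every child of an admissible graph, scanned or not, lies in $\mathcal A\cup\mathcal F$.

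\textbf{Family $\mathcal P$.} Let $H=G_\alpha$ with $\alpha=\alpha_{<j}$. If $j=n+1$, then $H$ is a full-assignment graph and hence admissible, so the previous case applies. If $j\le n$, the induction step of \Cref{thm:main_result} shows that the maximum critical edge of $G_\alpha$ is one of the four group $\group1$ edges from $\winner,\loser$ to $\t_j,\f_j$. Hence every scanned pair is a group $\group1$ edge into a literal $\lit_i$ with $i\le j$.
\begin{itemize}
\item For $i<j$, the child $G_\alpha\setminus\{\lit_i\}$ has $\first=i\neq0$, so it lies in $\mathcal F$.
\item For $i=j$, the two children are the prefix graphs of the two extensions of $\alpha$, so they lie in $\mathcal P$.
\end{itemize}

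\textbf{Family $\mathcal F$.} This is the main obstacle. Let $\first(H)=\is\neq0$. Deleting any $u\neq\guard_\is$ leaves $\guard_\is$ present with both of its literals absent, so $\first(H\setminus\{u\})\neq0$ and the child stays in $\mathcal F$. The difficulty is $H\setminus\{\guard_\is\}$. That graph can have $\first=0$ while still containing both literals of some variable, so it would lie in none of the three families. I therefore have to show that this child is scanned only in harmless situations.

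By \Cref{lem:guards}, $\SSV(H)=\guard_\is$, and the proof of that lemma shows that the maximum critical edge has the form $\guard_\is\to u$. For every $u$ whose deletion keeps $\first=\is$ (Case 1 there), the edge $\guard_\is\to u$ is critical. Consequently, the maximum critical edge has weight at least that of the heaviest out-edge of $\guard_\is$ whose head $u$ does not trigger Case 2. The next step is to compare this with the edges entering $\guard_\is$. Since $\t_\is,\f_\is$ are absent, these come only from guards $\guard_k$ with $k<\is$, and they lie in group $\group{10}$. So $H\setminus\{\guard_\is\}$ is scanned only if $\guard_\is$ has no admissible out-edge heavier than group $\group{10}$. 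I expect this to mean:
\begin{itemize}
\item no literal $\lit$ of another variable is present, except possibly literals whose deletion triggers Case 2 (group $\group5$);
\item no clause is present (group $\group6$);
\item neither $\winner$ nor $\loser$ is present (group $\group8$).
\end{itemize}

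I would handle the Case 2 literals by the argument from the proof of \Cref{lem:guards}. A literal $\lit_j$ triggers Case 2 only if its guard $\guard_j$ is present. In that situation the edge $\guard_\is\to\guard_j$ (group $\pm\group{10}$) or some other guard edge is critical anyway, so I would need to check carefully whether such an edge can exceed the incoming edges $\guard_k\to\guard_\is$.

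Once this is settled, consider the case where $H\setminus\{\guard_\is\}$ really is scanned. Every literal $\lit_j$ still present in it should then be the unique literal of its variable, with $\guard_j$ present, because only such literals can fail to provide a heavy out-edge. In that case, $H\setminus\{\guard_\is\}$ either has $\first\neq0$ or is admissible. Pinning down this final case analysis on the guards is where I expect most of the effort.
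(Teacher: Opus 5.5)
Your proposal is correct and follows the paper's route. You induct over scanned children, and the only delicate child $H\setminus\{\guard_\is\}$ is settled by a dichotomy. If some variable $m$ has both literals in $H$, deleting $\t_m$ keeps $\first=\is$, so the group $\group5$ edge $\guard_\is\to\t_m$ is critical and outweighs every pair into $\guard_\is$ (at most group $\group{10}$). Otherwise $H\setminus\{\guard_\is\}$ has at most one literal per variable and lies in $\mathcal A\cup\mathcal F$ by definition.

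The Case~2 literal check and the ``final case analysis on the guards'' you anticipate are therefore unnecessary, and your last paragraph already completes the proof. In the one-literal-per-variable case you never need to decide whether $H\setminus\{\guard_\is\}$ is actually scanned.
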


\begin{corollary}\label{cor:families}
    Let $G$ be an induced subtournament of $G(\Phi)$ that arises in the recursive evaluation of
    $\SSV(G(\Phi))$.
    \begin{enumerate}
        \item If no guard candidate is present in $G$, then $G$ is admissible.
        \item If $\SSV(G)$ is a clause candidate or a literal candidate, then $G$ is admissible.
    \end{enumerate}
\end{corollary}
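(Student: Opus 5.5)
Both parts follow from \Cref{claim:reachability}: $G$ lies in one of $\mathcal{P}$, $\mathcal{A}$, $\mathcal{F}$, and in each part the hypothesis rules out every family except $\mathcal{A}$.

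\emph{Part 1.} Suppose no guard is present in $G$. By \Cref{def:firstx}, $\first(G)$ is a minimum over indices $i$ with $\guard_i\in V(G)$, so the set is empty and $\first(G)=0$. Hence $G\notin\mathcal{F}$. If $G\in\mathcal{A}$ we are done. Otherwise $G\in\mathcal{P}$, but by \Cref{def:prefix-subgraph} every prefix graph contains all guards $\guard_1,\dots,\guard_n$. So a guard-free $G$ can be a prefix graph only in the degenerate case $n=0$. In that case the prefix graph has no literals or guards, so it trivially satisfies the conditions of \Cref{def:admissible-graph}. Thus $G$ is admissible in all cases. (If one prefers to avoid the degenerate case, simply assume $n\ge 1$ without loss of generality.)

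\emph{Part 2.} Suppose $\SSV(G)$ is a clause candidate or a literal candidate. First, $G\notin\mathcal{F}$. Indeed, if $\first(G)=\is\neq 0$, then \Cref{lem:guards} (direction $\Leftarrow$) gives $\SSV(G)=\guard_\is$, a guard, which contradicts the hypothesis. So $\first(G)=0$. It remains to exclude a prefix graph $G_{\alpha_{<j}}$ that is not admissible. Such a graph has some variable with both literals present, so $j\le n$ and the prefix assignment is not full. The induction $P(j)$ in the proof of \Cref{thm:main_result} shows $\SSV(G_{\alpha_{<j}})\in\{\winner,\loser\}$ for every prefix graph. Neither of these is a clause or a literal, which is a contradiction. (Full-assignment prefix graphs are admissible in any case.) Hence $G\in\mathcal{A}$.

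The only delicate point is the $\mathcal{P}$ case in part 2. There we rely on the fact that the statement $P(j)$, established by backward induction in the proof of \Cref{thm:main_result}, holds for \emph{every} prefix assignment, and not just along the path the recursion actually takes. The rest is direct bookkeeping from the definitions and \Cref{lem:guards}.
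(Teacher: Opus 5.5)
Your proposal is correct and follows the paper's argument. You invoke \Cref{claim:reachability} and then exclude $\mathcal{P}$ (prefix graphs contain every guard, and are won by $\winner$ or $\loser$ by $P(j)$) and $\mathcal{F}$ (such graphs contain the uncovered guard $\guard_{\is}$, and are won by it by \Cref{lem:guards}). Your separate treatment of the degenerate case $n=0$ is a small bit of care the paper leaves implicit, while the remark about full-assignment prefix graphs is redundant since $P(n+1)$ already covers them.
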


\begin{proof}
    By \Cref{claim:reachability}, $G$ is a prefix graph, has $\first(G)\neq0$, or is admissible. Every prefix graph contains all guard candidates, and every graph with $\first(G)=\is\neq0$ contains the guard $\guard_{\is}$; this proves the first part. For the second part: every prefix graph is won by $\winner$ or $\loser$ (statement $P(j)$ in the proof of \Cref{thm:main_result}), and every graph with $\first(G)\neq0$ is won by a guard (\Cref{lem:guards}); so a clause or literal winner is only possible when $G$ is admissible.
\end{proof}

\paragraph{Winners of admissible graphs.}
The recursion below the full-assignment graphs runs on admissible graphs, whose winners were characterized in \Cref{sec:hardness} and \Cref{apx:missing-proofs}. The next claim characterizes the different types of admissible graphs depending on the winner and the candidates present. 

\begin{restatable}[Winners of admissible graphs]{claim}{winnersAdmissible}\label{claim:admissible-winners}
    Let $G$ be an admissible graph and let $A = \SSV(G)$, then the following hold.
    \begin{enumerate}[label=(\alph*)]
        \item If $\winner, \loser \in V(G)$, then $A \in \{\winner, \loser\}$.
        \item If $A = \winner$, then $\firstU(G) = 0$; that is, every present clause is satisfied by a present literal.
        \item If $A = \clause_{\ks}$ for some clause candidate, then $\loser \notin V(G)$ and $\firstU(G) = \ks$; in particular, $\clause_{\ks}$ is unsatisfied and every present clause $\clause_{k'}$ with $k' < \ks$ is satisfied by a present literal.
        \item If $A$ is a literal candidate, then $\winner, \loser \notin V(G)$ and $\firstU(G) = 0$.
    \end{enumerate}
\end{restatable}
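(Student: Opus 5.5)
The plan is to derive each part from the lemmas and claims already available for admissible graphs, splitting by which of $\winner,\loser$ are present. Let $G$ be admissible with $A=\SSV(G)$.

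\textbf{Four cases.} First, suppose $\winner,\loser\in V(G)$. If $\firstU(G)\neq0$, then \Cref{lem:unsat-D-wins} gives $A=\loser$. If $\firstU(G)=0$, I would strip the unguarded literals: let $B$ be the set of present literals whose guard is absent, so that $G\setminus B$ has the paired form $G(I,K)$ used in the proof of \Cref{lem:sat-C-wins}. Removing literals may make clauses unsatisfied, so $G\setminus B$ is either SAT or UNSAT. In the SAT case its winner is $\winner$ by the induction statement in the proof of \Cref{lem:sat-C-wins}. In the UNSAT case its winner is $\loser$ by the restricted claim in the proof of \Cref{lem:unsat-D-wins}. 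Either way \Cref{lem:adding_lits_C_D_wins} lifts the winner back to $G$, so $A\in\{\winner,\loser\}$, which is (a). Second, if $\loser\in V(G)$ and $\winner\notin V(G)$, then \Cref{lem:D-wins-with-clauses-no-C-general} gives $A=\loser$. Third, if $\winner\in V(G)$ and $\loser\notin V(G)$, then \Cref{lem:C-wins-SAT-no-D} gives $A=\winner$ when $\firstU(G)=0$, and \Cref{lem:clauses} gives $A=\clause_{\firstU(G)}$ otherwise. Fourth, if neither is present, \Cref{lem:clauses} gives $A=\clause_{\firstU(G)}$ whenever $\firstU(G)\neq0$. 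When $\firstU(G)=0$ in this case, \Cref{lem:guards} (with $\first(G)=0$) excludes guards and \Cref{lem:clauses} excludes clauses, so $A$ must be a literal.

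\textbf{Reading off (b)--(d).} Part (b): if $A=\winner$ with both designated candidates present, the first case shows $\firstU(G)=0$, since otherwise $\loser$ wins. If $\loser$ is absent, $\firstU(G)\neq0$ would force a clause winner by \Cref{lem:clauses}. Part (c): a clause winner cannot occur in the first two cases, since both produce $\winner$ or $\loser$. So $\loser\notin V(G)$, and then \Cref{lem:clauses} gives $\firstU(G)=\ks$. Part (d): a literal winner rules out the first three cases, so $\winner,\loser\notin V(G)$, and \Cref{lem:clauses} forces $\firstU(G)=0$.

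\textbf{Main obstacle.} The delicate point is (a) when $\firstU(G)=0$ but some literals lack guards, since \Cref{lem:sat-C-wins} is stated only for full-assignment graphs. I expect this to go through via the reduction to $G\setminus B$ above, but I need to confirm that \Cref{lem:adding_lits_C_D_wins}'s hypotheses hold for $G\setminus B$. In particular, every present literal of $G\setminus B$ must have its guard, and admissibility gives each guard a literal, so $G\setminus B$ is exactly of the form $G(I,K)$.
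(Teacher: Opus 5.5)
Your proposal is correct and follows essentially the same route as the paper. In (a) you strip the unguarded literals $B$, apply the internal induction statements from the proofs of \Cref{lem:sat-C-wins,lem:unsat-D-wins} to the paired core, and lift the result with \Cref{lem:adding_lits_C_D_wins}; parts (b)--(d) then come from \Cref{lem:clauses,lem:guards,lem:D-wins-with-clauses-no-C-general,lem:C-wins-SAT-no-D}. The only differences are organizational: you do an exhaustive four-way split on which of $\winner,\loser$ are present and read (b)--(d) off it, and you invoke \Cref{lem:unsat-D-wins} directly when $\firstU(G)\neq 0$, whereas the paper applies the core dichotomy uniformly.
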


\paragraph{\SSV\ winners are undefeated.}
We now show that when $\SSV(G) \in \{\winner, \loser, \guard_{\is}, \clause_{\ks}\}$, the winner is undefeated in $G$, so that \SV\ does not discard the maximum critical edge of $G$. We make use of \Cref{fig:dnd-paths} for a visual representation of the does-not-defeat paths used in the following lemma.

\begin{figure}[ht]
    \centering
    \tikzset{
        gnode/.style={circle, minimum size=3.0em, inner sep=1pt, font=\boldmath\small},
        winnernode/.style ={gnode, draw=winnerblue!70!black,   fill=winnerblue!10},
        losernode/.style   ={gnode, draw=loserred!70!black,    fill=loserred!10},
        literalnode/.style ={gnode, draw=literalgreen!70!black, fill=literalgreen!10},
        clausenode/.style  ={gnode, draw=clauseorange!80!black, fill=clauseorange!12},
        guardnode/.style   ={gnode, draw=guardpurple!75!black,  fill=guardpurple!10},
        edge/.style={->, >={{Latex[length=2mm, width=1.5mm]}}, thick},
        weight/.style={font=\small\itshape, fill=white, inner sep=1pt, rounded corners=1pt},
        edge1/.style={edge, draw=gOne},
        edge2/.style={edge, draw=gTwo},
        edge3/.style={edge, draw=gThree},
        edge4/.style={edge, draw=gFour},
        edge5/.style={edge, draw=gFive},
        edge6/.style={edge, draw=gSix},
        edge7/.style={edge, draw=gSeven},
        edge8/.style={edge, draw=gEight},
        edge10/.style={edge, draw=gTen},
        edge11/.style={edge, draw=gEleven},
        edge12/.style={edge, draw=gTwelve},
        defeats/.style={->, >={{Latex[length=2mm, width=1.5mm]}}, thick,
                        draw=loserred!80!black, dashed},
    }

    \begin{subfigure}[b]{0.32\textwidth}
        \centering
        \begin{tikzpicture}[scale=0.7, transform shape]
            \node[winnernode] (C)  at (0,0)      {$\winner$};
            \node[literalnode] (l) at (2.1,1.4)  {$\lit_j$};
            \node[clausenode]  (L) at (4.4,0)    {$\clause_k$};
            \draw[edge1] (C) -- (l) node[midway, weight] {\groupone};
            \draw[edge3] (l) -- (L) node[midway, weight] {\groupthree};
            \draw[defeats] (L) to[bend left=30] node[midway, weight] {\groupseven} (C);
        \end{tikzpicture}
        \caption{$\winner$ vs.\ clause $\clause_k$ (no guard present)}
        \label{fig:dnd-C-clause}
    \end{subfigure}
    \hfill
    \begin{subfigure}[b]{0.34\textwidth}
        \centering
        \begin{tikzpicture}[scale=0.64, transform shape]
            \node[winnernode]  (C)  at (0,0)      {$\winner$};
            \node[literalnode] (l)  at (1.6,1.4)  {$\lit_i$};
            \node[guardnode]   (X)  at (4.6,1.4)  {$\guard_i$};
            \node[clausenode]  (L)  at (6.2,0)    {$\clause_k$};
            \draw[edge1] (C) -- (l) node[midway, weight] {\groupone};
            \draw[edge2] (l) -- (X) node[midway, weight] {\grouptwo};
            \draw[edge6] (X) -- (L) node[midway, weight] {\groupsix};
            \draw[defeats] (L) to[bend left=22] node[midway, weight] {\groupseven} (C);
        \end{tikzpicture}
        \caption{$\winner$ vs.\ clause $\clause_k$ (guard $\guard_i$ present)}
        \label{fig:dnd-C-clause-guard}
    \end{subfigure}
    \hfill
    \begin{subfigure}[b]{0.32\textwidth}
        \centering
        \begin{tikzpicture}[scale=0.7, transform shape]
            \node[winnernode]  (C)  at (0,0)     {$\winner/\loser$};
            \node[literalnode] (l)  at (2.1,1.4) {$\lit_i$};
            \node[guardnode]   (X)  at (4.4,0)   {$\guard_i$};
            \draw[edge1] (C) -- (l) node[midway, weight] {\groupone};
            \draw[edge2] (l) -- (X) node[midway, weight] {\grouptwo};
            \draw[defeats] (X) to[bend left=30] node[midway, weight] {\groupeight} (C);
        \end{tikzpicture}
        \caption{$\winner/\loser$ vs.\ guard $\guard_i$}
        \label{fig:dnd-CD-guard}
    \end{subfigure}

    \vspace{1.6em}

    \begin{subfigure}[b]{0.40\textwidth}
        \centering
        \begin{tikzpicture}[scale=0.74, transform shape]
            \node[losernode]  (D)  at (0,0)      {$\loser$};
            \node[clausenode] (L)  at (2.1,1.4)  {$\clause_k$};
            \node[winnernode] (C)  at (4.4,0)    {$\winner$};
            \draw[edge7] (D) -- (L) node[midway, weight] {\groupseven};
            \draw[edge7] (L) -- (C) node[midway, weight] {\groupseven};
            \draw[defeats] (C) to[bend left=30] node[midway, weight] {\grouptwelve} (D);
        \end{tikzpicture}
        \caption{$\loser$ vs.\ $\winner$ (no guard present)}
        \label{fig:dnd-D-C}
    \end{subfigure}
    \hfill
    \begin{subfigure}[b]{0.50\textwidth}
        \centering
        \begin{tikzpicture}[scale=0.7, transform shape]
            \node[losernode]   (D)  at (0,0)      {$\loser$};
            \node[literalnode] (l)  at (1.6,1.4)  {$\lit_i$};
            \node[guardnode]   (X)  at (4.6,1.4)  {$\guard_i$};
            \node[winnernode]  (C)  at (6.2,0)    {$\winner$};
            \draw[edge1] (D) -- (l) node[midway, weight] {\groupone};
            \draw[edge2] (l) -- (X) node[midway, weight] {\grouptwo};
            \draw[edge8] (X) -- (C) node[midway, weight] {\groupeight};
            \draw[defeats] (C) to[bend left=22] node[midway, weight] {\grouptwelve} (D);
        \end{tikzpicture}
        \caption{$\loser$ vs.\ $\winner$ (guard $\guard_i$ present)}
        \label{fig:dnd-D-C-guard}
    \end{subfigure}

    \vspace{1.6em}

    \begin{subfigure}[b]{0.32\textwidth}
        \centering
        \begin{tikzpicture}[scale=0.7, transform shape]
            \node[guardnode]   (Xi) at (0,0)     {$\guard_{\is}$};
            \node[literalnode] (l)  at (2.1,1.4) {$\lit_j$};
            \node[guardnode]   (Xj) at (4.4,0)   {$\guard_j$};
            \draw[edge5] (Xi) -- (l)  node[midway, weight] {\groupfive};
            \draw[edge2] (l)  -- (Xj) node[midway, weight] {\grouptwo};
            \draw[defeats] (Xj) to[bend left=30] node[midway, weight] {\groupten} (Xi);
        \end{tikzpicture}
        \caption{$\guard_{\is}$ vs.\ $\guard_{j}$ ($j<\is$)}
        \label{fig:dnd-guard-guard}
    \end{subfigure}
    \hfill
    \begin{subfigure}[b]{0.32\textwidth}
        \centering
        \begin{tikzpicture}[scale=0.7, transform shape]
            \node[clausenode]  (Lk) at (0,0)     {$\clause_{\ks}$};
            \node[literalnode] (l)  at (2.1,1.4) {$\lit_i$};
            \node[guardnode]   (X)  at (4.4,0)   {$\guard_i$};
            \draw[edge4] (Lk) -- (l) node[midway, weight] {\groupfour};
            \draw[edge2] (l)  -- (X) node[midway, weight] {\grouptwo};
            \draw[defeats] (X) to[bend left=30] node[midway, weight] {\groupsix} (Lk);
        \end{tikzpicture}
        \caption{$\clause_{\ks}$ vs.\ guard $\guard_i$, for $\lit_i \notin \clause_\ks$, chained $\lit_i$ also possible}
        \label{fig:dnd-clause-guard}
    \end{subfigure}
    \hfill
    \begin{subfigure}[b]{0.32\textwidth}
        \centering
        \begin{tikzpicture}[scale=0.7, transform shape]
            \node[clausenode]  (Lk) at (0,0)     {$\clause_{\ks}$};
            \node[literalnode] (l)  at (2.1,1.4) {$\lit_i$};
            \node[clausenode]  (Lp) at (4.4,0)   {$\clause_{k'}$};
            \draw[edge4] (Lk) -- (l)  node[midway, weight] {\groupfour};
            \draw[edge3] (l)  -- (Lp) node[midway, weight] {\groupthree};
            \draw[defeats] (Lp) to[bend left=30] node[midway, weight] {\groupeleven} (Lk);
        \end{tikzpicture}
        \caption{$\clause_{\ks}$ vs.\ earlier clause $\clause_{k'}$ ($k'<\ks$), for $\lit_i \notin \clause_\ks, \lit_i \in \clause_{k'}$, chained $\lit_i$ also possible}
        \label{fig:dnd-clause-clause}
    \end{subfigure}

    \caption{Does-not-defeat paths verifying that each \SSV\ winner that is not a literal candidate is undefeated (\Cref{lem:sv-undefeated}), grouped by winner: $\winner$ (top row, together with the guard attacker shared by $\winner$ and $\loser$), $\loser$ (middle row), and the structural winners $\guard_{\is}$ and $\clause_{\ks}$ (bottom row). In each subfigure, the dashed edge is the edge by which the opponent beats the winner; the solid, group-colored path is the does-not-defeat witness, since every edge on the path lies in a group of strictly smaller index than the dashed edge. In Cases 1 and 2 of \Cref{lem:sv-undefeated}, the applicable witness depends on whether a guard candidate is present in $G$: if so, the three-edge routes of (\subref{fig:dnd-C-clause-guard}) and (\subref{fig:dnd-D-C-guard}) apply; if not, $G$ is admissible and the two-edge routes of (\subref{fig:dnd-C-clause}) and (\subref{fig:dnd-D-C}) apply.}
    \label{fig:dnd-paths}
\end{figure}

\begin{lemma}[\SSV\ winners are undefeated]\label{lem:sv-undefeated}
Let $G$ be any induced subtournament of $G(\Phi)$ that arises in the recursive evaluation of $\SSV(G(\Phi))$, and suppose that $\SSV(G)$ is not a literal candidate. Then $\SSV(G)$ is undefeated in $G$.
\end{lemma}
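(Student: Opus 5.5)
We argue by cases on the identity of $A:=\SSV(G)$, which by hypothesis is one of $\winner$, $\loser$, a guard $\guard_{\is}$, or a clause $\clause_{\ks}$. In each case we list every candidate $B\in V(G)$ with $w(B,A)>0$ (read off from \Cref{fig:gadget-edges}). For each such $B$ we exhibit a directed path from $A$ to $B$ inside $G$ all of whose edges lie in groups of strictly smaller index than the group of $B\to A$, as in \Cref{fig:dnd-paths}. The structural information needed to guarantee that the intermediate vertices of these paths are present comes from \Cref{claim:reachability}, \Cref{cor:families}, \Cref{lem:guards}, \Cref{lem:clauses}, \Cref{claim:admissible-winners}, and statement $P(j)$ from the proof of \Cref{thm:main_result}.

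\textbf{Case $A=\winner$.} The attackers are clauses (via $\group7$) and guards (via $\group8$). If a guard $\guard_i$ is present, then $G$ is either a prefix graph or admissible, since a graph in family $\mathcal F$ is won by a guard. In both cases $\guard_i$ has a present literal $\lit_i$, which gives the path $\winner\to\lit_i\to\guard_i$ through $\group1,\group2$, and for a clause attacker the path $\winner\to\lit_i\to\guard_i\to\clause_k$ through $\group1,\group2,\group6$. If no guard is present, $G$ is admissible by \Cref{cor:families}, so $\firstU(G)=0$ by \Cref{claim:admissible-winners}(b). Each present clause $\clause_k$ therefore has a satisfying literal $\lit_j$, giving $\winner\to\lit_j\to\clause_k$ through $\group1,\group3$.

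\textbf{Case $A=\loser$.} The attackers are guards ($\group8$) and $\winner$ ($\group12$). Guards are handled exactly as above, using $\loser\to\lit_i$ in $\group1$. For $\winner$: if some guard $\guard_i$ is present, use $\loser\to\lit_i\to\guard_i\to\winner$. Otherwise $G$ is admissible and contains both $\winner$ and $\loser$. If $G$ has no clause, then \Cref{lem:c-wins-without-clauses} gives winner $\winner$, a contradiction. So some clause $\clause_k$ exists, and we use $\loser\to\clause_k\to\winner$ in $\group7$.

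\textbf{Case $A=\guard_{\is}$.} By \Cref{lem:guards}, $\first(G)=\is$. The attackers are the literals $\t_{\is},\f_{\is}$ (both absent) and guards $\guard_j$ with $j<\is$ ($\group10$). For $j<\is$ we have $\first\ne j$, so $\guard_j$ has a present literal $\lit_j$, and we use $\guard_{\is}\to\lit_j\to\guard_j$ through $\group5,\group2$.

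\textbf{Case $A=\clause_{\ks}$.} This is the main obstacle. By \Cref{cor:families} and \Cref{claim:admissible-winners}(c), $G$ is admissible, $\loser\notin V(G)$, and $\firstU(G)=\ks$. The attackers are satisfying literals (none are present), guards $\guard_i$ ($\group6$), and earlier clauses $\clause_{k'}$ with $k'<\ks$ ($\group11$); note also that $\clause_{\ks}\to\winner$ has the correct orientation, so $\winner$ is not an attacker.

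For a guard $\guard_i$, admissibility gives a present $\lit_i$. Since $\clause_{\ks}$ is unsatisfied, $\lit_i\notin C_{\ks}$, so the edge $\clause_{\ks}\to\lit_i$ lies in $\group4$, and we use $\clause_{\ks}\to\lit_i\to\guard_i$ through $\group4,\group2$.

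For $\clause_{k'}$ with $k'<\ks$, this clause is satisfied by some present $\lit_i$. That literal does not satisfy $\clause_{\ks}$, so we use $\clause_{\ks}\to\lit_i\to\clause_{k'}$ through $\group4,\group3$.

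The delicate point throughout is confirming that the required intermediate vertices are actually present in every reachable $G$. This is exactly why we rely on the family classification: for instance, that a guard present in a $\winner$- or $\loser$-won graph always retains a literal. If some subcase were not covered by that classification, I would instead argue it by \Cref{obs:blocking}-type contradiction with the guard/clause lemmas.
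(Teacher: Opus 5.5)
Your proposal is correct and follows the paper's proof essentially step for step. It uses the same four-way case split on $\SSV(G)$, the same lists of attackers, and the same does-not-defeat witness paths, including the guard/no-guard subcases for $\winner$ and $\loser$ justified via \Cref{cor:families}, \Cref{claim:admissible-winners} and \Cref{lem:c-wins-without-clauses}. The only cosmetic difference is that the paper gets ``every present guard keeps a literal'' directly from \Cref{lem:guards}: a non-guard winner forces $\first(G)=0$ in any induced subtournament. So your detour through the family classification, and your closing fallback remark, are unnecessary.
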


\begin{proof}
    Fix such a subtournament $G$, and let $A = \SSV(G)$ be its winner; by assumption, $A$ is one of the designated candidates $\winner$ or $\loser$, a guard $\guard_{\is}$, or a clause $\clause_{\ks}$. We treat each case separately, showing that for every candidate $B$ that beats $A$ head-to-head, there is a does-not-defeat path $A \rightsquigarrow B$ all of whose edges have strictly larger weight than $B \to A$. In each case, we also show that the candidates of the path are present in $G$; \Cref{fig:dnd-paths} shows all such paths. Throughout, note that whenever the winner is not a guard, \Cref{lem:guards} gives $\first(G) = 0$, so every present guard $\guard_i$ has one of its literals present; we denote such a literal by $\lit_i$.

    \begin{enumerate}
        \item ($\SSV(G) = \winner$.) The only in-edges to $\winner$ are from clauses $\clause_k$
        (group $\group7$) and from guards $\guard_i$ (group $\group8$). For a guard $\guard_i$, the path is
        $\winner \to \lit_i \to \guard_i$ ($\group1,\group2$); $\lit_i$ must be present for $\guard_i$ to not win. For a clause $\clause_k$, we distinguish two
        subcases.
        \begin{itemize}
            \item If some guard $\guard_i$ is present in $G$, the path is $\winner \to \lit_i \to \guard_i \to \clause_k$ ($\group1,\group2,\group6$).
            \item If no guard is present in $G$, then $G$ is admissible by \Cref{cor:families}, and by \Cref{claim:admissible-winners}(b) the clause $\clause_k$ is satisfied by some present literal $\lit_j$; the path is $\winner \to \lit_j \to \clause_k$ ($\group1,\group3$).
        \end{itemize}
        All path edges lie in groups of smaller index than the edges into $\winner$, so $\winner$ is undefeated.
        
        \item  ($\SSV(G) = \loser$.) The only in-edges to $\loser$ are from $\winner$ (group $\group{12}$) and from guards $\guard_i$ (group $\group8$). For a guard $\guard_i$, the path is $\loser \to \lit_i \to \guard_i$ ($\group1,\group2$); $\lit_i$ must be present for $\guard_i$ to not win. Now suppose $\winner \in V(G)$; we show that $\winner$ does not defeat $\loser$, in two subcases.
        \begin{itemize}
            \item If some guard $\guard_i$ is present in $G$, the path is $\loser \to \lit_i \to \guard_i \to \winner$ ($\group1,\group2,\group8$), and all three groups have index smaller than $12$.
            \item If no guard is present in $G$, then $G$ is admissible by \Cref{cor:families}. For $\loser$ to win in $G$, a clause candidate must then be present; otherwise, by  \Cref{lem:c-wins-without-clauses} we would get $\SSV(G) = \winner \neq \loser$ for an admissible graph with $\winner, \loser \in V(G)$ and $\clause_k\notin V(G).$ The path $\loser \to \clause_k \to \winner$ ($\group7,\group7$) then shows that $\winner$ does not defeat $\loser$.
            
        \end{itemize}
        All path edges lie in groups of smaller index than the edges into $\loser$, so $\loser$ is undefeated.
        
        \item ($\SSV(G) = \guard_{\is}$.) By \Cref{lem:guards}, $\first(G) = \is$, so the literals $\t_{\is}, \f_{\is}$ are absent, and the edges $\{\t_{\is}, \f_{\is}\}\to\guard_{\is}$ (group $\group2$) are gone. The only remaining in-edges to $\guard_{\is}$ are from guards $\guard_j$ with $j < \is$ (group $\group{10}$). For each such $\guard_j$, one of its literals $\lit_j$ is present, since $j < \is = \first(G)$. Hence, the path $\guard_{\is} \to \lit_j \to \guard_j$ has edges in groups $\group5$ and $\group2$, both of index smaller than $10$. Hence $\guard_j$ does not defeat $\guard_{\is}$, and $\guard_{\is}$ is undefeated in $G$.
        
        \item ($\SSV(G) = \clause_{\ks}$.) By \Cref{cor:families}, $G$ is admissible, and by \Cref{claim:admissible-winners}(c), $\loser \notin V(G)$ and $\firstU(G) = \ks$. Since $\clause_{\ks}$ is unsatisfied, there is no edge $\lit_i\to \clause_\ks$ in group $\group3$. So the only candidates beating $\clause_{\ks}$ are the guards $\guard_i$ (group $\group6$) and the earlier clauses $\clause_{k'}$ with $k' < \ks$ (group $\group{11}$). Since $\clause_{\ks}$ is unsatisfied, the edge $\clause_{\ks} \to \lit_i$ is positive (group $\group4$) for every present literal $\lit_i$; and $\first(G) = 0$, so every present guard has its literal present. Hence, for a guard $\guard_i$, the path is $\clause_{\ks} \to \lit_i \to \guard_i$ ($\group4,\group2$). For an earlier clause $\clause_{k'}$, it is satisfied (as $k' < \ks = \firstU(G)$), so a satisfying literal $\lit_j$ gives the path $\clause_{\ks} \to \lit_j \to \clause_{k'}$ ($\group4,\group3$). All path edges lie in groups of smaller index than the edges into $\clause_{\ks}$, so $\clause_{\ks}$ is undefeated.\qedhere
    \end{enumerate}
\end{proof}

\paragraph{Literal candidates are immaterial}

It remains to deal with the subtournaments whose \SSV\ winner is a literal candidate. Such subtournaments do arise in the recursion (e.g., deleting $\winner$ and $\loser$ from a small admissible graph makes a literal win) and the winning literal need not be undefeated. The next lemma shows that these subtournaments can never cause \SV\ and \SSV\ to disagree in our construction. Whenever the winner is not a literal, no literal is even eligible as an \SV\ winner since they are defeated. Whenever the \SSV\ winner is a literal, the \SV\ winner is also a literal (though possibly a different one) since they are the only undefeated candidates (which \SV\ must select from). This is all that the comparison in \Cref{lem:sv-agrees-ssv} requires.

\begin{lemma}[Literal candidates and undefeatedness]\label{lem:literals-immaterial}
    Let $G$ be an induced subtournament of $G(\Phi)$ that arises in the recursive evaluation of $\SSV(G(\Phi))$.
    \begin{enumerate}[label=(\alph*)]
        \item If $\SSV(G)\neq \lit_i$ for some literal candidate $\lit_i$, then $\SSV(G)$ defeats \textbf{every} literal candidate in $V(G)$.
        \item If $\SSV(G)=\lit_i$ for some literal candidate $\lit_i$, then \textbf{every} undefeated candidate of $G$ is a literal candidate, hence $\SV(G)$ is a literal candidate (potentially different than $\lit_i$).
    \end{enumerate}
\end{lemma}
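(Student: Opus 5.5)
The plan is to prove both parts directly from \Cref{obs:blocking}. In each case I exhibit a positive edge $P\to Q$ and check, using \Cref{fig:gadget-edges}, that every edge leaving $Q$, or every edge entering $P$, lies in a group of strictly larger index. That forces $P$ to defeat $Q$. The structural facts about $G$ will come from \Cref{cor:families}, \Cref{claim:admissible-winners} and \Cref{lem:guards}.

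\textbf{Part (a).} Let $A=\SSV(G)$ be a non-literal; by the argument in \Cref{lem:sv-undefeated}, $A\in\{\winner,\loser,\guard_{\is},\clause_{\ks}\}$. For each type, I will show that $A\to\lit$ is positive for every present literal $\lit$ and that every edge entering $A$ is lighter.
\begin{itemize}
\item If $A=\winner$, the edge $\winner\to\lit$ is in $\group1$. The edges entering $\winner$ are in $\group7$ (from clauses) and $\group8$ (from guards), both lighter.
\item If $A=\loser$, the edge $\loser\to\lit$ is in $\group1$. The edges entering $\loser$ are in $\group8$ and $\group12$, both lighter.
\item If $A=\guard_{\is}$, \Cref{lem:guards} gives $\first(G)=\is$, so $\t_{\is},\f_{\is}\notin V(G)$. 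Every present literal then receives a $\group5$ edge from $\guard_{\is}$. The only edges entering $\guard_{\is}$ come from guards $\guard_j$ with $j<\is$, in $\group10$, which is lighter.
\item If $A=\clause_{\ks}$, then $G$ is admissible by \Cref{cor:families}, and \Cref{claim:admissible-winners}(c) gives $\loser\notin V(G)$ and $\firstU(G)=\ks$. So $\clause_{\ks}$ is unsatisfied. Hence no present literal satisfies it, and every present literal receives a $\group4$ edge from $\clause_{\ks}$. The edges entering $\clause_{\ks}$ are in $\group6$ (guards) and $\group11$ (earlier clauses). There are no $\group3$ edges (the clause is unsatisfied) and no $\group7$ edge (since $\loser$ is absent). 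All of these are lighter than $\group4$.
\end{itemize}
In every case \Cref{obs:blocking} applies, so $A$ defeats every present literal.

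\textbf{Part (b).} Suppose $\SSV(G)=\lit_i$. By \Cref{cor:families}, $G$ is admissible, and by \Cref{claim:admissible-winners}(d), $\winner,\loser\notin V(G)$ and $\firstU(G)=0$. So the only non-literal candidates present are guards and clauses, and I show each of them is defeated by some present literal.
\begin{itemize}
\item For a guard $\guard_j$: admissibility gives $\first(G)=0$, so $\lit_j$ is present and $\lit_j\to\guard_j$ lies in $\group2$. The edges leaving $\guard_j$ are in $\group5$, $\group6$ and $\group10$; the $\group8$ edges go to $\winner$ and $\loser$, which are absent. All of these are lighter than $\group2$, so by blocking on the out-edges of the defeated candidate, $\lit_j$ defeats $\guard_j$.
\item For a clause $\clause_k$: since $\firstU(G)=0$, some present $\lit_j$ satisfies it, giving an edge $\lit_j\to\clause_k$ in $\group3$. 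The edges leaving $\clause_k$ are in $\group4$ and $\group11$; the $\group7$ edge goes to $\winner$, which is absent. Both are lighter than $\group3$, so $\lit_j$ defeats $\clause_k$.
\end{itemize}
Hence every undefeated candidate is a literal. It remains to note that the set of undefeated candidates is nonempty, since it is the Split Cycle winner set. Because $\SV(G)$ is chosen among the undefeated candidates (or is the unique one), $\SV(G)$ is a literal.

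The main obstacle is the bookkeeping: the out-edge and in-edge lists must be taken only over candidates actually present in $G$. This is exactly why the absence of $\winner,\loser$ (in (b)) and of $\loser$ (in the clause case of (a)), taken from \Cref{claim:admissible-winners}, is essential. Without it, the $\group7$ and $\group8$ edges would fail to be lighter than the defeating edge.
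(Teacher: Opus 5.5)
Your proof is correct and is essentially the paper's own argument. In (a) you block via the in-edges of the non-literal winner, using \Cref{lem:guards}, \Cref{cor:families} and \Cref{claim:admissible-winners}(c) exactly as the paper does; in (b) you block via the out-edges of each guard and clause after excluding $\winner,\loser$ by \Cref{claim:admissible-winners}(d). One small correction to your closing remark: $\group7$ and $\group8$ edges are lighter than the $\group2$, $\group3$, $\group4$ defeating edges anyway. So in (b) the absence of $\winner,\loser$ matters only because otherwise they would be further non-literal candidates whose defeat you would have to show, and in the clause case of (a) the blocking step needs only that $\clause_{\ks}$ is unsatisfied (an edge $\loser\to\clause_{\ks}$ in $\group7$ would be harmless).
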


\begin{proof}
    (a) Let $A = \SSV(G)$ and let $\lit_i \in V(G)$ be any literal candidate for some $i\in [n]$. In each case below, every incoming edge to $A$ lies in a group of strictly larger index (hence smaller weight) than the edge $A \to \lit_i$, so $A$ defeats $\lit_i$ by \Cref{obs:blocking}.
    \begin{itemize}
        \item $A = \winner$: the edge $\winner \to \lit_i$ lies in group $\group1$; the in-edges of $\winner$ lie in groups $\group7$ and $\group8$.
        \item $A = \loser$: the edge $\loser \to \lit_i$ lies in group $\group1$; the in-edges of $\loser$ lie in groups $\group{12}$ and $\group8$.
        \item $A = \guard_{\is}$: by \Cref{lem:guards}, $\first(G) = \is$, so $\t_{\is},\f_{\is} \notin V(G)$, and every present literal belongs to another variable; hence the edge $\guard_{\is} \to \lit_i$ lies in group $\group5$. The in-edges of $\guard_{\is}$ from its own literals (group $\group2$) are absent, leaving only the edges from guards $\guard_j$ with $j<\is$, in group $\group{10}$.
        \item $A = \clause_{\ks}$: by \Cref{cor:families}, $G$ is admissible, and by \Cref{claim:admissible-winners}(c), $\clause_{\ks}$ is unsatisfied and $\loser \notin V(G)$; hence the edge $\clause_{\ks} \to \lit$ lies in group $\group4$ for every present literal $\lit$. The in-edges of $\clause_{\ks}$ from satisfying literals (group $\group3$) and from $\loser$ (group $\group7$) are absent, leaving only the edges from guards (group $\group6$) and earlier clauses (group $\group{11}$).
    \end{itemize}
    
    \noindent(b) Let $\SSV(G)=\lit_i$ for some $i\in[n]$. By \Cref{cor:families}, $G$ is admissible, and by \Cref{claim:admissible-winners}(d), $\winner, \loser \notin V(G)$ and every present clause is satisfied by a present literal ($\firstU(G)=0$); moreover $\first(G) = 0$ since $G$ is an admissible graph. We show that every guard and every clause of $G$ is defeated (which only requires one candidate to defeat them, as opposed to being undefeated by \emph{any} other candidate); since $\winner$ and $\loser$ are absent, only literal candidates can then be undefeated.
    \begin{itemize}
        \item Let $\guard_j \in V(G)$. Its literal $\lit_j$ is present, and the edge $\lit_j\to \guard_j$ is in group $\group2$. Every positive edge leaving $\guard_j$ lies in group $\group5$ (other literals), $\group6$ (clauses), or $\group{10}$ (later guards)  all strictly smaller than $\group2$. By \Cref{obs:blocking}, $\lit_j$ defeats $\guard_j$.
        \item Let $\clause_k \in V(G)$, and let $\lit_j$ be a present literal satisfying it; the edge $\lit_j \to \clause_k$ lies in group $\group3$. Every positive edge leaving $\clause_k$ lies in group $\group4$ (non-satisfying literals) or $\group{11}$ (later clauses) all strictly smaller than $\group3$. By \Cref{obs:blocking}, $\lit_j$ defeats $\clause_k$.
    \end{itemize}
    Finally, by \Cref{def:does-not-defeat} and the definition of \SV, the candidate $\SV(G)$ is undefeated in $G$; since literals are the only undefeated candidates, $\SV(G)$ is a literal candidate.
\end{proof}

\paragraph{\SV\ agrees with \SSV.}
Having shown that every \SSV\ winner that is not a literal is undefeated, we now use this to prove that \SV\
and \SSV\ agree on every subtournament that the recursion reaches.

\begin{lemma}[\SV\ agrees with \SSV]\label{lem:sv-agrees-ssv}
    Let $G$ be any induced subtournament of $G(\Phi)$ that arises in the recursive evaluation of $\SSV(G(\Phi))$. If $\SSV(G)\in\{\winner,\loser,\clause_k,\guard_i:k\in[m], i\in[n]\}$, then $\SV(G) = \SSV(G)$; otherwise, $\SV(G),\SSV(G)\in\{\lit_i: i\in[n]\}$.
\end{lemma}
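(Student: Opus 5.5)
I will prove the statement by strong induction on $|V(G)|$, over subtournaments $G$ that arise in the recursive evaluation of $\SSV(G(\Phi))$. The family of arising subtournaments is closed in the right sense. If $G$ arises, then every $G\setminus\{B\}$ that the \SSV\ scan of $G$ evaluates also arises, namely those $B$ whose pair has weight at least that of the maximum \SSV\ critical edge. The base case $|V(G)|=1$ is trivial.

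For the inductive step, let $A=\SSV(G)$ and let $A\to B^*$ be the maximum \SSV\ critical edge, so $\SSV(G\setminus\{B^*\})=A$.

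\textbf{Case 1: $A$ is not a literal.} By \Cref{lem:sv-undefeated}, $A$ is undefeated in $G$, so the pair $(A,B^*)$ is eligible in the \SV\ scan. I will show that the \SV\ scan of $G$ selects exactly $(A,B^*)$. There are two things to check.

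\begin{enumerate}
\item Every pair $(A',B')$ scanned before it, with $A'$ undefeated, fails for \SV. Such a $G\setminus\{B'\}$ is also evaluated by the \SSV\ scan, so it arises and the induction hypothesis applies to it. Since $(A',B')$ precedes the maximum critical edge, $\SSV(G\setminus\{B'\})\neq A'$.
\begin{itemize}
\item If $\SSV(G\setminus\{B'\})$ is not a literal, the induction hypothesis gives $\SV(G\setminus\{B'\})=\SSV(G\setminus\{B'\})\neq A'$.
\item If it is a literal, then $\SV(G\setminus\{B'\})$ is also some literal, so I only need $A'$ not to be a literal. This holds because \Cref{lem:literals-immaterial}(a) says $A$ defeats every literal of $G$, so no literal is undefeated in $G$ and hence no literal can be an eligible $A'$.
\end{itemize}
\item The pair $(A,B^*)$ itself succeeds. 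Here $G\setminus\{B^*\}$ arises, and its \SSV\ winner is $A$, which is not a literal. The induction hypothesis then gives $\SV(G\setminus\{B^*\})=A$.
\end{enumerate}

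Hence $\SV(G)=A$. I also need the \SV\ base rule, "unique undefeated candidate", not to fire incorrectly. If $A$ is the unique undefeated candidate, then $\SV(G)=A$ directly, which is consistent.

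\textbf{Case 2: $A$ is a literal.} By \Cref{lem:literals-immaterial}(b), every undefeated candidate of $G$ is a literal, and $\SV(G)$ is always undefeated. So $\SV(G)$ is a literal. This case needs no induction.

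\textbf{Main obstacle.} The delicate point is making the induction hypothesis applicable to every subtournament the \SV\ scan consults. The \SV\ scan only visits pairs with an undefeated first element, and it stops no later than $(A,B^*)$. So every $G\setminus\{B'\}$ it visits is also one the \SSV\ scan visits, since the \SSV\ scan checks all pairs up to the weight of $(A,B^*)$.

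Without unique weights there could be ties at the boundary. However, $G(\Phi)$ is uniquely weighted, so the maximum \SSV\ critical edge is a single pair and "weight at least that of the maximum critical edge" is unambiguous. I will state this observation explicitly.
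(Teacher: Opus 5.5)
Your proposal is correct and follows essentially the same route as the paper. Both arguments use strong induction on $|V(G)|$ over the arising subtournaments: \Cref{lem:sv-undefeated} makes $(A,B^*)$ eligible for \SV, \Cref{lem:literals-immaterial}(a) rules out literal tails, the induction hypothesis on the \SSV-scanned subtournaments $G\setminus\{B'\}$ rules out earlier pairs, and \Cref{lem:literals-immaterial}(b) settles the literal case. The only difference is cosmetic: you check directly that each earlier eligible pair fails, whereas the paper assumes \SV\ stops at some earlier pair and derives a contradiction.
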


\begin{proof}
    The second assertion is \Cref{lem:literals-immaterial}(b). We prove the first by strong induction on $|V(G)|$.
    
    \paragraph{Base case} ($|V(G)| = 1$). Both rules return the unique candidate, so $\SV(G) = \SSV(G)$.
    
    \paragraph{Induction hypothesis.} Assume the statement holds for every induced subtournament $G'$ arising in the recursive evaluation of $\SSV(G(\Phi))$ with $|V(G')| < |V(G)|$.
    
    \paragraph{Induction step.} Let $A= \SSV(G)$ be a non-literal winner, determined by the maximum-weight critical edge $A\to B$; that is, $\SSV(G\setminus\{B\}) = A$. By \Cref{lem:sv-undefeated}, $A$ is undefeated in $G$.
    
    If $G$ has a unique undefeated candidate $U$, then $A= U$, and \SV\ returns $U = A$ directly, so $\SV(G) = \SSV(G)$. Otherwise, recall that \SV\ scans the ordered pairs $(U,V)$ with $U$ undefeated in $G$ in decreasing $w(U,V)$, returning the first with $\SV(G\setminus\{V\}) = U$. We make two observations. First, since \SSV\ scans \emph{all} ordered pairs in decreasing weight down to $(A,B)$, every pair with weight at least $w(A,B)$ was scanned by \SSV, so for every such pair $(U,V)$ the subtournament $G\setminus\{V\}$ arises in the recursive evaluation of $\SSV(G(\Phi))$ and the induction hypothesis applies to it. Second, by \Cref{lem:literals-immaterial}(a), no literal candidate is undefeated in $G$, so every tail $U$ that \SV\ scans is a non-literal candidate.
    
    The pair $(A, B)$ is critical for \SV, since the winner $A$ is undefeated in $G$, and
    $\SSV(G\setminus\{B\}) = A$ is not a literal, the induction hypothesis gives
    $\SV(G\setminus\{B\}) = \SSV(G\setminus\{B\}) = A$. 
    
    It remains to show that \SV\ does not stop at any earlier pair. Suppose, for contradiction, that \SV\ stops at a pair $(U,V)$ with $w(U,V) > w(A,B)$; then $U$ is undefeated in $G$ (hence not a literal) and $\SV(G\setminus\{V\}) = U$. The subtournament $G\setminus\{V\}$ arises in the evaluation of $\SSV(G(\Phi))$, so the induction hypothesis applies to it, and we consider two cases. If $\SSV(G\setminus\{V\})$ is a literal candidate, then $\SV(G\setminus\{V\})$ is also a literal candidate, contradicting $\SV(G\setminus\{V\}) = U$ with $U$ not a literal. If $\SSV(G\setminus\{V\})$ is not a literal, then $\SSV(G\setminus\{V\}) = \SV(G\setminus\{V\}) = U$, so $(U,V)$ is a critical edge of $G$ for \SSV\ with weight larger than $w(A,B)$, contradicting the maximality of $(A,B)$.
    
    Hence \SV\ stops exactly at $(A,B)$, giving $\SV(G) = A= \SSV(G)$.
\end{proof}

\begin{theorem}\label{thm:sv-pspace}
    Winner determination for Stable Voting is \pspace-complete.
\end{theorem}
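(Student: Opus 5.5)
We prove the two directions separately: membership in \pspace\ by a depth-first recursion, and hardness by reusing the reduction of \Cref{thm:main_result} unchanged and invoking \Cref{lem:sv-agrees-ssv}.

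\emph{Membership.} We mirror the argument for \SSV. Given a weighted tournament $G$ on $N$ candidates, we compute $\SV(G)$ by a depth-first recursion. Deciding whether a candidate $A$ is undefeated is in polynomial time: for each $B$ with $\weight(B,A)>0$, we run a reachability search from $A$ to $B$ using only edges of weight at least $\weight(B,A)$. The procedure first collects the undefeated candidates of the current subtournament. If there is exactly one, it returns it. Otherwise, it scans the pairs $(A,B)$ with $A$ undefeated in decreasing order of $\weight(A,B)$, recursively computes $\SV(G\setminus\{B\})$, and returns the first $A$ that matches. The recursion depth is at most $N$, and each frame stores only the current subtournament and the scan position, so the total space is polynomial.

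\emph{Hardness.} We reduce from \TQBF\ via the same polynomial-time map $\Phi\mapsto G(\Phi)$ of \Cref{margin_groups}. The key point is that $G(\Phi)$ itself arises in the recursive evaluation of $\SSV(G(\Phi))$. By the proof of \Cref{thm:main_result} (statement $P(1)$), $\SSV(G(\Phi))\in\{\winner,\loser\}$, with $\SSV(G(\Phi))=\winner$ exactly when $\Phi$ is true. Since this winner is not a literal candidate, \Cref{lem:sv-agrees-ssv} applied to $G=G(\Phi)$ gives $\SV(G(\Phi))=\SSV(G(\Phi))$. Hence $\SV(G(\Phi))=\winner$ if and only if $\Phi$ is true. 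Because the weights can be doubled to obtain a realizable profile without changing the order of margins, hardness also holds for profile inputs.

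\emph{Main obstacle.} The only delicate step is checking that the hypotheses of \Cref{lem:sv-agrees-ssv} are met at the top level: $G(\Phi)$ must count as a subtournament arising in the evaluation, which holds by definition, and its \SSV\ winner must be non-literal, which $P(1)$ supplies. All remaining work has already been done by \Cref{lem:sv-undefeated,lem:literals-immaterial,lem:sv-agrees-ssv}, so the proof amounts to assembling these pieces.
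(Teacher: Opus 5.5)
Your proposal is correct and follows the paper's proof essentially step for step. Membership uses the same depth-first recursion, augmented with a polynomial-time reachability test for undefeatedness; hardness notes that $G(\Phi)$ trivially arises in its own evaluation and that $\SSV(G(\Phi))\in\{\winner,\loser\}$ is not a literal, so \Cref{lem:sv-agrees-ssv} yields $\SV(G(\Phi))=\SSV(G(\Phi))$ (your remark about doubling the weights for realizability just repeats the paper's preliminaries).
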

\begin{proof}
    Membership in \pspace\ follows as in \Cref{thm:main_result}: \SV\ admits the same depth-first recursion, and the additional test of whether a candidate is undefeated is a reachability computation that requires only polynomial space. For hardness, let $\Phi$ be a TQBF instance and let $G(\Phi)$ be the tournament of \Cref{margin_groups}. The tournament $G(\Phi)$ trivially arises in the recursive evaluation of $\SSV(G(\Phi))$, and by \Cref{thm:main_result} its winner $\SSV(G(\Phi))$ is $\winner$ or $\loser$ (in particular, not a literal candidate). Hence, by \Cref{lem:sv-agrees-ssv}, $\SV(G(\Phi)) = \SSV(G(\Phi))$, and so $\SV(G(\Phi)) = \winner$ if and only if $\Phi$ is true. The same polynomial-time reduction therefore establishes \pspace-hardness for \SV.
\end{proof}

\section{Discussion}

In this paper, we prove that winner determination is a \pspace-complete problem under both Stable Voting and Simple Stable Voting. In fact, our reduction establishes hardness for a somewhat stronger, promise version: even when we are promised that the winner is one of two distinguished candidates, $\winner$ or $\loser$, it is \pspace-hard to decide whether the winner is $\winner$. Thus, hardness persists even when the outcome has been narrowed down to a binary choice.

Several natural directions remain open. First is to identify tractable special cases. Since the recursive definitions of SV and SSV can be evaluated by brute force on a small candidate set, winner determination is fixed-parameter tractable in the number of relevant candidates. This suggests looking for structural conditions under which the recursion can be confined to a small subset. More generally, it would be interesting to find graph-theoretic or margin-based assumptions that allow for tractable winner determination.

A second direction is to understand the average-case complexity of winner determination. Our construction relies on carefully chosen edge orientations and weights. This leaves open whether natural stochastic models of preferences or pairwise margins lead to polynomial-time winner determination with high probability.

Finally, one can study the associated search problem: given a tournament, output an \SV\ or \SSV\ winner. This problem is total, since the rules always select a winner. But, it is not obviously a $\mathsf{TFNP}$ problem: $\mathsf{TFNP}$ concerns total search problems whose solutions are polynomial-time verifiable, whereas deciding whether a proposed candidate is the \SV\ or \SSV\ winner is \pspace-complete. 
We conjecture that this search problem is complete for a natural total-search class beyond $\mathsf{TFNP}$. Recent work has introduced several such classes~\cite{kleinberg2021total,pasarkar2023extremal}; identifying the right class that captures recursive voting rules such as \SV\ and \SSV\ is an interesting open problem.

\paragraph{Implementation.} We implemented \SV\ and \SSV\ in C++~\cite{stablevotingcode} to check our construction as we developed it and to see how many candidates exact winner determination can handle in practice. Our solver memoizes the winner of each candidate subset it reaches during the recursion. It agrees with the reference package \texttt{pref\_voting}~\cite{holliday2025prefvoting} on every winner we tested, and is one or two orders of magnitude faster on random uniquely weighted tournaments. Even so, it reaches only about $30$ candidates, in line with the range reported in~\cite{Hol25}: the memo has one entry per subset, so both time and space double with each additional candidate. Memoizing trades the exponential running time of the depth-first recursion of \Cref{thm:main_result} for exponential space; the theorem says that, unless $\mathsf{P} = \pspace$, no implementation avoids both.

\section*{Acknowledgements}
The authors used ChatGPT as a sounding board for quickly exploring and stress-testing possible directions the authors were considering during gadget development, as well as for proofreading, identifying potential issues in proofs and figures, and assisting with portions of the C++ testing code, which was used by the authors to evaluate candidate constructions and inform the development of the final construction. The authors made all substantive design decisions, and reviewed and verified all AI-assisted material. The authors take full responsibility for all claims and results in the paper.

\clearpage
\bibliographystyle{alpha}
\bibliography{ref}

\clearpage
\appendix
\section{Constructing a Voting Profile}\label{apx:vot-profile} 

In \Cref{sec:prelims} we noted that any weighted tournament can be induced by a voting profile after appropriately rescaling the weights. Here we present this construction formally, using classic techniques of McGarvey~\cite{McGarvey53} and Debord~\cite{debord1987caracterisation}.

Recall that \SSV\ and \SV\ depend only on the relative order of their margins, so rescaling the weights does not change the winner as long as their order is preserved. Let $N=|V(G(\Phi))|$ Our construction produces a weighted tournament $G(\Phi)$ with weights $1,\ldots, \binom{N}{2}$. For a weighted tournament to be realizable by a voting profile, all weights must have the same parity. Multiplying every weight by 2 preserves their order and makes all of them even.

\paragraph{Construction} Let $S=\{A_1,\ldots,A_N\}$  be a set of $N$ alternatives and $G$ a weighted tournament (with all even weights). Fix an arbitrary ordering of the alternatives. For any positive edge $A\to B$ with weight $2x$, let $C_1, \ldots, C_{N-2}$ be the remaining alternatives in this fixed order, and add the following $2x$ voters:
\begin{itemize}
    \item $x$ voters with preferences $A\succ B \succ C_1 \succ \ldots \succ
    C_{N-2}$
    \item $x$ voters with preferences $C_{N-2} \succ \ldots \succ
    C_{1} \succ A \succ B$
\end{itemize}
All $2x$ voters rank $A$ above $B$, so this contributes $+2x$ to the margin of $A$ over $B$, matching the edge weight. For every other pair of alternatives $(U,W)$ there are $x$ voters with $U\succ W$; the two groups rank them in opposite orders, so $x$ voters have $U\succ W$ and $x$ have $W\succ U$, contributing $0$ to that margin. Thus each edge is realized independently of the others. Repeating this for every edge, the total number of voters is
\[
 \sum_{i=1}^{\binom{N}{2}} 2i \;=\; \binom{N}{2}\!\left(\binom{N}{2}+1\right) \;=\; O(N^4),
 \]
 and each ballot has length $N$, thus the profile is constructed in polynomial time. Since its margins induce the same edge order as $G(\Phi)$, the profile has the same \SSV\ and \SV\ winner.

\section[Edge Order for Tournament in Figure 1]{Edge Order for Tournament in \protect\Cref{fig:almost-condorcet}}\label{apx:table}

For completeness, \Cref{tab:ranks} gives the order of edges in the full weighted tournament shown in \Cref{fig:almost-condorcet}, from which the \SSV\ winner can be verified (see \texttt{fig1\_tournament.cpp} in~\cite{stablevotingcode}.)

\begin{table}[ht]
\centering
\small
\setlength{\tabcolsep}{4pt}
\begin{tabular}{c|rrrrrrrrrrrrrrr}
 & $c_{1}$ & $c_{2}$ & $c_{3}$ & $c_{4}$ & $c_{5}$ & $c_{6}$ & $c_{7}$ & $c_{8}$ & $c_{9}$ & $c_{10}$ & $c_{11}$ & $c_{12}$ & $c_{13}$ & $c_{14}$ & $c_{15}$ \\ \hline
$c_{1}$ & . & 75 & 81 & 76 & 77 & 1 & -39 & -43 & -47 & -17 & 6 & 8 & 11 & 13 & 15 \\
$c_{2}$ & -75 & . & 78 & 82 & 83 & -35 & 2 & -44 & -48 & -18 & 7 & 9 & -26 & -29 & 16 \\
$c_{3}$ & -81 & -78 & . & 79 & 80 & -36 & -40 & 3 & -49 & -19 & -21 & 10 & -27 & 14 & -32 \\
$c_{4}$ & -76 & -82 & -79 & . & 84 & -37 & -41 & -45 & 4 & -20 & -22 & -24 & 12 & -30 & -33 \\
$c_{5}$ & -77 & -83 & -80 & -84 & . & -38 & -42 & -46 & -50 & 5 & -23 & -25 & -28 & -31 & -34 \\
$c_{6}$ & -1 & 35 & 36 & 37 & 38 & . & 85 & 86 & 87 & 51 & 52 & 53 & 54 & 55 & 56 \\
$c_{7}$ & 39 & -2 & 40 & 41 & 42 & -85 & . & 88 & 89 & 57 & 58 & 59 & 60 & 61 & 62 \\
$c_{8}$ & 43 & 44 & -3 & 45 & 46 & -86 & -88 & . & 90 & 63 & 64 & 65 & 66 & 67 & 68 \\
$c_{9}$ & 47 & 48 & 49 & -4 & 50 & -87 & -89 & -90 & . & 69 & 70 & 71 & 72 & 73 & 74 \\
$c_{10}$ & 17 & 18 & 19 & 20 & -5 & -51 & -57 & -63 & -69 & . & 91 & 92 & 93 & 94 & 95 \\
$c_{11}$ & -6 & -7 & 21 & 22 & 23 & -52 & -58 & -64 & -70 & -91 & . & 96 & 97 & 98 & 99 \\
$c_{12}$ & -8 & -9 & -10 & 24 & 25 & -53 & -59 & -65 & -71 & -92 & -96 & . & 100 & 101 & 102 \\
$c_{13}$ & -11 & 26 & 27 & -12 & 28 & -54 & -60 & -66 & -72 & -93 & -97 & -100 & . & 103 & 104 \\
$c_{14}$ & -13 & 29 & -14 & 30 & 31 & -55 & -61 & -67 & -73 & -94 & -98 & -101 & -103 & . & 105 \\
$c_{15}$ & -15 & -16 & 32 & 33 & 34 & -56 & -62 & -68 & -74 & -95 & -99 & -102 & -104 & -105 & . \\
\end{tabular}
\caption{The rank of the weights of the tournament in \Cref{fig:almost-condorcet}. Entry $(i,j)$ is the rank of the margin between $c_i$ and $c_j$ among all $105$ pairs, where rank $1$ is the largest margin; a positive sign means $c_i$ defeats
$c_j$. Because \SV\ and \SSV\ depend only on the order of the margins, this ranking alone determines the winner.}\label{tab:ranks}
\end{table}
\newpage
\section{Construction}\label{apx:code}
Algorithm~\ref{alg:build-tournament} gives an explicit procedure for constructing the tournament $G(\Phi)$ from a \TQBF\ instance $\Phi$.
\refstepcounter{algorithm}

\begin{breakablealgorithm}{\thealgorithm\ \textsc{BuildTournament}$(\Phi)$}
\label{alg:build-tournament}

\begin{algorithmic}[1]
\Require A prenex \TQBF\ instance $\Phi = Q_1x_1\cdots Q_nx_n\,\varphi$, where
$\varphi = C_1\wedge \cdots \wedge C_m$ is a CNF formula.
\Ensure A uniquely weighted tournament $G(\Phi)=(V(\Phi),\weight(\Phi))$.

\State $V(\Phi)\gets \{\winner,\loser\}
\cup \{\clause_k : k\in[m]\}
\cup \{\t_i,\f_i,\guard_i : i\in[n]\}.$

\State Let $W \gets \binom{|V(\Phi)|}{2}$.

\Procedure{Assign}{$A,B$}
    \State $\weight(A,B)\gets W$
    \State $\weight(B,A)\gets -W$
    \State $W\gets W-1$
\EndProcedure

\Statex
\Statex \textbf{Group $\mathsf g_1$: designated candidates to literals}
\For{$i=1$ to $n$}
    \If{$Q_i=\exists$}
        \State \Call{Assign}{$\winner,\f_i$}; \Call{Assign}{$\winner,\t_i$}; \Call{Assign}{$\loser,\f_i$}; \Call{Assign}{$\loser,\t_i$}
    \ElsIf{$Q_i=\forall$}
        \State \Call{Assign}{$\loser,\f_i$}; \Call{Assign}{$\loser,\t_i$}; \Call{Assign}{$\winner,\f_i$}; \Call{Assign}{$\winner,\t_i$}
    \EndIf
\EndFor

\Statex
\Statex \textbf{Group $\mathsf g_2$: literals to their own guards}
\For{$i=1$ to $n$}
    \State \Call{Assign}{$\f_i,\guard_i$}; \Call{Assign}{$\t_i,\guard_i$}
\EndFor

\Statex
\Statex \textbf{Group $\mathsf g_3$: literals to clauses they satisfy}
\For{$k=1$ to $m$}
    \For{$i=1$ to $n$}
        \If{$x_i$ appears positively in $C_k$}
            \State \Call{Assign}{$\t_i,\clause_k$}
        \ElsIf{$x_i$ appears negatively in $C_k$}
            \State \Call{Assign}{$\f_i,\clause_k$}
        \EndIf
    \EndFor
\EndFor

\Statex
\Statex \textbf{Group $\mathsf g_4$: clauses to literals not appearing in them}
\For{$k=1$ to $m$}
    \For{$i=1$ to $n$}  
        \If{$x_i$ appears positively in $C_k$}
            \State \Call{Assign}{$\clause_k,\f_i$}
        \ElsIf{$x_i$ appears negatively in $C_k$}
            \State  \Call{Assign}{$\clause_k,\t_i$}
        \EndIf
        \If{$x_i$ does not appear in $C_k$}
            \State \Call{Assign}{$\clause_k,\f_i$}; \Call{Assign}{$\clause_k,\t_i$}
        \EndIf
    \EndFor
\EndFor
\end{algorithmic}
\end{breakablealgorithm}

\begin{breakablealgorithm}{\thealgorithm\ \textsc{BuildTournament}$(\Phi)$ (continued)}

\begin{algorithmic}[1]
\Statex \textbf{Group $\mathsf g_5$: guards to other literals}
\For{$i=1$ to $n$}
    \For{$j=1$ to $n$}
        \If{$i\neq j$}
            \State \Call{Assign}{$\guard_i,\f_j$}; \Call{Assign}{$\guard_i,\t_j$}
        \EndIf
    \EndFor
\EndFor

\Statex
\Statex \textbf{Group $\mathsf g_6$: guards to clauses}
\For{$i=1$ to $n$}
    \For{$k=1$ to $m$}
        \State \Call{Assign}{$\guard_i,\clause_k$}
    \EndFor
\EndFor

\Statex
\Statex \textbf{Group $\mathsf g_7$: clauses to $\winner$, and $\loser$ to clauses}
\For{$k=1$ to $m$}
    \State \Call{Assign}{$\clause_k,\winner$};
\EndFor
\For{$k=1$ to $m$}
     \Call{Assign}{$\loser,\clause_k$}
\EndFor

\Statex
\Statex \textbf{Group $\mathsf g_8$: guards to designated candidates}
\For{$i=1$ to $n$}
    \State \Call{Assign}{$\guard_i,\winner$}; \Call{Assign}{$\guard_i,\loser$}
\EndFor

\Statex
\Statex \textbf{Group $\mathsf g_9$: total order on literals}

\For{$i=1$ to $n$}
    \State \Call{Assign}{$\f_i,\t_i$}
    \For{$j=i+1$ to $n$}
        \State \Call{Assign}{$\f_i,\t_j$}
    \EndFor
    \For{$j=i+1$ to $n$}
        \State \Call{Assign}{$\t_i,\f_j$}
    \EndFor
\EndFor
\For{$i=1$ to $n$}
    \For{$j=i+1$ to $n$}
        \State \Call{Assign}{$\f_i,\f_j$}
    \EndFor
\EndFor

\For{$i=1$ to $n$}
    \For{$j=i+1$ to $n$}
        \State \Call{Assign}{$\t_i,\t_j$}
    \EndFor
\EndFor

\end{algorithmic}
\end{breakablealgorithm}
\newpage
\begin{breakablealgorithm}{\thealgorithm\ \textsc{BuildTournament}$(\Phi)$ (continued)}

\begin{algorithmic}[1]

\Statex \textbf{Group $\mathsf g_{10}$: total order on guards}
\For{$i=1$ to $n$}
    \For{$j=i+1$ to $n$}
        \State \Call{Assign}{$\guard_i,\guard_j$}
    \EndFor
\EndFor

\Statex
\Statex \textbf{Group $\mathsf g_{11}$: total order on clauses}
\For{$k=1$ to $m$}
    \For{$k'=k+1$ to $m$}
        \State \Call{Assign}{$\clause_k,\clause_{k'}$}
    \EndFor
\EndFor

\Statex
\Statex \textbf{Group $\mathsf g_{12}$: designated candidates}
\State \Call{Assign}{$\winner,\loser$}

\State \Return $G(\Phi)=(V(\Phi),\weight(\Phi))$
\end{algorithmic}
\end{breakablealgorithm}

\newpage
\section[Deferred Proofs of Section 4]{Deferred Proofs of \Cref{sec:hardness}}\label{apx:missing-proofs}

\subsection{Subtournament-related lemmas}

The next lemma is the basic robustness statement used throughout the appendix. It allows us to safely ignore extra single literals and focus on the part of the graph where literals appear together with their guards.

\addingLits*

    \begin{proof}
        We proceed by induction on $t\coloneqq|B|$ for some $B$ as defined above.

        \paragraph{Base case} ($t=0$) This is trivially true.

        \paragraph{Inductive Hypothesis} Fix $t \ge 1$ and assume the claim holds for $t-1$. That is, given any $B$ of size $t$ (as defined above),
            \[ \SSV(G \cup ( B \setminus \{\lit_i\} )) = \SSV(G) \quad  \forall \, \lit_i \in B \]

        \paragraph{Inductive Step} Fix any arbitrary $B$ of size $t$ and define $H \coloneqq G \cup B$. Let $M \in \{\winner, \loser\}$ be the designated winner, $\SSV(G) = M$.
        
        By the inductive hypothesis, we know that the edges $M \rightarrow \lit_i$ (in group $\group1$) are critical edges of $H$ for all $\lit_i \in B$, since $\SSV(H\setminus\{\lit_i\}) = M$. Let $M \rightarrow \lit_{i^*}$ be the largest among those ($\is = \argmin_i \{\lit_i\in B\}$).  Note that the only other possible critical edge of larger weight is $M \rightarrow \lit_j$ for $j<\is,\;\lit_j \in V(G)$, i.e., the designated winner to some other literal already present in $G$ with a smaller index. However, the critical edge produced by removing $\lit_j$ is $\guard_j \rightarrow \lit_j$ by \Cref{lem:guards} since $\first(H\setminus\{\lit_j\})=j$, therefore the largest set of critical edges in $H$ is $M \rightarrow \lit_\is,\, \lit_\is \in B$ and $\SSV(H)=M$.
        
    \end{proof}

\Cref{lem:adding_lits_C_D_wins} allows us to focus on admissible graphs for which all literals are paired with their guards whenever we want to prove the winner is $\winner$ or $\loser$. We formalize this in the observation below.

\begin{observation}\label{obs:decomposition}
Let $G$ be an arbitrary admissible graph. Let $S=\{\t_i,\f_i,\guard_i : i\in\Vars\}\cap V(G)$ be the set of present literals and guards. By definition of admissible graphs, $\first(G)=0$ and for every $i$ at most one literal is present. Define $I \coloneqq \{\,i\in\Vars : \guard_i \in S\,\}$ as the set of present guards. Since $\first(G)=0$, every guard present in $S$ has its corresponding literal present as well. Hence, for each $i\in I$, there is a unique literal
$\lit_i \in \{\t_i,\f_i\}\cap S$. We write
\[
S_I \coloneqq \{\guard_i,\lit_i : i\in I\}
\qquad\text{and}\qquad
B \coloneqq S\setminus S_I.
\]
Thus, $S_I$ is the paired core, while $B$ consists of the remaining single literals whose corresponding guards are absent. By \Cref{lem:adding_lits_C_D_wins}, once \winner\ or \loser\ is shown to win on the paired core, adding back the literals in $B$ does not change the winner. Therefore, in the proofs below, we first work on the paired core and then restore the literals in $B$ at the end.
\end{observation}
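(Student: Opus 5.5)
The observation is mostly a reorganization of definitions, so I would verify its ingredients directly from \Cref{def:admissible-graph} and \Cref{def:firstx} and then check that the hypotheses of \Cref{lem:adding_lits_C_D_wins} hold for the split.

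\textbf{Step 1: the pairing $i\mapsto\lit_i$ is well defined.} Since $G$ is admissible, $\first(G)=0$. Hence for each $i\in I$, i.e.\ each $i$ with $\guard_i\in V(G)$, at least one of $\t_i,\f_i$ lies in $V(G)$. Admissibility also says at most one of them is present. So there is exactly one literal $\lit_i\in\{\t_i,\f_i\}\cap S$, and the set $S_I=\{\guard_i,\lit_i:i\in I\}$ is well defined. By construction, $B=S\setminus S_I$ consists only of literals $\lit_j$ with $\guard_j\notin V(G)$.

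\textbf{Step 2: the paired core is admissible and meets the claim's hypotheses.} Let $G^\circ:=G\setminus B$. Deleting literals can only shrink $\{\t_i,\f_i\}\cap V$, so $G^\circ$ still has at most one literal per variable. Next I would check $\first(G^\circ)=0$. Every literal in $B$ belongs to a variable $j$ whose guard $\guard_j$ is absent. Therefore removing $\lit_j$ cannot create an uncovered guard $\guard_j$, and the literal $\lit_i$ of every present guard $\guard_i$ with $i\in I$ is retained. So $G^\circ$ is admissible. Moreover, each literal $\lit_i$ of $G^\circ$ has $\guard_i\in G^\circ$, which is exactly the structural hypothesis of \Cref{lem:adding_lits_C_D_wins}.

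\textbf{Step 3: $B$ is a legal set of added literals.} $B$ contains at most one literal per variable, since it is a subset of the literals of the admissible graph $G$. For each $\lit_j\in B$, we have $\{\t_j,\f_j\}\cap V(G^\circ)=\emptyset$: the literal $\lit_j$ itself was removed, and the other literal of $x_j$ was never in $G$. Finally, $G^\circ\cup B=G$.

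\textbf{Step 4: apply \Cref{lem:adding_lits_C_D_wins}.} Whenever one proves $\SSV(G^\circ)\in\{\winner,\loser\}$ on the paired core, the claim gives $\SSV(G)=\SSV(G^\circ)$. This is the stated reduction.

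The only point that needs care is Step~2. One must confirm that stripping the unguarded literals neither creates an uncovered guard nor changes anything relevant for SAT/UNSAT. For the former, the argument is that each stripped literal's guard is already absent. For the latter, the observation only promises that the winner transfers once it is shown to be $\winner$ or $\loser$ on the core, and the case-specific proofs handle satisfiability of $G^\circ$ separately. I do not expect any genuine obstacle here.
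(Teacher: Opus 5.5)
Your proposal is correct and follows the same reasoning the paper gives inline in the observation itself. Admissibility plus $\first(G)=0$ yields a unique literal for each present guard, and \Cref{lem:adding_lits_C_D_wins} then transfers a $\winner$/$\loser$ winner from the paired core $G\setminus B$ back to $G$. You additionally spell out the hypothesis checks that the paper leaves implicit: that $G\setminus B$ is admissible with every literal guarded, that each literal in $B$ has no literal of its variable in the core, and that $(G\setminus B)\cup B=G$.
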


\subsubsection{\texorpdfstring{$\loser$ wins in Admissible Graphs without $\winner$ or clauses}{D wins in Admissible Graphs without C or clauses}}

\begin{claim}[$\loser$ wins admissible graphs without $\winner$ or $\clause_k$'s] \label{lem:d_wins_without_C_or_L}
Let $G$ be an admissible graph such that $\winner,\clause_k\notin V(G)$ for all $k\in[m]$ and $\loser\in V(G).$ Then $\SSV(G)=\loser$.

    \begin{proof}

Using \Cref{obs:decomposition}, it suffices to prove the claim for the paired core. We write $S=S_I\cup B$, where $S_I$ is the paired core and $B$ is the set of extra single literals. Let $G_I$ be the induced subtournament on $V(G_I)=\{\loser\}\cup S_I$. We show that $\SSV(G_I)=\loser$ by induction on the number of pairs, then by \Cref{lem:adding_lits_C_D_wins}, it follows that $\SSV(G)=\loser.$ For each integer $t\ge 0$, define $P(t)$ to be the statement: 

\medskip

$P(t):$  $\forall J\subseteq I$ with $|J|=t$, if $G_J$ is the induced subtournament on $V(G_J)=\{\loser\}\cup S_J,$ then $\SSV(G_J)=\loser$.

\medskip

We prove $P(t)$ by induction on $t$. We note that at each step the graphs differ by two nodes, a literal $\lit_i$ and its guard $\guard_i$ for some $i\in \Vars$.

        \paragraph{Base Case: $(t=0)$} then $J=\emptyset$ and $G=\{\loser\}$, by definition $\SSV(G)=\loser.$ 
        \paragraph{Induction hypothesis.}
        Fix $t\ge 1$ and assume $P(t-1)$ holds. 

        \paragraph{Inductive step:} Fix $J$ with $|J|=t.$ Let $G:=G_J$. We will show that $\SSV(G)=\loser$ by showing that the edge $D\rightarrow X_r,$ where $r=\max\{J\}$, is the largest critical edge of $G$. We will break the proof into the following three steps, building up from smaller subtournaments where we know the winner:

        \begin{enumerate}[
                    align=left,
                    itemindent=0pt,
                    labelindent=0pt,
                    labelwidth=2pt,
                    labelsep=0.5em,
                    leftmargin=\dimexpr2.3em\relax
                ]
            \item[\textbf{Step 1:}] Show that $\SSV(G\setminus\{\guard_i,\lit_i\})=\loser$ for all $i\in J.$
            \item[\textbf{Step 2:}]
            Show $\SSV(G\setminus\{\guard_i\})=\loser$ with the largest critical edge being $\loser\rightarrow\lit_i$ for all $i\in J.$            
            \item[\textbf{Step 3:}] Show $\SSV(G)=\loser$ with the largest critical edge being $\loser\rightarrow\guard_r.$
        \end{enumerate}

    \noindent Towards \textbf{Step 1}, consider the subtournament $G\setminus\{\lit_i,\guard_i\}$ for some $i\in J.$ By the induction hypothesis, $\SSV(G\setminus\{\lit_i,\guard_i\})=\loser$. By the definition of critical edges (\Cref{def:critical-edges})
        \begin{itemize}
            \item the edge $\loser\rightarrow\lit_i$ is a critical edge of $G\setminus\{\guard_i\}$, and
            \item the edge $\loser\rightarrow\guard_i$ is a critical edge of $G\setminus\{\lit_i\}$ .
        \end{itemize}

        \noindent Towards \textbf{Step 2}, let us focus on the subtournament $G\setminus\{\guard_i\}.$ We have shown that $\loser\to\lit_i$ is a critical edge of $G\setminus\{\guard_i\},$ next we will show that is also the largest. Notice that the only edges that could have a larger weight than $\loser\rightarrow\lit_i$ (which are in $\group1$) are $\loser\rightarrow\lit_j$ for $j<i.$ However, the critical edge produced by removing $\lit_j$ is $\guard_j\to\lit_j$ ($-\group2$) since $\first(G\setminus\{\guard_i,\lit_j\})=j,$ and by \Cref{lem:guards} $\SSV(G\setminus\{\guard_i,\lit_j\})=\guard_j.$ Hence, we can conclude that $\SSV(G\setminus\{\guard_i\})=\loser.$

        Finally, toward \textbf{Step 3,}We just shown that  $\SSV(G\setminus\{\guard_i\})=\loser$. Hence, $D\rightarrow\guard_i$ ($-\group8$) is a critical edge in $G$ for all $i.$ The largest among those is $\loser\rightarrow\guard_r,$ where $r=\max\{J\}$.  To conclude the proof that $\SSV(G)=D$, we must show that $\loser\rightarrow\guard_r$ is larger (less negative) than the remaining critical edges, defined by removing any other node ($\lit_i$ or $\loser$). We will consider both cases: 
        
        \begin{enumerate}
            \item \textbf{(Removing $\lit_i$).} By \Cref{lem:guards} that $\SSV(G\setminus\{\lit_i\})=\guard_i$, since $\first(G\setminus\{\lit_i\})=i$. Thus, the critical edges in $G$ produced by removing $\lit_i$ are $\guard_i\rightarrow\lit_i$ in group $-\group2 \; (< -\group8)$.
            \item \textbf{(Removing $\loser$).}  First we note that $\first(G\setminus\{\loser\})=0$, hence $\SSV(G\setminus\{\loser\})\neq X_j, \forall j$ by \Cref{lem:guards}. Therefore the critical edge must be $\lit_i\rightarrow\loser$ for some $i$, but these edges are in group $-\group1 \; (<-\group8)$.
        \end{enumerate}
        
        These are all possible deletions. Therefore $\loser\to \guard_r$ is the maximum critical edge of $G_J$, and
            \[ \SSV(G_J)=\loser. \]
        This proves $P(t)$. Taking $J=I$ gives $\operatorname{SSV}(G_I)=\loser$, and adding back the single literals in $B$ using \Cref{lem:adding_lits_C_D_wins} yields $\SSV(G)=\loser$.

    \end{proof}

\end{claim}

\begin{claim}[Without $\loser$ or $\clause$, $\winner$ wins]\label{lem:c_wins_without_D_or_L}
Let $G$ be an admissible graph such that $\loser,\clause_k\notin V(G)$ for all $k\in[m],$ and $\winner\in V(G).$ Then $\SSV(G)=\winner.$

Let $S \subseteq \{\t_i,\f_i,\guard_i : i\in\Vars\}$ be a subset of literals and guards such that $\first(G_S)=0$ and, for every $i\in\Vars$, at most one of $\{\t_i,\f_i\} \in S$.
Let $G$ be the induced subtournament on $V(G)=\{\winner\}\cup S.$ Then $\SSV(G)=\winner$.

    \begin{proof}

       Let $G^{\loser}{=(G\setminus\{\winner\}})\cup\{\loser\}.$ Then $V(G^{\loser})=\{\loser\}\cup S$, and $G^{\loser}$ satisfies the conditions of \Cref{lem:d_wins_without_C_or_L}. Thus $\SSV(G^{\loser})=\loser.$

       Next, we will show that $G$ and $G^\loser$ are order-isomorphic. First, the set $S$ is the same in both graphs; all edges with both endpoints in $S$ are unchanged. Thus, we only need to consider edges between $S$ and $\winner$ in $G$, respectively $\loser$ in $G^{\loser}.$ By construction, these are exactly the edges in $\group1$ and $\group8$, and replacing $\winner$ by $\loser$ preserves both their orientations and their relative order in the weight ranking. Hence the subtournaments induced by $\{\winner\}\cup S$ and by $\{\loser\}\cup S$ are identical after relabeling $\winner$ as $\loser$. Therefore, $\SSV(G)=\winner.$

    \end{proof}

\end{claim}

\subsubsection{\texorpdfstring{$\loser$}{D} wins in Admissible Graphs without \texorpdfstring{$\winner$}{C}}

\DwithoutC*
    \begin{proof}
        Using \Cref{obs:decomposition}, we are going to decompose into the different types of nodes. Let $S_I$ be the paired core of literals and guards and $B$ the set of extra single literals. And let $K\subseteq[m]$ be the present clauses. By \Cref{lem:adding_lits_C_D_wins}, it suffices to prove the claim for the paired core. Let $G_{I,K}$ be the induced subtournament on $V(G_{I,K})=\{\loser\}\cup\{L_k:k\in K\}\cup S_I$. We will show that $\SSV(G_{I,K})=\loser$ and thus $\SSV(G)=\loser$.

        We prove the statement by induction on the number of clauses. For each integer $t\ge 0$, define the statement $P(t)$ as follows:
        
        \medskip
        \noindent
        $P(t)$: $\forall \, J\subseteq I, \forall \, K'\subseteq K$ with $|K'|=t$, the $\SSV(H_{J,K'})=\loser$ where $H_{J,K'}$ is the induced subtournament on 
        \[V(H_{J,K'})=\{\loser\}\cup \{\lit_j,\guard_j:j\in J\} \cup \{\clause_k:k\in K'\}.\]

        \paragraph{Base case: $(t=0)$.}
        Let $J\subseteq I$ and let $K'=\emptyset$. Then $V(H_{J,K'})=\{\loser\}\cup \{\lit_j,\guard_j:j\in J\}$, so $\SSV(H_{J,K'})=\loser$ by \Cref{lem:d_wins_without_C_or_L}. Hence $P(0)$ holds.
        
        \paragraph{Induction hypothesis.}
        Fix $t\ge 1$ and assume that $P(r)$ holds for every $r<t$.
        
        \paragraph{Induction step.}
        We prove $P(t)$. Fix $J\subseteq I$, $K'\subseteq K$ with $|K'|=t$, and $H\coloneqq H_{J,K'}$. We assume $|J|>0$, since $|J|=0$ implies that $\loser$ is a Condorcet winner of the corresponding graph. By definition, since every present guard in $H$ has its corresponding literal also present, we have $\first(H)=0$. We will show that the largest critical edge of $H$ is of the form $\loser\to \clause_k$ for some $k\in K'$, which implies $\SSV(H)=\loser$.
        
        We consider the critical edge produced by removing each type of node.
        
        \begin{enumerate}
            \item \textbf{(Removing a clause $\clause_k$ .)} The graph $H\setminus \{\clause_k\}$ is exactly of the form covered by $P(t-1)$, with the same set $J$ and clause set $K'\setminus\{k\}$. Since $|K'\setminus\{k\}|=t-1$, the induction hypothesis gives $\SSV(H\setminus\{ \clause_k\})=\loser$. Therefore, $\loser\to \clause_k$ is a critical edge of $H$, and it lies in group $\group7$.
        
            \item \textbf{(Removing a literal $\lit_i$ .)} By assumption $|J|>0.$ Since $\guard_i$ is still present and $\lit_i$ was the unique present literal of variable $i$, we have $\first(H\setminus \{\lit_i\})=i$. By \Cref{lem:guards}, $\SSV(H\setminus \{\lit_i\})=\guard_i$. Hence the critical edge produced by removing $\lit_i$ is $\guard_i\to \lit_i$, which lies in group $-\group2$. Hence, this critical edge is smaller than $\loser\to \clause_k$ in $\group_7.$
        
            \item \textbf{(Removing $\loser$ .)} Since removing $\loser$ does not affect literals or guards, we still have $\first(H\setminus \{\loser\})=0$. Hence, by \Cref{lem:guards}, no guard can win $H\setminus \{\loser\}$. We distinguish two cases.
        
            \begin{enumerate}
                \item If $H\setminus \{\loser\}$ is UNSAT, then by \Cref{lem:clauses}, $\SSV(H\setminus \{\loser\})=\clause_{k^*}$, where $\clause_{k^*}=\firstU(H\setminus \{\loser\})$ is the first unsatisfied clause. Thus the critical edge produced by removing $\loser$ is $\clause_{k^*}\to \loser$, which lies in group $-\group7$.
        
                \item If $H\setminus \{\loser\}$ is SAT, then $\firstU(H\setminus \{\loser\})=0$, so by \Cref{lem:clauses} no clause can win $H\setminus \{\loser\}$. As no guard can win $H\setminus \{\loser\}$, the winner must be some literal $\lit_j$, and the critical edge produced by removing $\loser$ is $\lit_j\to \loser$, which lies in group $-\group1$.
            \end{enumerate}
        
            In either case, the critical edge produced by removing $\loser$ is negative, hence strictly smaller than every positive group $\group7$ edge.

           \item \textbf{(Removing a guard $\guard_i$.)}
            Let $H_i:=H\setminus\{\guard_i\}$. We show that the critical edge
            produced by removing $\guard_i$ has weight $<\group7$. First, $\first(H_i)=0$, so by \Cref{lem:guards}, no guard can win $H_i$.
            
            We claim that, for every $k\in K'$, $\SSV(H_i\setminus\{\clause_k\})=\loser.$ Indeed, $H_i\setminus\{\clause_k,\lit_i\} = H_{J\setminus\{i\},K'\setminus\{k\}},$ so by the induction hypothesis it is won by $\loser$. Since $\guard_i\notin V(H_i\setminus\{\clause_k\})$, adding back the single
            literal $\lit_i$ does not change a designated winner, by \Cref{lem:adding_lits_C_D_wins}. Hence
            \[
                \SSV(H_i\setminus\{\clause_k\})=\loser .
            \]
            Thus every edge $\loser\to\clause_k$, $k\in K'$, is a critical edge
            of $H_i$, in group $\group7$.
            
            It remains to rule out $\SSV(H_i)=\lit_i$. If $\lit_i$ won $H_i$, then its winning critical edge would have to beat the group $\group7$ edge $\loser\to\clause_k$. Since $\guard_i\notin V(H_i)$, the group $\group2$ edge $\lit_i\to\guard_i$ is absent. The only remaining edges out of $\lit_i$ that can beat group $\group7$ are edges $\lit_i\to\clause_k$ in group $\group3$. But for every $\clause_k\in K'$, we just proved that $\SSV(H_i\setminus\{\clause_k\})=\loser,$ so the critical edge produced by deleting $\clause_k$ is $\loser\to\clause_k$, not $\lit_i\to\clause_k$. Therefore $\lit_i$ cannot win $H_i$.
            
            Thus neither a guard nor $\lit_i$ can win $H_i$. Hence the critical edge into $\guard_i$ is not the positive edge $\lit_i\to\guard_i$; all remaining possible edges into $\guard_i$ have weight $<\group7$.
        \end{enumerate}
        We have now examined all possible removals from $H$. Removing any clause $\clause_k$ produces a critical edge $\loser\to \clause_k$ in group $\group7$, while every critical edge produced by removing a literal, a guard, or $\loser$ has weight strictly smaller than $\group7$. Therefore, the largest critical edge of $H$ is of the form $\loser\to \clause_k$, and consequently $\SSV(H)=\loser$. This concludes the proof of $P(t)$.

        \bigskip
        Applying $P(|K'|)$ to $J=I$ and $K'=K$, we get $\SSV(G_{J,K})=\loser$. We can apply \Cref{lem:adding_lits_C_D_wins} to add the literals in $B$, to get $\SSV(G)=\loser$.
    \end{proof}

\subsubsection{\texorpdfstring{$\winner$}{C} wins Admissible Graphs without clauses }

\CwinsNoClause*
\begin{proof}

Using \Cref{obs:decomposition}, we write $S=S_I\cup B$, where $S_I$ is the paired core and $B$ is the set of extra single literals.  Let $G_I$ be the induced subtournament on $V(G_I)=\{\winner,\loser\}\cup S_I$. We will show that $\SSV(G_I)=\winner$. By \Cref{lem:adding_lits_C_D_wins}, it follows that $\SSV(G)=\winner$.

We do this by proving that the maximum critical edge of $G_I$ is $\winner\to\loser$, and comparing it to the critical edges produced by removing every other node of $G_I$.

        \begin{enumerate}
            \item \textbf{(Removing $\loser$.)} $\SSV(G_I\setminus \{\loser\}) = \winner$ by \Cref{lem:c_wins_without_D_or_L}, producing a critical edge in group $\group12$.
            \item \textbf{(Removing $\winner$.)} $\SSV(G_I\setminus \winner) = \loser$ by \Cref{lem:d_wins_without_C_or_L}, producing a critical edge in group $-\group12$.
            \item \textbf{(Removing $\lit_i$.)} Since in $G_I$ all the literals are paired with their guards, we get  $\first(G_I\setminus\{\lit_i\}) = i$, implying $ \guard_i \rightarrow \lit_i$ is the critical edge produced by removing $\lit_i$, in group $-\group2$.
            \item \textbf{(Removing $\guard_i$.)} It suffices to show that the critical edge produced by removing $\guard_i$ has weight $< \group12$. That is, the edge $w \rightarrow \guard_i$ has weight less than $\group12,$ where $w=\SSV(G_I\setminus \{\guard_i\})$ Suppose, for contradiction, that the critical edge has weight $>\group12$. The only two candidate sets for the critical edges are thus $\guard_{j<i} \rightarrow \guard_i$ ($\group10$) and $\lit_i \rightarrow \guard_i$ ($\group2$). We show that $w\neq\guard_j$ and $w\neq\lit_i.$ 
            
            \begin{enumerate}[
                align=left,
                itemindent=0pt,
                labelindent=0pt,
                labelwidth=\casewdii,
                labelsep=0.4em,
                leftmargin=\dimexpr\casewdii+0.5em\relax
                ]
                \item[($w=\guard_{j<i}$)] By \Cref{lem:guards}, $\guard_j$ can only win if its corresponding literal is removed, and by assumption, $\first(G_I)=0$, hence $\first(G_I\setminus\{\guard_i\})=0.$
                \item[($w=\lit_i$)]  Since the winner is $w=\lit_i,$ the largest critical edge of $G_I\setminus \{\guard_i\}$ must be of the form $\lit_i \rightarrow u$, for some $u\in G_I\setminus\{\guard_i\}$. We proceed by case analysis:
                \begin{enumerate}[
                    align=left,
                    itemindent=0pt,
                    labelindent=0pt,
                    labelwidth=\casewdiii,
                    labelsep=0.5em,
                    leftmargin=\dimexpr\casewdiii+0.5em\relax
                ]
                    \item[($u=\lit_{j\neq i}$)] By \Cref{lem:guards}, the winner would be  in $\guard_j$, since $\first(G\setminus\{\guard_i,\lit_j\})=j.$
                    \item[($u=\guard_{j\neq i}$)]  Then the critical edge of $G\setminus\{\guard_i\}$ would be in group $-\group5.$  
                    \item[($u=\winner$)] By \Cref{lem:d_wins_without_C_or_L}, $\loser$ wins. Hence, a critical edge of $G\setminus\{\guard_i\}$ is $\loser\to\winner$ that is in group $-\group12 > -\group5.$  
                    \item[($u=\loser$)] By \Cref{lem:c_wins_without_D_or_L}, $\winner$ wins. Hence, a critical edge of $G\setminus\{\guard_i\}$ is $\winner\to\loser$ which is in group $\group12>-\group5.$
                \end{enumerate}
                Thus, $\lit_i$ cannot win in $G_I\setminus \{\guard_i\}$ since there exists a critical edge ($\winner\to\loser$) in $G_I\setminus\guard_i$ with a larger weight than the largest critical edge of the form $\lit_i\to u.$
            \end{enumerate}
            Thus, the critical edge produced by removing $\guard_i$ has weight $<0<\group12$.
        \end{enumerate}
        Thus, the highest critical edge in $G_I$ is $\winner \rightarrow \loser$, and $\winner$ wins.

    \end{proof}

\subsubsection{ \texorpdfstring{$\loser$}{D} wins Admissible Graphs with \texorpdfstring{$\winner$}{C} and a single UNSAT clause}

\DBaseCase*
    \begin{proof}
        We will show that the largest critical edge is $\loser\rightarrow\winner$, by comparing it to all other critical edges produced by removing any node in $V(G)\setminus\{\loser\}.$ We begin by removing all unpaired literals $\lit_i$: literals for whom no guard $\guard_i$ is present. A simple argument uses \Cref{lem:adding_lits_C_D_wins} to add these back at the end.
            \begin{enumerate}
            \item \textbf{(Removing $\winner$.)} $\SSV(G\setminus \winner) = \loser$ by \Cref{lem:D-wins-with-clauses-no-C-general}, producing a critical edge in group $-\group12$.
            \item \textbf{(Removing $\loser$.)} $\SSV(G\setminus \{\loser\}) = \clause_k$ by \Cref{lem:clauses} since $\clause_k$ is unsatisfied and $\loser$ is not in the graph. Hence, the critical edge produced is in group $-\group7$.
            \item \textbf{(Removing $\clause_k$.)} 
            $\SSV(G\setminus\{\clause_k\})=\winner$ by \Cref{lem:c-wins-without-clauses}, hence the critical edge produced is $\winner\to\clause_k$ with weight in $-\group7.$
            \item \textbf{(Removing $\lit_i$.)} $\first(G\setminus\{\lit_i\}) = i \implies \SSV(G\setminus\{\lit_i\}) = \guard_i$, by \Cref{lem:guards}. Hence, $\guard_i \rightarrow \lit_i$ is the critical edge produced, which is in group $-\group2$.
            \item \textbf{(Removing $\guard_i$.)} It suffices to show that the critical edge produced by removing $\guard_i$ has weight $<- \group12$. That is, the edge $\SSV(G\setminus \{\guard_i\}) \rightarrow \guard_i$ has weight less than $-\group12.$ Suppose, for contradiction, that the critical edge has weight $>-\group12$. The only two candidate sets for the critical edges are thus $\guard_{j<i} \rightarrow \guard_i$ ($\group10$) and $\lit_i \rightarrow \guard_i$ ($\group2$).
            
            \begin{enumerate}[
                align=left,
                itemindent=0pt,
                labelindent=0pt,
                labelwidth=\casewdiv,
                labelsep=0.4em,
                leftmargin=\dimexpr\casewdiv+0.5em\relax
                ]
                \item[($\guard_{j<i}$)] By \Cref{lem:guards}, $\guard_j$ can only win if its corresponding literal is removed, and by assumption, $\first(G)=0$.
                \item[($\lit_i$)] Assume, for the sake of contradiction, that $SSV(G\setminus \{\guard_i\}) = \lit_i$. Then the largest critical edge of $G\setminus \{\guard_i\}$ must be of the form $\lit_i \rightarrow u$, for some $u\in V(G\setminus\guard_i)$. We proceed by case analysis:
                \begin{enumerate}[
                    align=left,
                    itemindent=0pt,
                    labelindent=0pt,
                    labelwidth=\casewdiii,
                    labelsep=0.5em,
                    leftmargin=\dimexpr\casewdiii+0.5em\relax
                ]
                    \item[($u=\lit_{j\neq i}$)] By \Cref{lem:guards}, the winner would be  in $\guard_j$.
                    \item[($u=\guard_{j\neq i}$)]  Then the critical edge of $G\setminus\guard_i$ would be in group $-\group5.$  
                    \item[($u=\winner$)] By \Cref{lem:D-wins-with-clauses-no-C-general}, $\loser$ wins. Hence, a critical edge of $G\setminus\guard_i$ is $\loser\to\winner$ that is in group $-\group12 > -\group5.$  
                    \item[($u=\loser$)] By \Cref{lem:clauses}, $\clause_k$ wins. Hence, a critical edge of $G\setminus\guard_i$ is $\clause_k\to\loser$, which is in group $-\group7<-\group12,$ which is smaller than the critical edge $\loser\to\winner$ in group $-\group12$.
                    \item[($u=\clause_k$)] by \Cref{lem:c-wins-without-clauses} $\winner$ wins. Hence, a critical edge of $G\setminus\guard_i$ is $\winner\to\clause_k$, which is in group $-\group7<-\group12,$ which is smaller than the critical edge $\loser\to\winner$ in group $-\group12$.
                \end{enumerate}
                Thus, $\lit_i$ cannot win in $G\setminus \{\guard_i\}$ since there exists a critical edge ($\loser\to\winner$) in $G\setminus\guard_i$ with a larger weight than any critical edge of the form $\lit_i\to u.$
            \end{enumerate}
            Thus, the critical edge produced by removing $\guard_i$ has weight $<-\group12$.
        \end{enumerate}
        As mentioned at the start, we use the simple argument implied by \Cref{lem:adding_lits_C_D_wins} to add back the unpaired literals and prove the claim.
        
        Thus, the highest critical edge in $G$ is $\loser \rightarrow \winner$, and $\loser$ wins.

    \end{proof}

\subsubsection{C wins SAT Admissible Graphs without D}

\CwinsNoD*

\begin{proof}
    We proceed by induction on $t := |V(G)|$.

    \paragraph{Base case} ($t = 1$).
    $V(G) = \{\winner\}$, so $\SSV(G) = C$.
    
    \paragraph{Inductive hypothesis.}
    Fix $t \ge 2$ and assume the claim holds for every admissible graph $G'$ with $\winner \in V(G')$, $\loser \notin V(G')$, $\firstU(G') = 0$, and $|V(G')| < t$. 
    
    \paragraph{Inductive step.}
    Let $G$ be such a graph with $|V(G)| = t$. Since $G$ is an admissible graph, $\first(G) = 0$. We consider all possible critical edges produced by removing each type of node.

    \begin{enumerate}
        
        \item \textbf{(Removing $\guard_i$.)}
        Since $\first(G) = 0$, the literal $\ell_i$ is present in $G$. Removing $\guard_i$ does not expose any other guard, hence $\first(G \setminus \{\guard_i\}) = 0$. Removing a guard does not affect clause satisfaction, so $\firstU(G \setminus \{\guard_i\}) = 0$. Thus $G \setminus \{\guard_i\}$ is a strictly smaller admissible graph with $\winner$ present, $\loser$ absent, and $\firstU = 0$. By the inductive hypothesis, $\SSV(G \setminus \{\guard_i\}) = \winner$. Therefore, $\winner \to \guard_i$ is a critical edge of $G$ with weight in group $-\group8$.
        
        \item \textbf{(Removing $\clause_k$.)}
        Removing a clause does not affect guards or literals, so $\first(G \setminus \{\clause_k\}) = 0$ and
        $\firstU(G \setminus \{\clause_k\}) = 0$. By the inductive hypothesis, $\SSV(G \setminus \{\clause_k\}) = \winner$. Therefore, $\winner \to \clause_k$ is a critical edge of $G$ with weight in group $-\group7$.
        
        \item \textbf{(Removing $\ell_i$ with $\guard_i \in V(G)$.)}
        We have $\first(G \setminus \{\ell_i\}) = i > 0$. By \Cref{lem:guards}, $\SSV(G \setminus \{\ell_i\}) = \guard_i$. The critical edge is $\guard_i \to \ell_i$ with weight in group $-\group2$. However, $\winner\to\guard_i$ is also a critical edge with a larger weight. 
        
        \item \textbf{(Removing a literal $\ell_i$ with $\guard_i \notin V(G)$.)}
        Since $\guard_i$ is absent, $\first(G \setminus \{\ell_i\}) = 0$, so $G \setminus \{\ell_i\}$ is an admissible graph with $\winner$ present and $\loser$ absent.
        
        \begin{itemize}
          \item \textbf{Case 1: $\firstU(G \setminus \{\ell_i\}) = 0$.}
          By the inductive hypothesis, $\SSV(G \setminus \{\ell_i\}) = \winner$. The critical edge is $\winner \to \ell_i$ with weight in group $\group1$.
        
          \item \textbf{Case 2: $\firstU(G \setminus \{\ell_i\}) = r > 0$.}
          The literal $\ell_i$ was the unique satisfying literal for clause $\clause_r$, and $\loser \notin V(G \setminus \{\ell_i\})$. By  \Cref{lem:clauses}, $\SSV(G \setminus \{\ell_i\}) = \clause_r$. Since $\ell_i$ satisfied $\clause_r$, the edge $\ell_i \to \clause_r$ lies in group $\group3$, so the critical edge $\clause_r \to \ell_i$ has weight in group $-\group3$. But the edge $\winner\to\clause_r$ is also critical with a larger weight. 
        \end{itemize}
        
        \item \textbf{(Removing $\winner$.)}
        Removing $\winner$ does not affect guards or literals, so $\first(G \setminus \{\winner\}) = 0$. By \Cref{lem:guards}, no guard wins $G \setminus \{\winner\}$. Since $\firstU(G \setminus \{\winner\}) = 0$ and $\loser \notin V(G \setminus \{\winner\})$,  \Cref{lem:clauses} implies no clause wins $G \setminus \{\winner\}$ either. Hence, $\SSV(G \setminus \{\winner\}) = \ell_j$ for some literal, and the critical edge $\ell_j \to \winner$ has weight in group $-\group1$. But, as we showed in the previous cases, there is always a critical edge from $\winner$ with a larger weight.   
    \end{enumerate}

By the case analysis above, every critical edge that is not from $\winner$ is dominated by a critical edge from $\winner$. Therefore, the maximum critical edge is from $\winner$, and hence $\SSV(G)=\winner$.

\end{proof}

\newpage
\section[Deferred Proofs of Section 5]{Deferred Proofs of \Cref{sec:stable_voting}}\label{apx:sv-extension}

\Reachability*

\begin{proof}[Proof of \Cref{claim:reachability}]
    The evaluation starts at $G(\Phi) = G_{\alpha_{<1}}$, which is a prefix graph. We show that whenever the rule is evaluated on a graph $G$ in $(\mathcal{P})\cup(\mathcal{A})\cup(\mathcal{F})$, every subtournament $G\setminus\{B\}$ on which it recurses also lies in $(\mathcal{P})\cup(\mathcal{A})\cup(\mathcal{F})$; the claim then follows by induction on the depth of the recursion.
    
    \begin{itemize}
        \item \textbf{$G = G_{\alpha_{<j}}$ is a prefix graph.} If $j = n+1$ then $G$ is a full-assignment graph, hence admissible, and the admissible case applies. So let $j \le n$. The proof of \Cref{thm:main_result} shows that the scan on $G$ is resolved within the group $\group1$ edges of variables $1,\ldots,j$: for $i<j$ the pairs $(\winner,\lit_i)$ and $(\loser,\lit_i)$ are not critical, since $\first(G\setminus\{\lit_i\}) = i$ and so $\SSV(G\setminus\{\lit_i\}) = \guard_i$ by \Cref{lem:guards}; and among the four pairs of variable $j$, one is critical, because $G\setminus\{\f_j\}$ and $G\setminus\{\t_j\}$ are the prefix graphs of the two extensions of $\alpha_{<j}$ and each is won by $\winner$ or $\loser$ by $P(j{+}1)$. Hence the evaluated subtournaments are $G\setminus\{\lit_i\} \in (\mathcal{F})$ for $i<j$, and $G\setminus\{\f_j\}, G\setminus\{\t_j\} \in (\mathcal{P})$.
        
        \item \textbf{$G$ is admissible.} Let $B \in V(G)$ be arbitrary; we show $G\setminus\{B\}\in(\mathcal{F})\cup(\mathcal{A})$ regardless of whether the pair is scanned. Deleting a candidate cannot make both literals of a variable present, so $G\setminus\{B\}$ contains at most one literal per variable. Since $\first(G)=0$, every present guard has its literal present. If $B = \lit_i$ is a literal with $\guard_i \in V(G)$, then deleting $B$ uncovers exactly the guard $\guard_i$, so $\first(G\setminus\{B\}) = i \neq 0$ and $G\setminus\{B\}\in(\mathcal{F})$. For every other $B$ (a designated candidate, a clause, a guard, or a literal whose guard is absent), every remaining guard keeps its literal, so $\first(G\setminus\{B\}) = 0$ and $G\setminus\{B\}$ is admissible.
    
        \item \textbf{Graphs with $\first(G) = \is \neq 0$.} By \Cref{lem:guards}, $\SSV(G) = \guard_{\is}$, so the maximum critical edge of $G$ has the form $\guard_{\is} \to u$. For every $B \neq \guard_{\is}$, the guard $\guard_{\is}$ remains present in $G\setminus\{B\}$ with both of its literals absent, so $\first(G\setminus\{B\}) \in \{1,\ldots,\is\}$, and $G\setminus\{B\}\in(\mathcal{F})$. It remains to consider $B = \guard_{\is}$, and we distinguish two cases.
        \begin{itemize}
            \item If $G$ contains at most one literal per variable: deleting $\guard_{\is}$ cannot uncover a guard with a smaller index (those were covered in $G$ and keep their literals), so $\first(G\setminus\{\guard_{\is}\})$ is either $0$, in which case $G\setminus\{\guard_{\is}\}$ is admissible, or some $j > \is$, in which case $G\setminus\{\guard_{\is}\}\in(\mathcal{F})$. 
            \item If $G$ contains both literals $\t_m, \f_m$ of some variable $m$ (note $m \neq \is$, as $\t_{\is},\f_{\is}\notin V(G)$), we show that the pair $(Z, \guard_{\is})$ is never scanned, for any $Z$, so $G\setminus\{\guard_{\is}\}$ never arises. Deleting $\t_m$ leaves $\f_m$ present, so no guard becomes uncovered and $\first(G\setminus\{\t_m\}) = \is$; by \Cref{lem:guards}, $\SSV(G\setminus\{\t_m\}) = \guard_{\is}$, and the edge $\guard_{\is}\to\t_m$, which lies in group $\group5$, is critical. Hence the maximum critical edge of $G$ has weight at least that of a group $\group5$ edge. On the other hand, since $\t_{\is},\f_{\is}\notin V(G)$, the only positive edges entering $\guard_{\is}$ come from guards $\guard_j$ with $j<\is$, in group $\group{10}$. Since $\group5 > \group{10}$, every pair $(Z, \guard_{\is})$ has weight strictly below the maximum critical edge, so the scan stops before reaching it.
        \end{itemize}
    \end{itemize}
\end{proof}

\winnersAdmissible*
\begin{proof}[Proof of \Cref{claim:admissible-winners}]
    As in \Cref{obs:decomposition}, write $B := \{\lit_j \in V(G) : \guard_j \notin V(G)\}$ for the single literals whose guards are absent, and let the \emph{core} of $G$ be the induced subtournament on $V(G)\setminus B$. The core is admissible, contains no literal of any variable $j$ with $\lit_j \in B$, and consists of the designated candidates $\{\winner, \loser\}$ present in $G$, the paired guards and literals $\{\guard_i, \lit_i : i \in I\}$, and the clause candidates of $G$. By \Cref{lem:adding_lits_C_D_wins}, if the winner of the core is $\winner$ or $\loser$, then it is also the winner of $G$.
    
    \begin{enumerate}[label=(\alph*)]
        \item Suppose $\winner, \loser \in V(G)$. The core then has exactly the form $G(I,K)$ appearing in the induction statements established in the proofs of \Cref{lem:sat-C-wins,lem:unsat-D-wins}. If every clause of the core is satisfied by a present (paired) literal, the statement in the proof of \Cref{lem:sat-C-wins} gives $\SSV(\mathrm{core}) = \winner$ (when no clause is present, this is \Cref{lem:c-wins-without-clauses}); if some clause of the core is unsatisfied, the statement in the proof of \Cref{lem:unsat-D-wins} gives $\SSV(\mathrm{core}) = \loser$. In both cases \Cref{lem:adding_lits_C_D_wins} lifts the winner to $G$, so $A \in \{\winner,\loser\}$.

        \item Suppose $A = \winner$. If $\loser \in V(G)$: by the dichotomy in (a), $A = \winner$ forces every clause of the core to be satisfied by a paired literal; since paired literals are present in $G$ and $G$ has the same clauses as its core, $\firstU(G) = 0$.
        
        If $\loser \notin V(G)$: by \Cref{lem:clauses}, the winner is a clause candidate if and only if $\firstU(G) \neq 0$; since $A = \winner$, $\firstU(G) = 0$.

        \item  Suppose $A = \clause_{\ks}$. First, we show $\loser \notin V(G)$. If $\loser \in V(G)$ and $\winner \notin V(G)$, then \Cref{lem:D-wins-with-clauses-no-C-general} gives $A = \loser$, a contradiction; and if $\winner, \loser \in V(G)$, then (a) gives $A \in \{\winner,\loser\}$, again a contradiction. Now that $\loser \notin V(G)$, \Cref{lem:clauses} applies and gives $\firstU(G) = \ks$.

        \item  Suppose $A$ is a literal. As in (c), $\loser \in V(G)$ leads to $A \in \{\winner,\loser\}$ or $A = \loser$, a contradiction; so $\loser \notin V(G)$. If $\winner \in V(G)$: were $\firstU(G) = 0$, \Cref{lem:C-wins-SAT-no-D} would give $A = \winner$; were $\firstU(G) \neq 0$, \Cref{lem:clauses} would give a clause winner; either way a contradiction, so $\winner \notin V(G)$. Finally, by \Cref{lem:clauses} (with $\loser \notin V(G)$), the winner is a clause candidate if and only if $\firstU(G) \neq 0$; since $A$ is a literal, $\firstU(G) = 0$.
    \end{enumerate}
\end{proof}

\newpage
\begin{sidewaysfigure}
    \centering
    \resizebox{\linewidth}{!}{%
    \begin{tikzpicture}[
        node distance=0.5cm and 0.2cm,
        every node/.style={align=center, font=\sffamily\scriptsize},
        c_node/.style={fill=orange!30, rounded corners, inner sep=3pt},
        d_node/.style={fill=green!20, rounded corners, inner sep=3pt},
        base_node/.style={inner sep=3pt}
    ]

    \node[fill=yellow!40, rounded corners, font=\sffamily\normalsize\bfseries, inner sep=5pt] (title) at (3, 4.5) {Lemma dependencies};

    \node[base_node] (thm41) {\Cref{thm:main_result} \\ (TQBF)};
    \node[base_node, right=of thm41] (lem43) {\Cref{lem:sat-C-wins} \\ (SAT C)};
    \node[base_node, right=of lem43] (lem44) {\Cref{lem:unsat-D-wins} \\ (UNSAT D)};
    \node[base_node, right=of lem44] (clm410) {\Cref{lem:C-wins-SAT-no-D} \\ (SAT admissible \\ C wins)};
    \node[base_node, right=of clm410] (clm48) {\Cref{lem:D-wins-with-C-and-one-unsat-clause} \\ (Admissible with \\ C, 1 clause D)};
    \node[base_node, right=of clm48] (clm49) {\Cref{lem:c-wins-without-clauses} \\ (Admissible \\ without $L_k$'s C)};
    \node[base_node, right=of clm49] (clm47) {\Cref{lem:D-wins-with-clauses-no-C-general} \\ (Admissible \\ without C D)};
    \node[base_node, right=of clm47] (clmC3) {\Cref{lem:c_wins_without_D_or_L} \\ (Admissible \\ without D,Lk C)};
    \node[base_node, right=of clmC3] (clmC2) {\Cref{lem:d_wins_without_C_or_L} \\ (Admissible \\ without C,Lk D)};
    \node[base_node, right=of clmC2] (clm46) {\Cref{lem:adding_lits_C_D_wins} \\ (+lits - C,D)};
    \node[base_node, right=of clm46] (lem45) {\Cref{lem:clauses} \\ (Clause Lemma)};
    \node[base_node, right=of lem45] (lem42) {\Cref{lem:guards} \\ (Guard Lemma)};

    \foreach \n in {thm41, lem43, lem44, clm410, clm48, clm49, clm47, clmC3, clmC2, clm46, lem45, lem42} {
        \draw[thick] (\n.south west) -- (\n.south east);
    }

    \begin{scope}[>={Stealth[length=5pt, width=3pt]}, thick, opacity=0.8]

        \draw[->] (thm41.north) to[bend left=20] (lem43.north);
        \draw[->] (thm41.north) to[bend left=30] (lem44.north);

        \draw[->, orange!80!black] (lem43.north) to[bend left=15] (lem44.north);
        \draw[->, orange!80!black] (lem43.north) to[bend left=30] (clm410.north);
        \draw[->, orange!80!black] (lem43.north) to[bend left=45] (clm49.north);
        \draw[->, orange!80!black] (lem43.north) to[bend left=50] (clm47.north);

        \draw[->, green!50!black] (lem44.north) to[bend left=30] (clm48.north);
        \draw[->, green!50!black] (lem44.north) to[bend left=40] (clm47.north);
        \draw[->, green!50!black] (lem44.north) to[bend left=55] (clm46.north); %
        \draw[->, green!50!black] (lem44.north) to[bend left=60] (lem45.north);

        \draw[->, purple!80!black] (clm410.north) to[bend left=50] (lem45.north);

        \draw[->, cyan!80!black] (clm48.north) to[bend left=20] (clm49.north);
        \draw[->, cyan!80!black] (clm48.north) to[bend left=30] (clm47.north);
        \draw[->, cyan!80!black] (clm48.north) to[bend left=45] (lem45.north);

        \draw[->, red!80!black] (clm49.north) to[bend left=30] (clmC3.north);
        \draw[->, red!80!black] (clm49.north) to[bend left=40] (clmC2.north);
        \draw[->, red!80!black] (clm49.north) to[bend left=45] (clm46.north); %

        \draw[->, blue!80!black] (clm47.north) to[bend left=30] (clmC2.north);
        \draw[->, blue!80!black] (clm47.north) to[bend left=40] (clm46.north); %
        \draw[->, blue!80!black] (clm47.north) to[bend left=45] (lem45.north);

        \draw[->, brown!90!black] (clmC3.north) to[bend left=20] (clmC2.north);
        \draw[->, brown!90!black] (clmC3.north) to[bend left=30] (clm46.north); %

        \draw[->, teal!90!black] (clmC2.north) to[bend left=20] (clm46.north); %
        
    \end{scope}

    \begin{scope}[>={Stealth[length=5pt, width=3pt]}, thick, opacity=0.7]
        \draw[->] (thm41.south) to[bend right=70] (lem42.south);
        \draw[->, orange!80!black] (lem43.south) to[bend right=65] (lem42.south);
        \draw[->, green!50!black] (lem44.south) to[bend right=60] (lem42.south);
        \draw[->, purple!80!black] (clm410.south) to[bend right=55] (lem42.south);
        \draw[->, cyan!80!black] (clm48.south) to[bend right=50] (lem42.south);
        \draw[->, red!80!black] (clm49.south) to[bend right=45] (lem42.south);
        \draw[->, blue!80!black] (clm47.south) to[bend right=40] (lem42.south);
        \draw[->, brown!90!black] (clmC3.south) to[bend right=35] (lem42.south); %
        \draw[->, teal!90!black] (clmC2.south) to[bend right=30] (lem42.south);
        \draw[->, magenta!90!black] (clm46.south) to[bend right=25] (lem42.south); %
        \draw[->, violet!90!black] (lem45.south) to[bend right=20] (lem42.south);
    \end{scope}

    \end{tikzpicture}%
    }
    \caption{Lemma dependencies dictating the recursive evaluation of the graph}
    \label{fig:lemma_dependencies}
\end{sidewaysfigure}

\end{document}